\documentclass[letterpaper, 11pt]{article}

\usepackage{amsmath, amssymb, amsfonts, amsthm}
\usepackage{bm}
\usepackage{mathtools}
\usepackage{mathrsfs}
\usepackage{geometry}
\usepackage{setspace}
\usepackage[utf8x]{inputenc}
\usepackage{algpseudocode,algorithmicx,algorithm}
\usepackage[T1]{fontenc}
\usepackage{lmodern}
\usepackage{graphicx}
\usepackage{enumerate}
\usepackage[natural, dvipsnames, pdftex]{xcolor}
\usepackage{tikz}
\usepackage{verbatim}
\usepackage{hyperref}
\usepackage{cancel}
\usepackage[all]{xy}
\usepackage{mdframed}
\usepackage[capitalize]{cleveref}
\usepackage{todonotes}
\usepackage{algpseudocode}
\usepackage{comment}
\usepackage{booktabs}
\usepackage{tabularx}
\usepackage{array}
\usepackage{float}

\usepackage[numbers]{natbib}

\usepackage{enumitem}

\usepackage{indentfirst}
\theoremstyle{plain}
\newtheorem{thm}{Theorem}[section]
\newtheorem{lemma}[thm]{Lemma}
\newtheorem{prop}[thm]{Proposition}
\newtheorem{cor}[thm]{Corollary}

\theoremstyle{definition}
\newtheorem{defn}[thm]{Definition}

\theoremstyle{remark}
\newtheorem{rmk}[thm]{Remark}

\newcommand{\F}{\mathbb{F}}

\newcommand{\cC}{\mathcal{C}}

\newcommand{\supp}{\mathrm{supp}}

\newcommand{\wt}{\mathrm{wt}}
\newcommand{\rank}{\mathrm{rank}}

\begin{document}

\title{A Syndrome-Space Approach to Proximity Gaps and Correlated Agreement for Random Linear Codes and Random Reed--Solomon Codes
\thanks{
C. Yuan is with School of Computer Science, Shanghai Jiao Tong University. (Email: \href{chen_yuan@sjtu.edu.cn}{chen\_yuan@sjtu.edu.cn}) R. Zhu is with School of Computer Science, Shanghai Jiao Tong University. (Email: \href{sjtuzrq7777@sjtu.edu.cn}{sjtuzrq7777@sjtu.edu.cn})
}}

\author{Chen Yuan, Ruiqi Zhu}
\date{}
\maketitle

\begin{abstract}
Proximity gaps and correlated agreement have become central tools in the analysis of interactive oracle proofs of proximity (IOPPs) and code-based SNARKs. Informally, a proximity-gap statement says that for a structured set of words---such as a line, an affine space, or a curve---either all points are close to the code, or most are far from it. Such statements are essential in sampling-based proof systems, where a verifier queries only a few random locations on a structured object but must still obtain a global soundness guarantee. In Reed--Solomon-based proof systems, one would ideally like the proximity parameter to approach the information-theoretic limit $1-R$, since this is the largest possible radius for a rate-$R$ code and directly affects protocol efficiency. While recent work has substantially strengthened the picture for algebraic codes and linked proximity gaps to decoding-related structural properties, it remains unclear whether analogous results for random linear codes can be proved directly, rather than by first proving list-decoding-type properties. Random Reed--Solomon codes provide a further natural benchmark, as they retain the algebraic structure of Reed--Solomon codes while introducing randomness through the evaluation set.

In this work, we establish a direct approach to proximity gaps and correlated agreement for random linear codes in the random parity-check-matrix model, without relying on list decoding as the main engine of the proof. Our approach is based on a syndrome-space reformulation together with a witness-based reduction argument, and it yields strong results for affine lines, affine spaces, and polynomial curves. It is conceptually different from the existing decoding-driven route for random linear codes, and it also leads to sharper parameters, including the optimal-up-to-$\varepsilon$ large-alphabet radius bound $\rho<1-R-\varepsilon$ for $q=\Theta(n)$, as well as near-capacity bounds over constant alphabets with improved alphabet-size requirements.

We further apply the same syndrome-space reductions to random Reed--Solomon codes. Since Reed--Solomon parity-check matrices are highly structured, the final random parity-check union bound is replaced by a local-profile argument. This yields correlated agreement for random Reed--Solomon codes over affine spaces and polynomial curves up to radius $\rho\le 1-R-\varepsilon$, with field size $q\ge n\cdot 2^{O(\varepsilon^{-3})}$ for affine spaces and $q\ge n\cdot 2^{O_\ell(\varepsilon^{-3})}$ for degree-$\ell$ curves.
\end{abstract}

\section{Introduction}
Proximity gaps, introduced by~\cite{rothblum2013interactive}, capture a basic local-to-global phenomenon for codes and have become important in the analysis of interactive oracle proofs of proximity (IOPPs) and code-based SNARKs. In protocols such as FRI and DEEP-FRI~\cite{ben2018fast,ben2019deep}, and in more recent systems such as STIR and WHIR~\cite{arnon2024stir,arnon2025whir}, the verifier reasons about a structured family of words---for example, a line, an affine space, or a low-degree curve---while querying only a few random locations. For such sampling-based checks to be sound, one needs a code-theoretic statement showing that the following bad situation cannot happen: many sampled points are individually close to the code, yet there is no single structured family of codewords that explains these nearby points at once. This is exactly the content of proximity-gap and correlated-agreement theorems.

Informally, a proximity-gap theorem says that if too many points of a structured family are within distance $\rho n$ of a code $C$, then the whole family must lie inside a slightly larger neighborhood of $C$. A stronger conclusion, usually called correlated agreement, asserts that the nearby codewords can be chosen coherently: they themselves form a structured family on the code side that agrees with the original family outside a small exceptional set of coordinates. These statements are exactly what allow local sampling to be converted into global soundness, and they have therefore become a basic ingredient in modern analyses of IOPPs and SNARKs. This connection has become even more explicit in recent code-based proof systems and their analyses, where stronger forms of correlated agreement are used to simplify or sharpen soundness arguments~\cite{zeilberger2024khatam,habock2025note}.

Beyond their conceptual role in local testing, proximity-gap theorems are also motivated by concrete applications to modern proof systems. In Reed--Solomon-based IOPPs and their descendants, one would ideally like the proximity parameter to approach the information-theoretic limit $1-R$, since this is the largest possible radius for a rate-$R$ code and directly affects the efficiency of the resulting protocols~\cite{ben2016interactive,ben2023proximity,zeilberger2024basefold}. Prior to recent progress, the strongest general Reed--Solomon proximity-gap results beyond the unique-decoding regime were limited to the Johnson-radius range, and improving this barrier has been repeatedly highlighted as an important challenge in the literature ~\cite{ben2023proximity}. Recent work has further sharpened the Reed--Solomon-side picture, including refined positive and negative results for proximity gaps, direct studies of mutual correlated agreement, and investigations of the conjectural limits of up-to-capacity statements~\cite{BSCHKS25,habock2025note,crites2025reed}. Recent impossibility results~\cite{diamond2025distribution} further indicate that unrestricted up-to-capacity conjectures must be formulated with care: for MDS codes, such statements fail in certain regimes where the relevant parameters vanish with $n$. More recently,~\cite{crites2025reed} shows that such conjectures also cannot hold in regimes where small-list list decoding is impossible. From this perspective, it is natural to ask whether strong proximity gaps also hold for random code ensembles, and what proof argument is responsible for them.

Random linear codes and random Reed--Solomon codes are two natural benchmark ensembles. Random linear codes model generic linear codes. Random Reed--Solomon codes, on the other hand, keep the low-degree structure of Reed--Solomon codes but choose the evaluation set at random. For random linear codes, a long line of work shows that they nearly match fully random codes for list decoding, average-radius list decoding, and related notions~\cite{guruswami2010list,wootters2013list,rudra2018average,guruswami2021bounds}. For random Reed--Solomon codes, recent work shows that they achieve list-decoding capacity with linear-sized alphabets~\cite{alrabiah2025random}, and that random Reed--Solomon codes and random linear codes are locally equivalent for a broad class of local linear properties~\cite{levi2025random}. Proximity gaps, however, ask a different kind of question. They are not about how a code intersects one Hamming ball, but about how the code interacts with a whole structured family of nearby words, such as a line, an affine space, or a curve. Understanding this local geometry helps clarify whether proximity gaps come from special algebraic structure, or whether they also appear in random code ensembles. It also raises a methodological question: can one prove such results directly, without first proving a list-decoding-type statement?

A direct proof is not automatic. Most known approaches first prove an intermediate property related to decoding, and then use that property to derive a proximity gap. For example, recent work uses subspace designs, line- or curve-decodability, and local-property frameworks to prove proximity gaps and correlated agreement for several algebraic and random code families~\cite{guruswami2016explicit,guruswami2008explicit,kopparty2015list,levi2025random,jeronimo2025probabilistic}. These results show that proximity gaps are closely related to decoding, but they do not show that list decoding is necessary for proving them. They have also motivated recent work on stronger generator-based or mutual-correlated-agreement formulations, as well as follow-up results for explicit code families such as folded Reed--Solomon codes~\cite{bordage2025all,jeronimo2026optimal}.

For random linear codes, a natural first attempt is therefore to use their strong list-decoding properties. This is the route taken, either explicitly or implicitly, by several available approaches ~\cite{gao2025list,levi2025random,goyal2025structure,brakensiek2025random,jeronimo2025probabilistic}.  ~\cite{gao2025list} establishes the relationship between list decoding and proximity gaps. In the Reed--Solomon setting, the first strong proximity-gap results beyond unique decoding also used substantial algebraic decoding ingredients~\cite{ben2023proximity}. More recently, Goyal and Guruswami~\cite{goyal2025optimal} gave a broad framework connecting proximity gaps, correlated agreement, subspace-design codes, curve-decoding-type properties, and local properties. These works leave open a more basic question: does one actually need the list-decoding theory of a code in order to prove a proximity gap for that code?

For random linear codes, we show that list decoding is not needed for the main proof. We work directly with a random parity-check matrix. The key idea is to translate proximity into a statement about syndromes, encode a bad line, space, or curve by a low-weight witness matrix, and then use a rank-reduction argument to show that this witness matrix must have high rank. A random parity-check matrix is very unlikely to map many independent low-weight vectors to a prescribed low-dimensional syndrome structure, and this gives the desired proximity-gap and correlated-agreement statements.

The same first part of the proof also applies to random Reed--Solomon codes. A bad affine space or curve again gives a high-rank low-weight witness matrix. The only difference is the last step. For random linear codes, a direct union bound over the random parity-check matrix shows that such matrices cannot occur with high probability. For random Reed--Solomon codes, the parity-check matrix is structured, so this union bound is not available. Instead, the high-rank low-weight witness produced by the rank-reduction argument would make the code contain a low-dimensional local profile, and the local-profile theorem of~\cite{levi2025random} shows that this cannot happen with high probability.

Beyond this conceptual separation from list decoding, our approach also yields stronger parameters. In the constant-alphabet setting, we prove proximity-gap results under the natural entropy condition $H_q(\rho) < 1-R$ once $q$ is a sufficiently large constant. In the large-alphabet setting with $q=\Theta(n)$, we obtain proximity gaps and correlated agreement up to every radius $\rho < 1-R-\varepsilon$, which is optimal up to the $\varepsilon$-loss.

While the line case already contains the core ideas, our framework also yields affine-space and curve-based statements. These results improve the currently available random-linear-code consequences of~\cite{goyal2025optimal} both in the achievable radius and in the required alphabet size. We hope that the syndrome-space viewpoint developed here will be useful more broadly for studying local geometric properties of codes beyond the regimes currently accessible through list decoding.

\subsection{Related Work}

The study of proximity gaps originated in earlier work on proximity testing and interactive proofs for general linear codes, where slack versions of the property were already used for soundness amplification and local testing~\cite{rothblum2013interactive,ames2023ligero,ben2018worst}. In the Reed--Solomon setting, these ideas were substantially strengthened over the past several years. A major milestone was the work of~\cite{ben2023proximity}, which established strong proximity-gap theorems beyond the unique-decoding regime, up to the Johnson radius, by importing powerful ingredients from algebraic unique decoding, list decoding, and low-degree testing; their proof also builds on an extension of the Arora--Sudan low-degree-testing paradigm to the function-field setting~\cite{arora1997improved}. Very recent work has further refined our understanding of Reed--Solomon proximity gaps, including both positive and negative results on proximity gaps, direct studies of mutual correlated agreement, and investigations of conjectural limits~\cite{BSCHKS25,habock2025note,crites2025reed}. Meanwhile, systems such as STIR and WHIR use correlated agreement, and stronger variants of it, as an explicit part of their soundness analysis of practical proof systems~\cite{arnon2024stir,arnon2025whir}.

These advances made the Johnson-radius range the benchmark for strong Reed--Solomon proximity-gap results, and sharpened the broader question of whether one can approach the information-theoretic limit $1-R$ in more general settings. A complementary line of work has developed a broader structural viewpoint around subspace-design codes, line- and curve-decodability, and local coordinate-wise linear properties~\cite{guruswami2016explicit,levi2025random,brakensiek2025random,jeronimo2025probabilistic}. This line of work treats proximity gaps and correlated agreement as consequences of local structural properties related to decoding. It also gives a way to compare random, algebraic, and explicit code families within a common framework. These developments have also motivated stronger formulations such as generator-based and mutual correlated agreement, as well as follow-up results for explicit code families such as folded Reed--Solomon codes~\cite{bordage2025all,jeronimo2026optimal}.

On a different front, random linear codes have been intensively studied from the viewpoint of global decoding. A long line of work shows that they essentially achieve the best known tradeoffs for list decoding, average-radius list decoding, and related notions~\cite{guruswami2010list,wootters2013list,rudra2018average,guruswami2021bounds}. These results give a detailed picture of the global geometry of random linear codes, but they do not by themselves address the structured local questions that arise in proximity-gap theorems.

Random Reed--Solomon codes have recently emerged as a parallel probabilistic benchmark for algebraic codes. Early work on generic or randomly punctured Reed--Solomon codes established capacity-type list-decoding guarantees over large or polynomial-size fields, and subsequent work showed that random Reed--Solomon codes achieve list-decoding capacity with linear-sized alphabets~\cite{alrabiah2025random,brakensiek2023generic}. More recently,~\cite{levi2025random} proved that random Reed--Solomon codes and random linear codes are locally equivalent for broad classes of local coordinate-wise linear properties. This provides a powerful way to transfer local-property threshold statements between the two ensembles. However, proximity gaps and correlated agreement are not themselves local properties: they concern whole structured families of words, such as lines, affine spaces, and curves.

The works closest to our random-Reed--Solomon results are~\cite{goyal2025optimal} and~\cite{levi2025random}.~\cite{goyal2025optimal} developed a broad framework connecting proximity gaps, correlated agreement, subspace-design codes, curve-decoding-type properties, and local properties. Their framework gives consequences for random linear codes and random Reed--Solomon codes by routing the problem through decoding and local-property-type structure.~\cite{levi2025random} proved the local equivalence of random linear and random Reed--Solomon codes and supplied the local-profile theorem that we use in the final probabilistic step of the random-RS argument.

Our proof starts from the bad structured family itself. For random linear codes, we work directly in the random parity-check-matrix model and prove the proximity-gap argument through syndrome-space witnesses and rank reduction. The resulting high-rank low-weight witnesses are then shown to be unlikely by a direct union bound over the random parity-check matrix. For random Reed--Solomon codes, the same witness reduction remains valid, but the last step is different: such a witness gives rise to a low-dimensional local profile, and the local-profile theorem of~\cite{levi2025random} shows that this cannot happen with high probability. This also leads to sharper parameters. In the large-alphabet random-linear-code setting with $q=\Theta(n)$, the route of~\cite{goyal2025optimal} reaches $ \rho < 1-R-2\varepsilon $, whereas our direct argument reaches $ \rho < 1-R-\varepsilon $. For random Reed--Solomon codes, the consequences obtained from~\cite{goyal2025optimal} give correlated agreement up to $ \rho \le 1-R-2\varepsilon $ with alphabet-size dependence $q\ge n\cdot 2^{O(\varepsilon^{-7})}$ for affine spaces and $q\ge n\cdot 2^{O_\ell(\varepsilon^{-(2\ell+5)})}$ for degree-$\ell$ curves; our random-RS theorem reaches $ \rho\le 1-R-\varepsilon $ with $q\ge n\cdot 2^{O_m(\varepsilon^{-3})}$ and $q\ge n\cdot 2^{O_\ell(\varepsilon^{-3})}$, respectively.

\subsection{Overview of Our Approach and Contributions}
Our contributions have two parts. The first is a direct syndrome-space proof of proximity gaps and correlated agreement for random linear codes in the random parity-check-matrix model. The second is an application of the same deterministic syndrome-space reductions to random Reed--Solomon codes, where the final probabilistic step is supplied by the local-profile theorem of~\cite{levi2025random}.

\paragraph{Random linear codes.}
Our goal is to establish proximity-gap and correlated-agreement theorems for random linear codes by a route that is largely independent of their list-decoding theory. The central novelty is a direct analysis in the random parity-check-matrix model. The proof has four main ingredients: a syndrome-space reformulation of proximity, a witness-matrix encoding of bad structured families, a deterministic rank-reduction argument, and a probabilistic counting-and-union-bound argument showing that the resulting witness configurations do not exist with high probability.

The starting point is the observation that for a linear code $\cC$ with parity-check matrix $H\in \F_q^{r\times n}$, a vector $\mathbf{y}$ is within Hamming distance $E$ of $\cC$ if and only if its syndrome $H\mathbf{y}$ lies in the syndrome ball
\[
H_E := \{H\mathbf{x}\in \F_q^r:\wt(\mathbf{x})\leq E\}.
\]
Thus, it suffices to consider the problem that 
$H\mathbf{x}_i=\mathbf{s}_0+\alpha_i\mathbf{s}_1, i=1,\ldots,K$ where $\wt(\mathbf{x}_i)\leq E$ and $\alpha_i\in \F_q$. We want to show that if $K$ is large enough, it is very unlikely that there exist $K$ low weight vectors $\mathbf{x}_1,\ldots,\mathbf{x}_K$ that were all mapped to some points in a given line $\mathbf{s}_0+\alpha\mathbf{s}_1$ by a uniformly random parity-check matrix. 
%This allows us to replace a geometric question in the ambient word space by an intersection problem in syndrome space. In particular, if $U$ is an affine line or affine space in the word space, then its image under $H$ is a corresponding affine syndrome line or affine syndrome space $S$, and the set of points of $U$ that are $E$-close to $\cC$ is exactly the set of points of $S$ lying in $H_E$. Thus, a violation of proximity gap or correlated agreement is converted into the existence of a structured family in syndrome space containing too many points of $H_E$ without lying entirely in $H_{E^+}$.

To analyze the failure probability, we introduce the notion of the witness matrix $X:=[\mathbf{x}_1|\cdots|\mathbf{x}_K]$. We call $X$ a witness matrix for syndrome line $L=\{\mathbf{s}_0+\alpha \mathbf{s}_1: \alpha\in \F_q\}$ if $H\mathbf{x}_i\in L$ for every column vector $\mathbf{x}_i$ in $X$. In the meanwhile, we require $\wt(\mathbf{x}_i)\leq E$. We prove that if $K$ is large enough, we will find a submatrix $X_J$ of $X$ which is still a witness matrix of smaller rank for $L$. The key step is a rank-reduction lemma: if the witness matrix $X$ has rank $t\ge 3$, then one can find a large subset of its columns that still witnesses the same syndrome line but whose rank drops by one. In fact, the size of this subset is at least $\frac{K(d-E)}{d}$ where $d$ is the minimum distance of $\cC$. Repeating this reduction for $t-2$ times, we finally arrive at the case $\rank(X_J)=2$ with $|J|\geq K(\frac{d-E}{d})^{t-2}$. In this case, we use Lemma \ref{lem:gap-line-ball} to show that the size of subset $J$ should be upper bounded by $B_{E,E^+}=\frac{E^{+} + 1}{E^{+} - E + 1}$ where $E^{+}\geq E$ represents the slackness of the proximity gap. This means if $K$ is large enough, we can expect that the rank of $X$ should also be large at the beginning of our reduction. However if $\rank(X)=t$ is large enough, the probability is exponentially small that $HX=Y$ for a uniformly random parity-matrix $H$ and a given matrix $Y$. This allows us to take a union bound over all $K$-subsets $\alpha_1,\ldots,\alpha_K\in \F_q$, all lines $L$ and all $t$-subsets of low weight vectors to complete our probabilistic argument.

%at which point the surviving witness columns admit an affine form $\mathbf{x}_j = \tilde{\mathbf{a}} + \alpha_j \tilde{\mathbf{b}}$. 

%Once the  reduction is established, the random-coding analysis becomes a counting and union-bound argument over possible low-rank witnesses. 

%Combining this with the rank lower bound obtained from the deterministic step, and then taking a union bound over parameter choices and over syndrome lines or spaces, yields exponentially small upper bounds on the probability that a bad configuration exists. In this way, the proof separates cleanly into a deterministic geometry-of-witnesses part and a probabilistic counting part in the random-parity-check-matrix model.

This framework first yields a line proximity-gap theorem for random linear codes. In the constant-alphabet setting, for $E=\lfloor \rho n\rfloor$ and $E^+=E+\lceil \varepsilon n\rceil$, we prove that random linear codes satisfy the $(E,E^+,K/q)$-line proximity gap property with high probability, provided $q$ is a sufficiently large constant depending only on $\varepsilon$, and the relevant radius is governed by the natural entropy condition $H_q(\rho)<1-R$. By the standard line-to-space lifting in syndrome space, this also gives the corresponding space proximity gap statement.

We then extend the same perspective from lines to affine spaces and correlated agreement. The notion of correlated agreement is reformulated in syndrome space in terms of low row-weight witness matrices for affine syndrome spaces. The same rank-reduction argument generalizes to this higher-dimensional setting: a bad affine syndrome space with too many points in $H_E$ but no correlated-agreement explanation in radius $E^+$ must give rise to a high-rank witness matrix, while the probabilistic moment argument shows that such witnesses do not exist with high probability. This yields affine-space correlated-agreement theorems in the constant-alphabet radius setting, as well as large-alphabet radius consequences after invoking the relevant lifting input. We summarize our main results for random linear codes informally below.

\begin{thm}[Informal]
Fix a rate $R\in(0,1)$.

\begin{enumerate}
    \item Large-alphabet setting.
    For every $\varepsilon>0$, if $q=\Theta(n)$ and $\rho < 1-R-\varepsilon$, then a random linear code $\cC$ of rate $R$ over $\mathbb{F}_q$ satisfies proximity gap and correlated-agreement properties at radius $\rho n$ with high probability.

    \item Constant-alphabet setting.
    For every $\varepsilon>0$, if $\rho < 1-R-\varepsilon-\frac{\varepsilon}{\log_2(1/\varepsilon)}$ and $q \ge \left(\frac{2}{\varepsilon}\right)^{2/\varepsilon}$, then, with high probability, a random linear code $\cC$ of rate $R$ over $\F_q$ satisfies proximity-gap and correlated-agreement statements: the existence of many points that are $\rho n$-close to $\cC$ forces the entire structured family to be $(\rho+\varepsilon)n$-close to $\cC$.
\end{enumerate}
\end{thm}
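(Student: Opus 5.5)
We prove the line statement first and then lift it to affine spaces and curves. Let $H\in\F_q^{r\times n}$, $r=(1-R)n$, be a uniformly random parity-check matrix, and set $E=\lfloor\rho n\rfloor$ and $E^+=E+\lceil\varepsilon n\rceil$.

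\textbf{Reformulation.} A bad line $\mathbf{u}_0+\alpha\mathbf{u}_1$ is one with $K$ points in the radius-$E$ ball around $\cC$ but no correlated agreement at radius $E^+$. In syndrome space it gives a syndrome line $L=\{\mathbf{s}_0+\alpha\mathbf{s}_1\}$ and a witness matrix $X=[\mathbf{x}_1|\cdots|\mathbf{x}_K]$. Each column satisfies $\wt(\mathbf{x}_i)\le E$ and $H\mathbf{x}_i=\mathbf{s}_0+\alpha_i\mathbf{s}_1$, with the $\alpha_i$ distinct.

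\textbf{Deterministic step.} I show that $t=\rank(X)$ must be large.
\begin{enumerate}
\item Suppose $t\ge3$. Write the columns through an affine dependency among them. Two witnesses for the same $\alpha$ differ by a codeword, which has weight at least $d$. An averaging argument over coordinates then shows that a $\frac{d-E}{d}$ fraction of the columns lies in a sub-configuration of rank $t-1$ that still witnesses $L$. This is the rank-reduction lemma.
\item Iterate this $t-2$ times. We reach rank $2$ with $|J|\ge K\left(\frac{d-E}{d}\right)^{t-2}$.
\item At rank $2$ the columns have the affine form $\tilde{\mathbf a}+\alpha_j\tilde{\mathbf b}$. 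By Lemma~\ref{lem:gap-line-ball}, if $|J|>B_{E,E^+}=\frac{E^++1}{E^+-E+1}$, then the line has correlated agreement at radius $E^+$, which contradicts badness.
\end{enumerate}
Hence $K\left(\frac{d-E}{d}\right)^{t-2}\le B_{E,E^+}=O(1/\varepsilon)$. Random linear codes have $d\ge(\delta_{GV}-o(1))n$ with high probability, so we may condition on that. Then any $K=\mathrm{poly}(1/\varepsilon)\cdot c^{t_0}$ forces $t\ge t_0$, for a target $t_0$ chosen below.

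\textbf{Probabilistic step.} Fix $t_0$ linearly independent low-weight columns of $X$ and the line data. The event $HX=Y$, for a prescribed $Y$ whose columns lie in $L$, has probability $q^{-rt_0}$ up to the two free dimensions contributed by $\mathbf{s}_0,\mathbf{s}_1$. The union bound runs over:
\begin{itemize}
\item the vectors $\mathbf{s}_0,\mathbf{s}_1$, giving a factor $q^{2r}$;
\item the scalars $\alpha_i$, giving a factor $q^{t_0}$;
\item the $t_0$ vectors of weight at most $E$, giving a factor $\left(\binom{n}{E}(q-1)^E\right)^{t_0}$.
\end{itemize}
The total is at most
\[
q^{2r}\,q^{t_0}\,q^{t_0 n H_q(\rho)}\,q^{-r t_0}=q^{-t_0 n(1-R-H_q(\rho)) + 2r + t_0},
\]
which is exponentially small once $t_0\ge 3$ and $H_q(\rho)<1-R-\Omega(\varepsilon)$.

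\textbf{Parameter regimes.}
\begin{itemize}
\item Constant alphabet. Use $H_q(\rho)\le\rho+\frac{1}{\log_2 q}+\dots$. With $q\ge(2/\varepsilon)^{2/\varepsilon}$ we get $H_q(\rho)\le \rho+\varepsilon+\varepsilon/\log_2(1/\varepsilon)$, which is the stated radius.
\item Large alphabet, $q=\Theta(n)$. The entropy bound gives only $\rho+O(1/\log n)$. This suffices directly, but the factor $\left(\frac{d-E}{d}\right)^{t-2}$ needs $d-E\ge\varepsilon n$, which holds since $d\approx(1-R)n$. The alternative would be to choose $t_0$ constant with $\binom{n}{E}^{t_0}$ dominated.
\end{itemize}

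\textbf{Lifting to spaces and curves.} For affine spaces, the same argument runs with row-weight witnesses and rank reduction relative to an $m$-dimensional syndrome space. For curves, one adds a union bound over the curve coefficients, at the cost of constant factors $q^{(\ell+1)r}$ absorbed by taking $t_0>\ell+1$.

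\textbf{Main obstacle.} I expect the hard step to be the rank-reduction lemma: producing a large sub-witness of lower rank that still witnesses the same line. The delicate part is ensuring that the selected columns retain distinct $\alpha$'s and inherit a linear structure whose rank actually drops.
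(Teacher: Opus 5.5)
Your plan is the paper's own: syndrome reformulation, witness matrix, rank reduction by pigeonholing on the support of a weight-$\ge d$ codeword, collapse to rank $2$, the line--ball count, and a union bound over linearly independent low-weight columns. However, the quantitative step fails as written. Your exponent $\bigl(2(1-R)-t_0(1-R-H_q(\rho))\bigr)n+t_0$ is negative only if $t_0>2(1-R)/(1-R-H_q(\rho))$. Near capacity $1-R-H_q(\rho)=\Theta(\varepsilon)$, so $t_0\ge 3$ is far from enough. You need about $2(1-R)/\varepsilon$ independent witnesses, which is the paper's choice $\ell+3>2(1-R)/\varepsilon$ in Theorem~\ref{thm:explicit-two-radius}. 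Similarly for curves, $t_0>\ell+1$ does not absorb $q^{(\ell+1)r}$; you need $t_0\gtrsim(\ell+1)(1-R)/\varepsilon$. Each rank-reduction step keeps only a $\gamma=(d-E)/d$ fraction, and $\gamma^{-1}\le\delta/(\delta-\rho)$ can be of order $1/\varepsilon$. So the threshold becomes $K\approx(1+\rho/\varepsilon)\bigl(\delta/(\delta-\rho)\bigr)^{\Theta(1/\varepsilon)}$, which is up to $(1/\varepsilon)^{\Theta(1/\varepsilon)}$, not $\mathrm{poly}(1/\varepsilon)$. This is where $q\ge(2/\varepsilon)^{2/\varepsilon}$ really comes from: the $(E,E^+,K/q)$ statement is vacuous unless $q>K$. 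The entropy estimate needs only $q\ge(2/\varepsilon)^{1/\varepsilon}$, giving $|B_E|\le q^{(\rho+b_\varepsilon)n+o(n)}$ with $b_\varepsilon=\varepsilon/(1+\log_2(1/\varepsilon))$. The term $\varepsilon/\log_2(1/\varepsilon)$ in the radius pays for the minimum-distance event $d=\lfloor(1-R-\varepsilon/\log_2(1/\varepsilon))n\rfloor$. The remaining $\varepsilon$ gives both $1-R-\rho-b_\varepsilon>\varepsilon$ and $\delta-\rho>\varepsilon$. The bound you wrote, $H_q(\rho)\le\rho+\varepsilon+\varepsilon/\log_2(1/\varepsilon)$, yields only $H_q(\rho)<1-R$, with no margin.

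The large-alphabet item is the no-slack statement $E^+=E$ (output radius $\rho n$), whereas you fix $E^+=E+\lceil\varepsilon n\rceil$ throughout. With $E^+=E$ one has $B_{E,E}=E+1$, so the threshold is $K=\Theta(n)$. That (cf.\ Theorem~\ref{thm:no-slack-obstruction}), not the entropy bound, is why $q=\Theta(n)$ is required. Your framework covers this once you set $E^+=E$ and take $t_0>2(1-R)/\varepsilon$.

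Your union bound is genuinely tighter than the paper's. It ranges only over the parameters of the $t_0$ chosen columns, a factor $q^{t_0}$ (or $q^{mt_0}$ for $m$-spaces), rather than over all $K$-subsets. This would let the direct $m$-space argument work for $q=\Theta(n)$, $E^+=E$. There the factor $\binom{q^m}{K}$ of Theorem~\ref{thm:moment-bad-m-space} is too large for $m\ge2$, and the paper instead lifts via Lemma~\ref{lem:correlated-line-to-space} plus list decoding.

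Two smaller corrections:
\begin{itemize}
\item The codeword in rank reduction comes from the rank excess of $X$ over $HX$, not from two witnesses for the same $\alpha$ (the $\alpha_j$ are distinct). Writing $X=\mathbf a\mathbf u^\top+\mathbf b\mathbf v^\top+\sum_\ell\mathbf c_\ell\mathbf w_\ell^\top$ forces $H\mathbf c_\ell=\mathbf 0$. Keeping the columns that vanish at a suitable coordinate of $\mathrm{supp}(\mathbf c_t)$ then eliminates $\mathbf w_t$.
\item For correlated agreement at rank $2$ you need the support version, Lemma~\ref{lem:affine-space-ball} with $m=1$. Lemma~\ref{lem:gap-line-ball} as stated only gives $\tilde\ell\subseteq B_{E^+}$.
\end{itemize}
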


%In the linear-alphabet one-radius regime $q=\Theta(n)$, we obtain proximity gaps and correlated agreement up to every radius $\rho < 1-R-\varepsilon$, which is optimal up to the $\varepsilon$-loss and improves the $\rho<1-R-2\varepsilon$ radius obtained through the existing approach~\cite{goyal2025optimal}. In the constant-alphabet regime, our explicit theorem yields proximity gaps up to $\rho < 1 - R - \varepsilon - \frac{\varepsilon}{\log_2(1/\varepsilon)}$, and, for the alphabet size, we require only that $q \ge \left(\frac{2}{\varepsilon}\right)^{2/\varepsilon}$.

While the line case already contains the main conceptual ideas, the framework also yields affine-space results and correlated-agreement theorems for polynomial curves. We therefore expect the syndrome-space viewpoint developed here to be useful more broadly for local geometric properties beyond the settings currently accessible through list-decoding techniques.

\begin{rmk}[Comparison with~\cite{goyal2025optimal} for random linear codes]
For random linear codes, the framework of~\cite{goyal2025optimal} yields proximity gap and correlated agreement statements up to $\rho<1-R-2\varepsilon$ in the large-alphabet setting. Our argument improves this to $\rho<1-R-\varepsilon$, which is optimal up to the $\varepsilon$-loss.

In the constant-alphabet setting, our theorem applies for $\rho < 1-R-\varepsilon-\frac{\varepsilon}{\log_2(1/\varepsilon)}$ with alphabet size only $q \ge \left(\frac{2}{\varepsilon}\right)^{2/\varepsilon}$, improving over the $1-R-2\varepsilon$ radius benchmark and the $q\ge 2^{O(1/\varepsilon^3)}$ alphabet-size requirement obtained via~\cite{goyal2025optimal}.
\end{rmk}

\paragraph{Random Reed--Solomon codes.}
We also apply the same syndrome-space viewpoint to random Reed--Solomon codes. The deterministic part of the argument remains unchanged.  If an affine space or a degree-$\ell$ curve has many $E$-close points but admits no correlated agreement in radius $E^+$, then the rank-reduction theorems from Sections~\ref{section:correlated-agreement} and~\ref{section:curve-correlated-agreement}  force the existence of many linearly independent low-weight witness columns. Collecting these columns gives a high-rank low-weight witness matrix $X$.

The new difficulty is the final probabilistic step. For random linear codes, a direct union bound over the random parity-check matrix shows that such matrices cannot occur with high probability. For random RS codes, the parity-check matrix is structured, so this direct argument is no longer available. We instead show that such a witness matrix would make the random RS code contain a low-dimensional local profile, and then use the local-profile theorem of~\cite{levi2025random}.

The conversion is the following.  In the affine $m$-space case, the syndrome columns of $X$ lie in $\mathrm{span}(\mathbf{s}_0,\ldots,\mathbf{s}_m)$; in the degree-$\ell$ curve case, they lie in $\mathrm{span}(\mathbf{s}_0,\ldots,\mathbf{s}_\ell)$. Thus, after choosing enough independent columns, the matrix $HX$ has a large kernel. Taking a subspace $U\subseteq \ker(HX)$ of the required dimension gives $X\cdot U\subseteq \cC.$ In other words, many linear combinations of the columns of $X$ are actual codewords.

Now choose a basis $\mathbf{u}_1,\ldots,\mathbf{u}_b$ of $U$ and form $A=[X\mathbf{u}_1|\cdots|X\mathbf{u}_b].$ Since $U\subseteq \ker(HX)$, every column $X\mathbf{u}_j$ lies in $\cC$, so the columns of $A$ are codewords. At the same time, the sparsity of $X$ controls the rows of $A$. Indeed, at coordinate $i$, the row $A_{i,*}$ only depends on those columns $\mathbf{x}_j$ of $X$ with $X_{i,j}\ne 0$. Therefore, $A_{i,*}$ lies in a subspace whose dimension is at most the number of nonzero entries in the $i$-th row of $X$.  Summing over all coordinates, the total local dimension is at most the total number of nonzero entries of $X$, which is at most $tE$.

This is exactly the local-profile step formalized in Lemma~\ref{lem:rs-structured-witness-profile}: a high-rank low-weight witness matrix $X$, together with a subspace $U\subseteq \ker(HX)$, would force the random RS code to contain a local profile of small total dimension. Proposition~\ref{prop:rs-profile-exclusion-uniform} shows that, with high probability, a random RS code contains no such local profile. Combining this probabilistic statement with the deterministic rank-reduction theorems gives the random-RS correlated-agreement theorem, stated precisely as Theorem~\ref{thm:iid-rs-ca-main}.

\begin{thm}[Informal]
Fix a rate $R\in(0,1)$ and fixed integers $m,\ell\ge 1$. For every $\varepsilon>0$ and every radius $\rho\le 1-R-\varepsilon,$ an i.i.d. random Reed--Solomon code of rate $R$ over $\mathbb F_q$ satisfies, with high probability, affine $m$-space correlated agreement and degree-$\ell$ curve correlated agreement at radius $\rho n$, provided $q\ge n\cdot 2^{O_{R,m,\ell}(\varepsilon^{-3})}.$
\end{thm}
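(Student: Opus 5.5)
We treat this informal theorem as a two-stage argument: a deterministic stage that turns any failure of correlated agreement into a combinatorial witness, and a probabilistic stage that rules out such witnesses for random RS codes.

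\textbf{Setup.} Fix an affine $m$-space $U=\{\mathbf{u}_0+\sum_i z_i\mathbf{u}_i\}$, or a degree-$\ell$ curve $\{\sum_i z^i\mathbf{u}_i\}$. Pass to syndrome space using a parity-check matrix $H$ of the random RS code $\cC$, and set $\mathbf{s}_i=H\mathbf{u}_i$. Suppose correlated agreement fails at radius $E=\lfloor\rho n\rfloor$ with slack $E^+$, yet more than $K$ points of the family lie within distance $E$ of $\cC$. Each such point yields a vector $\mathbf{x}_j$ with $\wt(\mathbf{x}_j)\le E$ and $H\mathbf{x}_j$ on the corresponding syndrome space or curve. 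Let $X=[\mathbf{x}_1|\cdots|\mathbf{x}_K]$ be the resulting witness matrix.

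\textbf{Deterministic stage.} Apply the rank-reduction theorems of the correlated-agreement and curve sections, which are independent of the code ensemble. Each step passes to a subset of at least a $(d-E)/d$ fraction of the columns and lowers the rank by one. The base case is bounded by the line/space analogue of $B_{E,E^+}$. For RS codes $d=n-k+1$, so $E\le(1-R-\varepsilon)n$ gives $(d-E)/d\ge\varepsilon$ roughly. Hence a bad family forces $\rank(X)\ge t$ for some $t$ with $K\approx \varepsilon^{-t}\cdot B$. We choose $t=\Theta_{m,\ell}(1/\varepsilon)$ and slack $E^+-E\approx\varepsilon n$, which keeps $B=O(1/\varepsilon)$. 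Then select $t$ linearly independent columns.

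\textbf{Conversion to a local profile.} Their syndromes lie in $\mathrm{span}(\mathbf{s}_0,\ldots,\mathbf{s}_m)$ (respectively $\mathrm{span}(\mathbf{s}_0,\ldots,\mathbf{s}_\ell)$), so $\dim\ker(HX)\ge t-(m+1)$. Take $U'\subseteq\ker(HX)$ with $b=t-m-1$ and form $A=XU'$. The columns of $A$ are codewords, and since $X$ has full column rank, $A$ has rank $b$. Row $i$ of $A$ lies in a space of dimension at most the number of nonzeros in row $i$ of $X$, so the total local dimension is at most $tE$. This is the content of Lemma~\ref{lem:rs-structured-witness-profile}.

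\textbf{Comparison with the threshold.} A random linear code contains a rank-$b$ configuration with total local dimension $D$ only if $D\gtrsim b\cdot(1-R)n$, roughly. Here
\[
D\le tE\le t(1-R-\varepsilon)n,
\qquad
b(1-R)n=(t-m-1)(1-R)n .
\]
So the configuration is excluded once $t\varepsilon>(m+1)(1-R)$, that is, $t=\Theta_m(1/\varepsilon)$. Proposition~\ref{prop:rs-profile-exclusion-uniform} transfers this exclusion to i.i.d.\ random RS codes, uniformly over all profiles, provided $q\ge n\cdot 2^{O(b\cdot\mathrm{poly})}$. This is where the bound $q\ge n\cdot 2^{O(\varepsilon^{-3})}$ enters: profile dimension $b=O(1/\varepsilon)$, together with an $\varepsilon^{-2}$-type loss in the local-equivalence theorem of~\cite{levi2025random}. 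A union bound over the bad events then finishes. Because the profile is excluded for all choices simultaneously, no union over lines or spaces is needed. Finally, the number of close points $K$ comes out as $O_m(\varepsilon^{-t})$ divided by $q$, which gives the error term.

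\textbf{Main obstacle.} The hard step is making the profile exclusion uniform and quantitatively matched. Proposition~\ref{prop:rs-profile-exclusion-uniform} must handle profiles whose local subspaces depend on the unknown $X$. It also requires the slack $\varepsilon n$ between $tE$ and the threshold to absorb the loss from the local-equivalence theorem. We would first try to discretize the profile, fixing only row supports and local dimensions, and use the threshold form with an $\varepsilon/2$ margin. Extending from affine spaces to curves only changes $m$ to $\ell$ and the base-case constant $B$, giving $O_\ell$ in the exponent.
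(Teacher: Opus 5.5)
Your proposal is correct in outline and follows the paper's proof of Theorem~\ref{thm:iid-rs-ca-main} essentially step for step. The shared steps are: rank reduction to $t=h+s+1$ independent low-weight witness columns; a subspace of $\ker(HX)$ of dimension $t-h$; the local profile of Lemma~\ref{lem:rs-structured-witness-profile}, with total dimension at most $tE$; the $W=0$ threshold-rate bound; and Proposition~\ref{prop:rs-profile-exclusion-uniform}. In both, the $\varepsilon^{-3}$ comes from beating the $2^{Tn}$ enumeration of row supports with the exponent $\eta n/(2b)$ of the local-profile bound.

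Two small corrections. First, the i.i.d.\ model is not MDS: with $q=Cn$, about $\Theta(n/C)$ evaluation points repeat, so $d=n-k+1$ must be replaced by the high-probability bound $d\ge(1-R-\varepsilon/4)n$ of Lemma~\ref{lem:iid-rs-distance}. Second, you do not need the slack $E^+-E\approx\varepsilon n$. At $\rho=1-R-\varepsilon$ it makes the conclusion essentially trivial, since then $E^+\approx n-k$ and a common information set already gives agreement outside $n-k$ coordinates. Dropping it costs nothing: $B_{E,E^+}\le E+1$ enters only the threshold $K$, and the resulting extra $O(n)$ factor is absorbed by $q\ge n\cdot 2^{O(\varepsilon^{-3})}$, so the no-slack case $E^+=E$ also goes through.
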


The precise thresholds, the output-radius parameter $E^+$, and the quantitative statements are given in Theorem~\ref{thm:iid-rs-ca-main}. In particular, the theorem includes the no-slack case $E^+=E$ in the i.i.d. random-RS model. For the standard distinct-evaluation random-RS model, the fixed-slack consequences $E^+=E+\Omega(n)$ transfer by the comparison lemma of~\cite{levi2025random}, as explained after Lemma~\ref{lem:distinct-rs}.

\begin{rmk}[Comparison with~\cite{goyal2025optimal} for random RS codes]
For random RS codes, the consequences obtained from~\cite{goyal2025optimal} give correlated agreement up to radius $\rho \le 1-R-2\varepsilon$. In the line/affine-space case, their alphabet-size requirement is $q\ge n\cdot 2^{O(\varepsilon^{-7})},$ and for degree-$\ell$ curves it is $q\ge n\cdot 2^{O_\ell(\varepsilon^{-(2\ell+5)})}.$

Our random-RS result reaches the larger radius $\rho\le 1-R-\varepsilon$. At this radius, Theorem~\ref{thm:iid-rs-ca-main} gives alphabet size $q\ge n\cdot 2^{O_m(\varepsilon^{-3})}$ for affine $m$-spaces, and $q\ge n\cdot 2^{O_\ell(\varepsilon^{-3})}$ for degree-$\ell$ curves.  Thus, compared with the random-RS consequences of~\cite{goyal2025optimal}, our result improves both the reachable radius and the alphabet-size dependence; see Remark~\ref{rmk:compare-random-rs} for the precise comparison.
\end{rmk}

\section{Preliminaries}
This section collects the notation and basic ingredients used throughout the paper.
The presentation is organized so as to simultaneously cover the large-alphabet radius
setting and the constant-alphabet radius setting.

\subsection{General Notation}

Let $q$ be a prime power, and let $\mathbb{F}_q$ denote the finite field of size $q$. For an integer $n \ge 1$, write $[n] := \{1,\ldots,n\}$. Let $H \in \mathbb{F}_q^{r\times n}$, and define the linear code
\[
\cC := \ker(H) = \{\mathbf{x} \in \mathbb{F}_q^n : H\mathbf{x} = 0\} \subseteq \mathbb{F}_q^n.
\]
When $H$ is sampled uniformly from $\mathbb{F}_q^{r\times n}$, the corresponding code
$\cC$ is a random linear code with rate $R=1-r/n$.

For a vector $\mathbf{x}=(x_1,\dots,x_n)\in\mathbb{F}_q^n$, let
\[
\supp(\mathbf{x}) := \{i\in[n] : x_i \neq 0\},
\qquad
\wt(\mathbf{x}) := |\supp(\mathbf{x})|.
\]
For $\mathbf{x},\mathbf{y}\in\mathbb{F}_q^n$, the Hamming distance is $d(\mathbf{x},\mathbf{y}) := \wt(\mathbf{x}-\mathbf{y})$. The minimum distance of the linear code $\cC$ is denoted by $d(\cC) := \min\{\wt(\mathbf{c}) : \mathbf{c}\in \cC\setminus\{0\}\}$.

For an integer $E\ge 0$, define the Hamming ball $B_E := \{\mathbf{x} \in \mathbb{F}_q^n : \wt(\mathbf{x})\le E\}$. More generally, for $\mathbf{y}\in\mathbb{F}_q^n$, let $B_E(\mathbf{y}) := \{\mathbf{x}\in\mathbb{F}_q^n : d(\mathbf{x},\mathbf{y})\le E\}$.

\begin{defn}[Syndrome ball]
For the parity-check matrix $H\in\mathbb{F}_q^{r\times n}$ and an integer $E\ge 0$, define the syndrome ball of radius $E$ by
\[
H_E := \{H\mathbf{x} : \mathbf{x}\in\mathbb{F}_q^n,\ \wt(\mathbf{x})\le E\}
\subseteq \mathbb{F}_q^r.
\]
\end{defn}

The following elementary equivalence will be used repeatedly.

\begin{lemma}\label{lem:syndrome-characterization}
For every $\mathbf{y}\in\mathbb{F}_q^n$ and every integer $E\ge 0$,
\[
d(\mathbf{y},C)\le E
\quad\Longleftrightarrow\quad
H\mathbf{y}\in H_E.
\]
\end{lemma}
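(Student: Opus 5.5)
We prove the two implications separately. The only facts needed are that $\cC=\ker(H)$ and that $H$ is linear. Throughout, $d(\mathbf{y},C)$ denotes $\min_{\mathbf{c}\in\cC} d(\mathbf{y},\mathbf{c})$, the distance to the code $\cC$.

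\emph{Forward direction.} Suppose $d(\mathbf{y},\cC)\le E$, so there is $\mathbf{c}\in\cC$ with $\wt(\mathbf{y}-\mathbf{c})\le E$. Put $\mathbf{x}:=\mathbf{y}-\mathbf{c}$. Since $H\mathbf{c}=0$, we get $H\mathbf{x}=H\mathbf{y}-H\mathbf{c}=H\mathbf{y}$. Because $\wt(\mathbf{x})\le E$, the definition of the syndrome ball gives $H\mathbf{y}=H\mathbf{x}\in H_E$.

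\emph{Reverse direction.} Suppose $H\mathbf{y}\in H_E$, so there is $\mathbf{x}$ with $\wt(\mathbf{x})\le E$ and $H\mathbf{x}=H\mathbf{y}$. Set $\mathbf{c}:=\mathbf{y}-\mathbf{x}$. Then $H\mathbf{c}=0$, so $\mathbf{c}\in\ker(H)=\cC$. Moreover $d(\mathbf{y},\mathbf{c})=\wt(\mathbf{y}-\mathbf{c})=\wt(\mathbf{x})\le E$, and hence $d(\mathbf{y},\cC)\le E$.

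There is no real obstacle here. The one point to state explicitly is that the minimum defining $d(\mathbf{y},\cC)$ is attained, which holds because $\cC$ is a nonempty finite set (it contains $0$). The whole argument amounts to saying that the coset $\mathbf{y}+\cC$ is exactly the set of vectors sharing the syndrome $H\mathbf{y}$: membership in the syndrome ball means this coset contains a vector of weight at most $E$, i.e.\ the coset has a low-weight representative.
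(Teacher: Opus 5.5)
Your proof is correct and is the same as the paper's. Both set $\mathbf{x}=\mathbf{y}-\mathbf{c}$ and use $\cC=\ker(H)$ to get $H\mathbf{x}=H\mathbf{y}$; you also write out the converse via $\mathbf{c}:=\mathbf{y}-\mathbf{x}$, which the paper only says is obtained by reversing the argument.
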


\begin{proof}
By definition,
\[
d(\mathbf{y},\cC)\le E
\iff
\exists\,\mathbf{c}\in \cC \text{ such that } \wt(\mathbf{y}-\mathbf{c})\le E.
\]
Writing $\mathbf{x}:=\mathbf{y}-\mathbf{c}$, and using $\mathbf{c}\in \cC=\ker(H)$, we obtain $H\mathbf{x} = H(\mathbf{y}-\mathbf{c})=H\mathbf{y}$ with $\wt(\mathbf{x})\le E$ which is equivalent to $H\mathbf{y}\in H_E$. The converse is identical by reversing the argument.
\end{proof}

\subsection{Affine spaces and the proximity-gap property}

For $\mathbf{u}_0,\ldots,\mathbf{u}_{\ell}\in\mathbb{F}_q^n$, define the affine space
\[
U:=\mathbf{u}_0+\mathrm{span}\{\mathbf{u}_1,\ldots,\mathbf{u}_{\ell}\}\subseteq \F_q^n.
\]
Likewise, for $\mathbf{s}_0,\ldots,\mathbf{s}_{\ell}\in\mathbb{F}_q^r$, define the affine syndrome space
\[
S:=\mathbf{s}_0+\mathrm{span}\{\mathbf{s}_1,\ldots,\mathbf{s}_{\ell}\}\subseteq \F_q^r.
\]
Define $U_{H}:=H\mathbf{u}_0+\mathrm{span}\{H\mathbf{u}_1,\ldots,H\mathbf{u}_{\ell}\}$. If $H\mathbf{u}_i=\mathbf{s}_i$ for every $0\leq i\leq \ell$, then $U_H=S$.

Combining this with Lemma~\ref{lem:syndrome-characterization}, one can see that for every
$\alpha_1,\ldots,\alpha_{\ell}\in\F_q$,
\[
\mathbf{s}_0+\alpha_1 \mathbf{s}_1+\ldots+\alpha_{\ell}\mathbf{s}_{\ell} \in H_E
\quad\Longleftrightarrow\quad
d(\mathbf{u}_0+\alpha_1 \mathbf{u}_1+\ldots+\alpha_{\ell}\mathbf{u}_{\ell}, \cC)\le E.
\]
Moreover, we have $\frac{\left|\{\mathbf{u}\in U:d(\mathbf{u},\cC)\leq E\}\right|}{|U|}=\frac{|S\cap H_E|}{|S|}$.

\begin{defn}[$(E,E^+,\tau)$-space Proximity Gap]
\label{def:space-proximity-gap}
Let $0\leq E\leq E^{+}\leq n$ and $0\leq \tau\leq 1$. A linear code $\cC=\ker(H)$ is said to satisfy the $(E,E^{+},\tau)$-space proximity-gap property if every affine space $S\subseteq \F_q^r$ with $\frac{|S\cap H_E|}{|S|}>\tau$ implies $S\subseteq H_{E^{+}}$.
\end{defn}

\begin{lemma}\label{lem:line-to-space-gap}
Assume a linear code $\cC$ satisfies the following $(E,E^{+},\tau)$-line proximity-gap property. Then $\cC$ satisfies the $(E,E^+,\tau q/(q-1))$-space proximity-gap property.
\end{lemma}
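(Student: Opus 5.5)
The plan is the standard averaging argument over lines through a point. I take the $(E,E^+,\tau)$-line proximity-gap property to mean: every affine syndrome line $L\subseteq\F_q^r$ with $|L\cap H_E|/|L|>\tau$ satisfies $L\subseteq H_{E^+}$. This is Definition~\ref{def:space-proximity-gap} restricted to one-dimensional $S$.

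Fix an affine syndrome space $S$ with $|S\cap H_E|/|S|>\tau q/(q-1)$. I must show that every $\mathbf{s}\in S$ lies in $H_{E^+}$. If $\dim S=0$ this is trivial, since then $S=\{\mathbf{s}\}$ and $\mathbf{s}\in H_E\subseteq H_{E^+}$. If $\dim S=1$, $S$ is itself a line and the density exceeds $\tau$, so the line property applies directly.

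So assume $\dim S=m\ge 2$. Fix an arbitrary $\mathbf{s}\in S$ and consider the $(q^m-1)/(q-1)$ affine lines in $S$ through $\mathbf{s}$. These lines partition $S\setminus\{\mathbf{s}\}$, and each has $q-1$ points other than $\mathbf{s}$. Hence
\[
\sum_{L\ni \mathbf{s}} \bigl|(L\setminus\{\mathbf{s}\})\cap H_E\bigr| \;=\; \bigl|(S\setminus\{\mathbf{s}\})\cap H_E\bigr| \;\ge\; |S\cap H_E|-1 .
\]
Since $|S\cap H_E|>\tau q^{m+1}/(q-1)$, averaging over the lines gives a line $L\ni\mathbf{s}$ with
\[
|L\cap H_E| \;\ge\; \frac{(q-1)\bigl(|S\cap H_E|-1\bigr)}{q^m-1},
\]
plus possibly the point $\mathbf{s}$ itself.

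The main obstacle is making this average strictly exceed $\tau q=\tau|L|$; this is where the factor $q/(q-1)$ is used. I would handle it as follows.
\begin{itemize}
\item[(a)] If $\mathbf{s}\in H_E$, we are done immediately, since $H_E\subseteq H_{E^+}$.
\item[(b)] If $\mathbf{s}\notin H_E$, then $S\cap H_E\subseteq S\setminus\{\mathbf{s}\}$, and the average count per line is exactly
\[
\frac{(q-1)\,|S\cap H_E|}{q^m-1} \;>\; \frac{(q-1)\,|S\cap H_E|}{q^m}
\;>\; \frac{q-1}{q^m}\cdot\frac{\tau q^{m+1}}{q-1} \;=\; \tau q .
\]
\end{itemize}
In case (b), some line $L\ni\mathbf{s}$ therefore has $|L\cap H_E|/|L|>\tau$. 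The line proximity-gap property gives $L\subseteq H_{E^+}$, and in particular $\mathbf{s}\in H_{E^+}$. As $\mathbf{s}\in S$ was arbitrary, $S\subseteq H_{E^+}$, which is the $(E,E^+,\tau q/(q-1))$-space proximity-gap property.

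The factor $q/(q-1)$ is slightly more than this averaging needs, but it is harmless and is what the statement asserts.
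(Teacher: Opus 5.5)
Your proof is correct and is essentially the paper's argument. You fix a point of $S$ and dispose of the case where it already lies in $H_E$ (the paper checks membership in $H_{E^+}$ directly). Otherwise you average $|L\cap H_E|$ over the lines $L\subseteq S$ through that point, find one with density exceeding $\tau$, and invoke the line property. The differences are cosmetic: the paper phrases the averaging via a uniformly random second point $\mathbf{u}'\in U\setminus\{\mathbf{u}_0\}$ and the bijection $\mathbf{u}'\mapsto\mathbf{u}_0+\alpha(\mathbf{u}'-\mathbf{u}_0)$ for each $\alpha\in\F_q^\ast$, whereas you count explicitly using the partition of $S\setminus\{\mathbf{s}\}$ into $(q^m-1)/(q-1)$ lines and treat $\dim S=1$ separately.
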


\begin{proof}[Proof Sketch.] The argument is similar to that of Lemma 2.13 in~\cite{goyal2025optimal}, except that it is reformulated in the syndrome space; therefore, we defer it to Appendix~\ref{appendix:A}.
\end{proof}

\begin{defn}[($m, E, E^+, \tau$)-space Correlated Agreement]\label{def:correlated-agreement-space}
Let $m \ge 1$, $0 \le E \le E^+ \le n$, and $0 \le \tau \le 1$. A linear code $\cC$ with parity-check matrix $H$ is said to have the $(m,E,E^+,\tau)$-space correlated agreement property if for every affine space $U = \mathbf{u}_0 + \mathrm{span}(\mathbf{u}_1,\dots,\mathbf{u}_m) \subseteq \mathbb{F}_q^n$ with syndrome space
\[
S:= U_H= \mathbf{s}_0 + \mathrm{span}(\mathbf{s}_1,\dots,\mathbf{s}_m)\subseteq \mathbb{F}_q^r,
\qquad \mathbf{s}_i:=H\mathbf{u}_i \ \ (0\le i\le m),
\]
the condition 
$\frac{|S\cap H_E|}{|S|}>\tau$, implies that there exists an affine code space $V = \mathbf{c}_0 + \mathrm{span}(\mathbf{c}_1,\dots,\mathbf{c}_m)\subseteq \cC$
such that
\[
\left|\bigcup_{i=0}^{m}\mathrm{supp(\mathbf{c}_i-\mathbf{u}_i)}\right|
\le E^+.
\]
\end{defn}

\begin{lemma}\label{lem:correlated-line-to-space}
Assume that for every integer $E'$ with $0\le E'\le E$, a linear code $\cC$ has the $(1,E',E',\tau)$-line correlated agreement property. Assume further that for all $\mathbf{x}\in \F_q^n$,
\[
\bigl|\{\mathbf{c}\in \cC:\ d(\mathbf{x},\mathbf{c})\le E\}\bigr|<q.
\]
Then for every integer $m\ge 2$, the code $\cC$ has the $(m,E,E,\tau q/(q-1))$-space correlated agreement property.
\end{lemma}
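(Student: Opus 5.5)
Throughout, we fix an affine space $U=\mathbf{u}_0+\mathrm{span}(\mathbf{u}_1,\dots,\mathbf{u}_m)$ with syndrome space $S$ and density $\frac{|S\cap H_E|}{|S|}>\tau q/(q-1)$. The plan is to mirror the line-to-space lifting of Lemma~\ref{lem:line-to-space-gap} and then glue the resulting line-wise correlated agreements together, using the list-size hypothesis to force consistency.

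\textbf{Step 1 (many good lines).} Parametrize the points of $U$ by $(\alpha_1,\dots,\alpha_m)\in\F_q^m$. For each fixed $(\alpha_2,\dots,\alpha_m)$, consider the line $L_{\alpha_2,\dots,\alpha_m}=\{\mathbf{u}_0+\sum_{j\ge2}\alpha_j\mathbf{u}_j+\alpha_1\mathbf{u}_1\}$. The averaging argument behind Lemma~\ref{lem:line-to-space-gap} (a random line through a random point, or simply partitioning $U$ into parallel lines) gives many lines whose density exceeds $\tau$. More precisely, I would use the version that accounts for the $q/(q-1)$ factor: among all lines in $U$ through a good point, the density is at least $\tau$. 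I would rather work with all directions than only $\mathbf{u}_1$, since the target is to recover each $\mathbf{c}_i$.

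On every line with density $>\tau$, the $(1,E',E',\tau)$-line correlated agreement hypothesis (applied with $E'=E$) gives codewords $\mathbf{c}_a,\mathbf{c}_b$ such that the line's base point and direction agree with them outside a common set of size $\le E$.

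\textbf{Step 2 (consistency via small lists).} Each point $\mathbf{x}$ on such a line is then $E$-close to the codeword $\mathbf{c}_a+\alpha\mathbf{c}_b$. Since $|\{\mathbf{c}:d(\mathbf{x},\mathbf{c})\le E\}|<q$, different good lines through a common good point cannot produce more than $q-1$ distinct codeword candidates there. A pigeonhole over the $q$ points of a line is therefore what forces two decoded lines to agree. Concretely, I would do the following. First, fix a base good line $L$ in direction $\mathbf{u}_1$ and decode it. Next, for each other direction $\mathbf{u}_j$, consider the $q$ parallel translates $L+\beta\mathbf{u}_j$ and the lines in direction $\mathbf{u}_j$ through points of $L$. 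Then show, by pigeonhole and the list bound, that the decoded affine maps agree on an affine-linear extension. This yields $\mathbf{c}_j$ with $\mathbf{u}_j-\mathbf{c}_j$ supported in a bounded set. Applying the $E'\le E$ version of the hypothesis to sublines is useful here because restricting to fewer errors pins down supports.

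\textbf{Step 3 (common support bound).} The remaining task is to show that the union $\bigcup_i\supp(\mathbf{c}_i-\mathbf{u}_i)$ has size at most $E$, not just each set individually. I would argue that for a generic point $\mathbf{u}=\mathbf{u}_0+\sum\alpha_i\mathbf{u}_i$, its error $\mathbf{u}-\sum\alpha_i\mathbf{c}_i$ has support equal to the union unless $(\alpha_i)$ lies in one of finitely many hyperplanes, one per coordinate. Together with the density bound, this makes some good point's error have full support. Since that error has weight $\le E$ by uniqueness from Step 2, the union has size $\le E$. I would then finish by induction on $m$ (from $m-1$ to $m$) using the $m=1$ case, which is the hypothesis itself.

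\textbf{Main obstacle.} The hard part is Step 2: making the pigeonhole-plus-list-size argument rigorous so that the locally decoded codeword lines glue into one affine map $V$. In particular, the hypothesis only gives existence, not uniqueness, of the decoded line, so a careful choice is needed. I would handle this by selecting, at each good point, the codeword that appears for at least one full line, using $<q$ candidates against $q$ points.
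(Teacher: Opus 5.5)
Your Step 2 is where the proof has to happen, and the mechanism you sketch does not supply it. The line hypothesis only asserts \emph{existence} of a decoded code line. The list bound ($<q$ codewords within $E$) gives no uniqueness, so different lines through the same point can decode to different codewords there. Choosing at each point ``a codeword that appears for a full line'' gives no reason for these choices to depend affinely on the parameters, and no control on a \emph{common} disagreement set.

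Step 3 inherits the problem. It invokes ``uniqueness from Step 2,'' which is not available. Its genericity claim also fails in general. There is one affine hyperplane $\{f_i=0\}$ per coordinate of $\bigcup_i\supp(\mathbf{c}_i-\mathbf{u}_i)$, possibly up to $n$ of them. Once there are $q$ or more, they can cover all of $\F_q^m$ (e.g.\ $q$ parallel ones), so no point whose error has full support need exist. Even if you knew every point of $U$ were within $E$ of your glued map, averaging over these hyperplanes only yields $|\bigcup_i\supp(\mathbf{c}_i-\mathbf{u}_i)|\le qE/(q-1)$. That gives $\le E$ only when $q>E+1$, which is not a hypothesis of the lemma.

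The missing idea is to avoid gluing entirely by decoding from an extremal point at the exact radius.
\begin{itemize}
\item Line correlated agreement implies the line proximity gap, so Lemma~\ref{lem:line-to-space-gap} (with $E^+=E$) gives $S\subseteq H_E$: every point of $U$ is $E$-close to $\cC$.
\item Let $E^\ast:=\max_{\mathbf{u}\in U}d(\mathbf{u},\cC)\le E$, attained at $\mathbf{u}^\ast$. Let $\mathbf{c}_1,\dots,\mathbf{c}_L$ be the codewords at distance exactly $E^\ast$ from $\mathbf{u}^\ast$; by the list bound $L<q$. Let $T_j$ be the agreement set of $\mathbf{c}_j$ with $\mathbf{u}^\ast$, of size $n-E^\ast$.
\item For any $\mathbf{u}\in U$, the line through $\mathbf{u}^\ast$ and $\mathbf{u}$ lies entirely within $E^\ast$ of $\cC$, so the $(1,E^\ast,E^\ast,\tau)$ hypothesis applies. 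The resulting base codeword is within $E^\ast$ of $\mathbf{u}^\ast$, hence equals some $\mathbf{c}_j$. The joint agreement set has size at least $n-E^\ast$ and lies in $T_j$, hence equals $T_j$.
\item Therefore $U=\bigcup_{j\le L}U_j$, where $U_j:=\{\mathbf{u}\in U:\pi_{T_j}(\mathbf{u})\in\pi_{T_j}(\cC)\}$ and $\pi_{T_j}$ is projection onto the coordinates in $T_j$. Each $U_j$ is an affine subspace of $U$. Since $L<q$ and a proper affine subspace has at most $|U|/q$ points, some $U_{j^\ast}=U$.
\item The single set $T=T_{j^\ast}$ now serves all of $U$. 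Choose $\mathbf{c}_i\in\cC$ with $\pi_T(\mathbf{c}_i)=\pi_T(\mathbf{u}_i)$. Then $[\mathbf{u}_0-\mathbf{c}_0|\cdots|\mathbf{u}_m-\mathbf{c}_m]$ has row weight at most $n-|T|=E^\ast\le E$.
\end{itemize}
This is where the ``every $E'\le E$'' hypothesis and the list bound are really used. The pigeonhole is over fewer than $q$ covering subspaces, not over the $q$ points of a line. It also makes your Step 3 and the induction on $m$ unnecessary.
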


\begin{proof}[Proof Sketch.]
     The argument is similar to that of Lemma 2.15 in~\cite{goyal2025optimal}, except that it is reformulated in the syndrome space; therefore, we defer it to Appendix~\ref{appendix:A}.
\end{proof}

\begin{lemma}[Corollary~1.4~\cite{alrabiah2025random}]\label{lem:list-decoding}
    For all $\varepsilon>0$, if $q\geq 2^{O(\frac{1}{\varepsilon^2})}$ and $n$ is sufficiently large, then a random linear code is, with high probability, $(1-R-\varepsilon,O(1/\varepsilon))$-average-radius-list-decodable.
\end{lemma}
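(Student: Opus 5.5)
This lemma is imported verbatim from~\cite{alrabiah2025random}, so the intended proof is simply to invoke their Corollary~1.4 in the parity-check model used here. A uniformly random $H\in\F_q^{r\times n}$ with $r=(1-R)n$ induces a code $\cC=\ker(H)$. Conditioned on $H$ having full row rank, which fails with probability at most $q^{r-n}/(q-1)$, this $\cC$ is a uniformly random $Rn$-dimensional subspace. That is exactly the random linear code ensemble of~\cite{alrabiah2025random}, up to a negligible total-variation error. Their guarantee of average-radius list decodability with list size $L=O(1/\varepsilon)$ at radius $1-R-\varepsilon$, for $q\ge 2^{O(\varepsilon^{-2})}$, therefore transfers directly. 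The only bookkeeping is to check that the $o(1)$ failure probabilities add.

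If I had to reconstruct the argument instead of citing it, I would prove the contrapositive by a union bound.
\begin{itemize}
\item \emph{Bad event.} A code fails to be average-radius list decodable if there exist a center $\mathbf{y}$ and distinct codewords $\mathbf{c}_0,\dots,\mathbf{c}_L$ with $\sum_i d(\mathbf{y},\mathbf{c}_i)\le (L+1)\rho n$.
\item \emph{Fixed independent configuration.} If $\mathbf{c}_0,\dots,\mathbf{c}_L$ are linearly independent, the probability that all of them lie in $\ker(H)$ is $q^{-(L+1)r}$.
\item \emph{Counting independent configurations.} The number of such configurations is at most $q^n\cdot\bigl(\sum_{E_i}\prod_i\binom{n}{E_i}q^{E_i}\bigr)$. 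For large $q$ this is roughly $q^{n+(L+1)\rho n+o_q(1)(L+1)n}$. It beats $q^{(L+1)(1-R)n}$ once $(L+1)\varepsilon n>n$, that is, once $L\gtrsim 1/\varepsilon$, and $q$ is large enough that the entropy terms $H_q$ contribute at most $O(\varepsilon)$.
\end{itemize}

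The main obstacle is the dependent case, where the $\mathbf{c}_i$ span only a $t$-dimensional space with $t<L+1$. Naively, the dependent codewords are free elements of a $t$-dimensional span. That costs a factor $q^{t}$ per vector, which destroys the count.

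To handle this I would follow the Guruswami--H\aa stad--Kopparty / Rudra--Wootters strategy.
\begin{itemize}
\item Pick an independent subset of size $t$ that carries most of the "closeness budget."
\item Show that a $t$-dimensional affine span can contain only few vectors close to $\mathbf{y}$ on average. This needs a spread/counting bound on low-weight points in a small-dimensional affine space, which is where $q\ge 2^{O(1/\varepsilon^2)}$ enters.
\item Union bound only over the independent core, at probability cost $q^{-tr}$.
\end{itemize}

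Making the exponent bookkeeping close with list size $O(1/\varepsilon)$, rather than $q^{O(1/\varepsilon)}$, is the delicate step. It is precisely the contribution of~\cite{alrabiah2025random}, which is why we cite it rather than reprove it.
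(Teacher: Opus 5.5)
Your proposal is correct and matches the paper, which gives no proof of this lemma and simply imports Corollary~1.4 of~\cite{alrabiah2025random}. Your extra observation is a correct and useful check that the cited ensemble agrees with the parity-check model used here: for uniform $H\in\F_q^{r\times n}$, conditioned on full row rank (which fails with probability at most $q^{r-n}/(q-1)$), $\ker(H)$ is a uniformly random $Rn$-dimensional subspace. The union-bound sketch is unnecessary, and, as you note, it would not by itself handle the linearly dependent configurations.
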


Therefore, it suffices to consider the case of affine lines, since the case of affine spaces can be derived from the aforementioned theorem. For the remainder of this paper, we use the notation $\ell(\mathbf{a},\mathbf{b})=\{\mathbf{a}+\alpha \mathbf{b}:\alpha\in \F_q\}\subseteq \F_q^n$ for an affine line, and $L(\mathbf{s}_0,\mathbf{s}_1)=\{\mathbf{s}_0+\alpha \mathbf{s}_1:\alpha\in \F_q\}\subseteq \F_q^r$ for an affine syndrome line.

\section{Proximity Gap of Random Linear Codes}
\begin{lemma}\label{lem:degenerate-lines}
Let an affine syndrome line $L=L(\mathbf{s}_0,\mathbf{s}_1)\subseteq\mathbb{F}_q^r$. If $\dim_{\F_q} \mathrm{span}\{\mathbf{s}_0,\mathbf{s}_1\}\le 1$, then $|L\cap H_E|\in\{0,1,q\}$. In particular, if $|L\cap H_E|\ge 2$, then only the case $\dim_{\F_q} \mathrm{span}\{\mathbf{s}_0,\mathbf{s}_1\}=2$ can produce a bad line.
\end{lemma}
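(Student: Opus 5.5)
The plan is a case analysis on the span of $\{\mathbf{s}_0,\mathbf{s}_1\}$. The one structural fact needed is that $H_E$ is closed under nonzero scalar multiplication: if $\mathbf{s}=H\mathbf{x}$ with $\wt(\mathbf{x})\le E$, then for every $\lambda\in\F_q$ we have $\lambda\mathbf{s}=H(\lambda\mathbf{x})$ and $\wt(\lambda\mathbf{x})\le \wt(\mathbf{x})\le E$. Also $\mathbf{0}\in H_E$. So $H_E$ is a union of lines through the origin, together with $\mathbf{0}$.

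\textbf{Case $\dim\mathrm{span}\{\mathbf{s}_0,\mathbf{s}_1\}=0$.} Here $\mathbf{s}_0=\mathbf{s}_1=\mathbf{0}$, so $L=\{\mathbf{0}\}\subseteq H_E$. The count $|L\cap H_E|$ should be read as the number of parameters $\alpha\in\F_q$ with $\mathbf{s}_0+\alpha\mathbf{s}_1\in H_E$, which is the convention matching the ratio $|S\cap H_E|/|S|$ used in the definitions. Under that reading the count is $q$.

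\textbf{Case $\dim=1$.} All points of $L$ lie on a single one-dimensional subspace $\F_q\mathbf{v}$ with $\mathbf{v}\neq\mathbf{0}$. By scalar closure, either $\mathbf{v}\in H_E$, in which case the whole subspace $\F_q\mathbf{v}$ lies in $H_E$ and hence all of $L$ does, giving count $q$; or $\mathbf{v}\notin H_E$, in which case $\F_q\mathbf{v}\cap H_E=\{\mathbf{0}\}$. In the second situation, I count how many $\alpha$ satisfy $\mathbf{s}_0+\alpha\mathbf{s}_1=\mathbf{0}$:
\begin{enumerate}
\item If $\mathbf{s}_1\neq\mathbf{0}$, the map $\alpha\mapsto\mathbf{s}_0+\alpha\mathbf{s}_1$ is injective, so at most one $\alpha$ hits $\mathbf{0}$. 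The count is $0$ or $1$.
\item If $\mathbf{s}_1=\mathbf{0}$, then $\mathbf{s}_0\neq\mathbf{0}$ (since the dimension is $1$) and $\mathbf{s}_0\notin H_E$, so the count is $0$.
\end{enumerate}

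The ``in particular'' clause then follows directly. A bad line needs $|L\cap H_E|\ge 2$ without $L\subseteq H_E$, that is, $2\le |L\cap H_E|<q$. In the degenerate cases the count lies in $\{0,1,q\}$, so this cannot happen, and $\dim=2$ is forced. (When $q=2$ the value $2$ equals $q$, so the statement is vacuously fine.)

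The only delicate point is the bookkeeping convention, namely counting by parameters $\alpha$ rather than by distinct points when $\mathbf{s}_1=\mathbf{0}$. The rest is immediate from scalar closure of $H_E$.
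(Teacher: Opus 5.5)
Your proof is correct and follows essentially the same route as the paper: both rest on the fact that $H_E$ is closed under scalar multiplication (the paper rescales a preimage $\mathbf{x}$ by $\gamma\gamma_0^{-1}$). The paper splits on $\mathbf{s}_1=\mathbf{0}$ (a singleton, declared immediate) versus $\mathbf{s}_0\in\F_q\mathbf{s}_1$, whereas you split on the dimension of the span. Your explicit remark about counting by parameters rather than by points is a harmless refinement, since either convention keeps the count in $\{0,1,q\}$.
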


\begin{proof}
If $\mathbf{s}_1=0$, then $L$ is a singleton, so the claim is immediate.
Now assume $\mathbf{s}_1\neq 0$ and $\mathbf{s}_0=\beta \mathbf{s}_1$ for some $\beta\in\mathbb{F}_q$. Then $L=\{\gamma \mathbf{s}_1 : \gamma\in\mathbb{F}_q\}$.

Since $\mathbf{0}\in L$ and $\mathbf{0}=H\mathbf{0}\in H_E$, we always have $|L\cap H_E|\ge 1$.
If there exists $\gamma_0\neq 0$ and $\mathbf{x}\in B_E$ such that $H\mathbf{x}=\mathbf{\gamma}_0 \mathbf{s}_1$, then for every $\gamma\in\mathbb{F}_q$, the vector $\mathbf{x}_\gamma := \gamma\gamma_0^{-1}\mathbf{x}$ still belongs to $B_E$ and satisfies $H\mathbf{x}_\gamma=\gamma \mathbf{s}_1$. Thus $L\subseteq H_E$, and hence $|L\cap H_E|=q$. Otherwise, the only element of $L$ lying in $H_E$ is $0$, so $|L\cap H_E|=1$.
\end{proof}

Henceforth, whenever a syndrome line $L=\{\mathbf{s}_0+\alpha \mathbf{s}_1:\alpha\in\mathbb{F}_q\}$ is considered, it will be implicitly assumed that $\dim \mathrm{span}\{\mathbf{s}_0,\mathbf{s}_1\}=2$. We first show a bound on the number of elements of an affine line that can lie in a low-radius Hamming ball unless the entire line lies in a slightly larger ball.

The next lemma captures a simple counting principle in Hamming space. Along an affine line $\mathbf{a}+\alpha \mathbf{b}$, a fixed coordinate can vanish for at most one value of $\alpha$. Hence, if many elements on the line have small Hamming weight, there must be many such coordinate-wise cancellations. This can happen only to a limited extent unless the entire line is already contained in a slightly larger Hamming ball. This is the same basic idea used in the early $d/3$ proximity-gap argument of~\cite{ben2018worst}: at each coordinate where two affine descriptions differ, at most one element on the line can cancel the discrepancy.

\begin{lemma} \label{lem:gap-line-ball}
Let $\ell=\{\mathbf{a}+\alpha \mathbf{b}:\alpha\in\mathbb{F}_q\}\subseteq\mathbb{F}_q^n$, where $\mathbf{b}\neq \mathbf{0}$. Let $0\le E\le E^{+}\le n$. If $\ell\nsubseteq B_{E^{+}}$, 
then $|\ell\cap B_E|\le \frac{E^{+} + 1}{E^{+} - E + 1}$. When $E^{+}=E$, the statement reduces to if $\ell\nsubseteq B_E$, then $|\ell\cap B_E|\leq E+1$.
\end{lemma}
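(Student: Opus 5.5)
The plan is a double-counting argument over coordinates: at each coordinate the entry of $\mathbf{a}+\alpha\mathbf{b}$ is affine in $\alpha$, so it can vanish for at most one value of $\alpha$ unless it vanishes identically.

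Let $D:=\{j\in[n]: (a_j,b_j)\neq(0,0)\}$. Every point of $\ell$ has support contained in $D$. For $j\in D$ there are two cases.
\begin{itemize}
\item If $b_j\neq 0$, then $a_j+\alpha b_j=0$ for exactly one $\alpha$.
\item If $b_j=0$, then $a_j\neq 0$ and the coordinate never vanishes.
\end{itemize}
So each $j\in D$ is a zero coordinate of at most one point of $\ell$.

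Next I will use the hypothesis $\ell\nsubseteq B_{E^+}$. Some point of $\ell$ has weight at least $E^++1$, and its support lies in $D$, so $|D|\ge E^++1$.

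Now let $\alpha_1,\dots,\alpha_K$ be distinct with $\wt(\mathbf{a}+\alpha_i\mathbf{b})\le E$. Since $\mathbf{b}\neq\mathbf{0}$, distinct $\alpha_i$ give distinct points, so $K=|\ell\cap B_E|$. Each such point has support inside $D$ of size at most $E$, hence at least $|D|-E$ zero coordinates inside $D$. Summing over $i$ and using the disjointness from the first step gives
\[
K\,(|D|-E)\le |D|.
\]
Because $|D|\ge E^++1>E$, this yields $K\le \frac{|D|}{|D|-E}$.

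The only care needed is the final monotonicity step. The map $x\mapsto x/(x-E)$ is decreasing for $x>E$, so we may replace $|D|$ by its lower bound $E^++1$. This gives
\[
K\le \frac{E^++1}{E^+-E+1}.
\]
When $E^+=E$ the bound becomes $E+1$, which is the stated special case.
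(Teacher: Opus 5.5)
Your proof is correct and is essentially the paper's argument. The paper's quantity $t+\wt(\mathbf{b})$, with $T=\supp(\mathbf{b})$ and $t=|\supp(\mathbf{a})\setminus T|$, is exactly your $|D|$, and it runs the same double count of vanishing coordinates (bounded by $\wt(\mathbf{b})\le |D|$) followed by the same monotonicity step for $x\mapsto x/(x-E)$.
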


\begin{proof}
Although this result appears implicitly in other references, we present it here for completeness. 
Let $T:=\supp(\mathbf{b})$ and $t:=|\supp(\mathbf{a})\setminus T|$. For each $\alpha\in\mathbb{F}_q$, we have $\wt(\mathbf{a}+\alpha \mathbf{b})=t+\wt(\mathbf{b})-|\{i\in T:a_i+\alpha b_i=0\}|$ and $\wt(\mathbf{b})=|T|=\Sigma_{\alpha\in \F_q} |\{i\in T:a_i+\alpha b_i=0\}|$. Since $\ell\nsubseteq B_{E^{+}}$, there exists $\alpha_0\in\mathbb{F}_q$ such that $\wt(\mathbf{a}+\alpha_0 \mathbf{b})>E^{+}$,
which implies
\[
t+\wt(\mathbf{b})-|\{i\in T:a_i+\alpha_0 b_i=0\}|>E^{+},
\qquad\text{hence}\qquad
t+\wt(\mathbf{b})\ge E^{+}+1.
\]
For any $\mathbf{a}+\alpha\mathbf{b}\in \ell\cap B_E$, we have $|\{i\in T:a_i+\alpha b_i=0\}|\ge t+\wt(\mathbf{b})-E$. 
Therefore,
\[
\wt(\mathbf{b})=\sum_{\alpha\in\mathbb{F}_q} |\{i\in T:a_i+\alpha b_i=0\}|
\ge \sum_{\mathbf{a}+\alpha\mathbf{b}\in \ell\cap B_E} |\{i\in T:a_i+\alpha b_i=0\}|
\ge |\ell\cap B_E|(t+\wt(\mathbf{b})-E).
\]
It follows that $|\ell\cap B_E|\le \frac{t+\wt(\mathbf{b})}{t+\wt(\mathbf{b})-E}$. Since the function $x\mapsto x/(x-E)$ is decreasing for $x>E$, and $t+\wt(\mathbf{b})\ge E^{+}+1$, we obtain
\[
|\ell\cap B_E|
\le
\frac{t+\wt(\mathbf{b})}{t+\wt(\mathbf{b})-E}
\le
\frac{E^{+} + 1}{E^{+} - E + 1}.
\]
This completes the proof.
\end{proof}

Based on the proof of Lemma~\ref{lem:gap-line-ball}, we prove in Appendix~\ref{appendix:A} that, for the proximity-gap property over constant-size alphabets, a positive slack term, namely $E^{+}>E$ is necessary.

\begin{thm}\label{thm:no-slack-obstruction}
Let $E,K$ be integers satisfying $1 \le K \le E+1 < n$ and assume $K<q$. Fix pairwise distinct elements $\alpha_1,\dots,\alpha_K \in \F_q$. Then there exist vectors $\mathbf{x}_1,\mathbf{x}_2 \in \F_q^n$ such that $\wt(\mathbf{x}_1+\alpha_j \mathbf{x}_2)=E$ for every $j\in [K]$ and $\wt(\mathbf{x}_1+\alpha \mathbf{x}_2)=E+1$ for every other $\alpha\in \F_q$.

Consequently, for every linear code $\cC\subseteq \F_q^n$ with parity-check matrix $H$ and minimum distance $d(\cC)\geq 2E+2$, the associated syndrome line $L:=\{Hx_1+\alpha Hx_2:\alpha\in\mathbb F_q\}\subseteq \mathbb F_q^r$ satisfies $|L\cap H_E|\ge K$ and $L\not\subseteq H_E$. In particular, $L$ violates the strong $(E,E,(K-1)/q)$ proximity-gap conclusion.
\end{thm}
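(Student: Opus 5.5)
The plan is an explicit construction of $\mathbf{x}_1,\mathbf{x}_2$ in which each coordinate of $\mathbf{x}_1+\alpha\mathbf{x}_2$ can vanish for at most one value of $\alpha$, as in the proof of Lemma~\ref{lem:gap-line-ball}. Using the syndrome characterization, the minimum-distance assumption then transfers the weight pattern along the line to syndrome space.

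\textbf{Construction.} Since $E+1<n$, we may work on the first $E+1$ coordinates and put zeros elsewhere.
\begin{itemize}
\item For $j\in[K]$, set $(\mathbf{x}_2)_j=1$ and $(\mathbf{x}_1)_j=-\alpha_j$.
\item For $K<i\le E+1$, set $(\mathbf{x}_2)_i=0$ and $(\mathbf{x}_1)_i=1$.
\item Set all remaining coordinates of both vectors to $0$.
\end{itemize}
Since $K\ge 1$, we have $\mathbf{x}_2\neq \mathbf{0}$.

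\textbf{Weight along the line.} For any $\alpha\in\F_q$, the last $E+1-K$ coordinates of $\mathbf{x}_1+\alpha\mathbf{x}_2$ are identically $1$, so they always contribute exactly $E+1-K$ to the weight. Coordinate $j\in[K]$ equals $\alpha-\alpha_j$, so it vanishes exactly when $\alpha=\alpha_j$. Because the $\alpha_j$ are pairwise distinct, at most one of these coordinates vanishes for a given $\alpha$. Hence
\[
\wt(\mathbf{x}_1+\alpha\mathbf{x}_2)=\begin{cases} E, & \alpha\in\{\alpha_1,\ldots,\alpha_K\},\\ E+1, & \text{otherwise.}\end{cases}
\]
This proves the first claim.

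\textbf{Passing to syndrome space.} For each $j$, Lemma~\ref{lem:syndrome-characterization} (or directly the definition of $H_E$) gives $H\mathbf{x}_1+\alpha_jH\mathbf{x}_2=H(\mathbf{x}_1+\alpha_j\mathbf{x}_2)\in H_E$. These $K$ syndromes are distinct points of $L$: a coincidence would give $(\alpha_j-\alpha_{j'})H\mathbf{x}_2=0$, so $H\mathbf{x}_2=0$. Then $\mathbf{x}_2\in\cC$ is a nonzero codeword of weight at most $K\le E+1<d(\cC)$, which is impossible. Thus $|L\cap H_E|\ge K$.

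\textbf{The missing point.} Since $K<q$, there is some $\alpha\notin\{\alpha_j\}$. Suppose $H(\mathbf{x}_1+\alpha\mathbf{x}_2)=H\mathbf{z}$ with $\wt(\mathbf{z})\le E$. Then $\mathbf{x}_1+\alpha\mathbf{x}_2-\mathbf{z}\in\cC$ has weight at most $(E+1)+E=2E+1<d(\cC)$, so it is zero. That forces $\mathbf{x}_1+\alpha\mathbf{x}_2=\mathbf{z}$, which has weight $E+1>E$, a contradiction. Hence $L\not\subseteq H_E$.

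\textbf{Violation of the gap conclusion.} We have $|L\cap H_E|/|L|\ge K/q>(K-1)/q$ while $L\not\subseteq H_E$, so the $(E,E,(K-1)/q)$ conclusion fails. One should check that $|L|=q$, which holds since $H\mathbf{x}_2\neq 0$. Moreover, the $K$ syndromes being distinct already shows $|L\cap H_E|\ge 2$ whenever $K\ge 2$, so by Lemma~\ref{lem:degenerate-lines} the line is nondegenerate in that case. When $K=1$ the claim is trivially consistent.

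\textbf{Main obstacle.} This is the choice of padding coordinates: they must have $\mathbf{x}_2$-entry $0$ so that they never cancel for any $\alpha$, rather than adding extra vanishing values. The rest is routine.
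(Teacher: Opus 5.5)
Your proposal is correct and follows the paper's proof essentially step for step. You use the same construction: $\mathbf{x}_2$ is supported on $K$ coordinates where $\mathbf{x}_1=-\alpha_t$, plus $E+1-K$ padding coordinates where $\mathbf{x}_1=1$ and $\mathbf{x}_2=0$. You use the same distinctness argument via the low-weight nonzero codeword $(\alpha_j-\alpha_{j'})\mathbf{x}_2$, and the same $2E+1<d(\cC)$ argument for the point of $L$ outside $H_E$. Your explicit check that $|L|=q$ (because $H\mathbf{x}_2\neq\mathbf{0}$) is a small detail the paper leaves implicit.
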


\begin{defn}[Witness matrix]\label{def:witness-matrix}
Fix a nondegenerate affine syndrome line $L=\{\mathbf{s}_0+\alpha \mathbf{s}_1:\alpha\in\mathbb{F}_q\}\subseteq\mathbb{F}_q^r$ and pairwise distinct elements $\alpha_1,\dots,\alpha_K\in\mathbb{F}_q$.
A matrix $X=[\mathbf{x}_1|\cdots|\mathbf{x}_K]\in\mathbb{F}_q^{n\times K}$ is called an $E$-witness matrix for $(L;\alpha_1,\dots,\alpha_K)$ if, for every $j\in [K]$,
\[
H\mathbf{x}_j=\mathbf{s}_0+\alpha_j \mathbf{s}_1,
\qquad
\wt(\mathbf{x}_j)\le E.
\]
We say that a matrix $X=[\mathbf{x}_1|\cdots|\mathbf{x}_K]\in\mathbb{F}_q^{n\times K}$ is an $E$-witness matrix covering $K$ elements of $L$ if there exist pairwise distinct elements $\alpha_1,\ldots,\alpha_K\in\mathbb F_q$ such that
\[
H\mathbf{x}_j=\mathbf{s}_0+\alpha_j \mathbf{s}_1,\qquad\wt(\mathbf{x}_j)\le E,\qquad j\in[K].
\]
\end{defn}

\begin{thm}\label{thm:rank-reduction}
Let $\cC\subseteq\F_q^n$ be a linear code with minimum distance $d(\cC)\ge d$ and $0< E<d$, $\gamma:=\frac{d-E}{d}\in(0,1)$. Let $L=\{\mathbf{s}_0+\alpha \mathbf{s}_1:\alpha\in \F_q\}\subseteq \F_q^r$ be a nondegenerate affine syndrome line. Suppose that $X=[\mathbf{x}_1|\ldots|\mathbf{x}_K]\in \F_q^{n\times K}$ is an $E$-witness matrix covering $K$ elements of $L$ and $\mathrm{rank}(X)=t\geq 3$. Then there exists a subset $J\subseteq[K]$ such that $|J|\ge K\gamma$ and the submatrix $X_J := [\mathbf{x}_j]_{j\in J}$ is still an $E$-witness matrix covering $|J|$ elements of the same syndrome line $L$ with $\rank(X_J)\le t-1$.
\end{thm}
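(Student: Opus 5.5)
The plan is to find a single coordinate $i\in[n]$ such that many columns of $X$ vanish at $i$, while the column space of $X$ does not lie entirely inside the hyperplane $\{\mathbf{v}:v_i=0\}$. The columns vanishing at $i$ then lie in a proper subspace of $\mathrm{colspace}(X)$, so they have rank at most $t-1$. They are a subset of the original witness columns, so they still form an $E$-witness matrix for $L$: they carry the same distinct values $\alpha_j$ and the same weights.

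\textbf{Step 1: a codeword in the column space.} Let $W:=\mathrm{colspace}(X)$, so $\dim W=t$. Every column satisfies $H\mathbf{x}_j=\mathbf{s}_0+\alpha_j\mathbf{s}_1\in\mathrm{span}\{\mathbf{s}_0,\mathbf{s}_1\}$. Hence $H(W)\subseteq \mathrm{span}\{\mathbf{s}_0,\mathbf{s}_1\}$, which has dimension at most $2$. Therefore
\[
\dim(W\cap\ker H)\ge t-2\ge 1,
\]
and this is exactly where the hypothesis $t\ge 3$ enters. Pick a nonzero $\mathbf{c}\in W\cap \cC$. Then $\wt(\mathbf{c})\ge d(\cC)\ge d$, so the set $T:=\supp(\mathbf{c})$ satisfies $|T|\ge d$.

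\textbf{Step 2: averaging over $T$.} The total number of nonzero entries of $X$ is at most $KE$, since each column has weight at most $E$. In particular, the nonzero entries of $X$ lying in rows indexed by $T$ number at most $KE$. Averaging over the at least $d$ rows in $T$, some $i\in T$ has row $X_{i,*}$ with at most $KE/d$ nonzero entries. Let $J:=\{j\in[K]: X_{i,j}=0\}$. Then
\[
|J|\ge K-\frac{KE}{d}=K\gamma .
\]

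\textbf{Step 3: rank drop.} Every column $\mathbf{x}_j$ with $j\in J$ lies in $W_i:=W\cap\{\mathbf{v}:v_i=0\}$. Since $\mathbf{c}\in W$ and $c_i\neq 0$, the coordinate functional $\mathbf{v}\mapsto v_i$ is nonzero on $W$. Hence $W_i$ is a hyperplane of $W$ and $\rank(X_J)\le \dim W_i=t-1$. The columns of $X_J$ keep their pairwise distinct $\alpha_j$ and weights at most $E$, so $X_J$ is an $E$-witness matrix covering $|J|$ elements of $L$. This completes the argument.

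The only delicate point is Step 1: choosing the coordinate $i$ inside the support of a codeword lying in $W$. This choice is what guarantees simultaneously that the chosen row is nonzero on $W$, giving the rank drop, and that the averaging runs over at least $d$ rows, giving the factor $\gamma$. Once this choice is made, the remaining steps are routine counting.
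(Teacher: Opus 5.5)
Your argument is correct and is essentially the paper's proof. The paper's codeword $\mathbf{c}_t$, obtained by extending a basis of $\mathrm{Row}(Y)$ to one of $\mathrm{Row}(X)$, is exactly a nonzero element of $W\cap\ker H$, the same pigeonhole over its support produces $J$, and your hyperplane argument in Step 3 is a cleaner, coordinate-free replacement for the paper's explicit substitution for $(\mathbf{w}_t)_j$ (which, incidentally, does not need nondegeneracy of $L$).
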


\begin{proof}
Since $X$ is an $E$-witness matrix covering $K$ elements of $L$, there exist pairwise distinct elements
$\alpha_1,\ldots,\alpha_K\in\mathbb F_q$ such that $H\mathbf{x}_j=\mathbf{s}_0+\alpha_j \mathbf{s}_1$ and $\mathrm{wt}(\mathbf{x}_j)\leq E$ for all $j\in[K]$. Define $Y:=[\mathbf{s}_0+\alpha_1 \mathbf{s}_1\mid\cdots\mid \mathbf{s}_0+\alpha_K \mathbf{s}_1]\in\mathbb{F}_q^{r\times K}$, and let $\mathbf{u}:=(1,\dots,1)\in\mathbb{F}_q^K,\mathbf{v}:=(\alpha_1,\dots,\alpha_K)\in\mathbb{F}_q^K$. Since the elements $\alpha_1,\ldots,\alpha_K$ are pairwise distinct and $\dim_{\F_q}\mathrm{span}\{\mathbf{s}_0,\mathbf{s}_1\}=2$, we have $Y=\mathbf{s}_0\cdot \mathbf{u}^\top+\mathbf{s}_1\cdot \mathbf{v}^\top$ and $\mathrm{Row}(Y)=\mathrm{span}\{\mathbf{u},\mathbf{v}\}\subseteq \mathrm{Row}(X)$. 

Extend $\{\mathbf{u},\mathbf{v}\}$ to a basis $\mathbf{u},\ \mathbf{v},\ \mathbf{w}_3,\dots,\mathbf{w}_t$ of $\mathrm{Row}(X)$. Then there exist vectors $\mathbf{a},\mathbf{b},\mathbf{c}_3,\dots,\mathbf{c}_t\in\F_q^n$ such that $X = \mathbf{a}\cdot \mathbf{u}^\top + \mathbf{b}\cdot \mathbf{v}^\top + \sum_{\ell=3}^t \mathbf{c}_\ell\cdot \mathbf{w}_\ell^\top$. Since $HX=Y$, it follows that
\[
(H\mathbf{a}-\mathbf{s}_0)\cdot \mathbf{u}^\top + (H\mathbf{b}-\mathbf{s}_1)\cdot \mathbf{v}^\top
+ \sum_{\ell=3}^t (H\mathbf{c}_\ell)\cdot \mathbf{w}_\ell^\top = \mathbf{0}.
\]
By linear independence, $H\mathbf{a}=\mathbf{s}_0, H\mathbf{b}=\mathbf{s}_1, H\mathbf{c}_\ell=\mathbf{0}\quad(\ell=3,\dots,t)$, so each $\mathbf{c}_\ell$ is a codeword. If $\mathbf{c}_t=\mathbf{0}$, then $\rank(X)\le t-1$, and there is nothing to prove. Thus, we assume $\mathbf{c}_t\neq \mathbf{0}$. Let $S:=\supp(\mathbf{c}_t)$ and $|S|=\wt(\mathbf{c}_t)\geq d$.

Each column $\mathbf{x}_j$ has weight at most $E$, and therefore has at least $|S|-E$ zero coordinates in $S$. By the pigeonhole principle, there exists $h\in S$ such that
\[
|\{j\in[K]:(\mathbf{x}_j)_h=0\}|
\ge
K\cdot \frac{|S|-E}{|S|}
\ge
K\cdot \frac{d-E}{d}
=
K\gamma.
\]
Let $J:=\{j\in[K]:(\mathbf{x}_j)_h=0\}$ and $|J|\ge K\gamma$. For each $j \in J$, the $j$-th column of $X$ is $\mathbf{x}_j = \mathbf{a}+\alpha_j \mathbf{b}+\sum_{\ell=3}^{t-1}(\mathbf{w}_\ell)_j \cdot \mathbf{c}_\ell + (\mathbf{w}_t)_j\cdot \mathbf{c}_t$. Since $(\mathbf{x}_j)_h=0$ for every $j\in J$ and $(\mathbf{c}_t)_h\neq 0$, we may solve for $(\mathbf{w}_t)_j$:
\[
(\mathbf{w}_t)_j
=
-\frac{a_h+\alpha_j b_h+\sum_{\ell=3}^{t-1}(\mathbf{w}_\ell)_j\cdot (\mathbf{c}_\ell)_h}{(\mathbf{c}_t)_h}.
\]
Substituting this into the above expression gives, for every $j\in J$,
\[
\mathbf{x}_j
=
\Bigl(\mathbf{a}-\frac{a_h}{(\mathbf{c}_t)_h}\mathbf{c}_t\Bigr)
+\alpha_j\Bigl(\mathbf{b}-\frac{b_h}{(\mathbf{c}_t)_h}\mathbf{c}_t\Bigr)
+\sum_{\ell=3}^{t-1}(\mathbf{w}_\ell)_j
\Bigl(\mathbf{c}_\ell-\frac{(\mathbf{c}_\ell)_h}{(\mathbf{c}_t)_h}\mathbf{c}_t\Bigr).
\]
Define
\[
\mathbf{a}':=\mathbf{a}-\frac{\mathbf{a}_h}{(\mathbf{c}_t)_h}\mathbf{c}_t,\qquad
\mathbf{b}':=\mathbf{b}-\frac{\mathbf{b}_h}{(\mathbf{c}_t)_h}\mathbf{c}_t,\qquad
\mathbf{c}'_\ell:=\mathbf{c}_\ell-\frac{(\mathbf{c}_{\ell})_h}{(\mathbf{c}_t)_h}\mathbf{c}_t
\quad (\ell=3,\dots,t-1).
\]
Then for all $j\in J$ we have $\mathbf{x}_j = \mathbf{a}' + \alpha_j \mathbf{b}' + \sum_{\ell=3}^{t-1}(\mathbf{w}_\ell)_j\cdot \mathbf{c}_\ell'$.
This means that let $\mathbf{u}_J,\mathbf{v}_J,(\mathbf{w}_{\ell})_J\in\mathbb F_q^{|J|}$ denote the restrictions of $\mathbf{u},\mathbf{v},\mathbf{w}_\ell$ to the coordinate set $J$, then $X_J= \mathbf{a}'\cdot \mathbf{u}_J^\top + \mathbf{b}'\cdot \mathbf{v}_J^\top + \sum_{\ell=3}^{t-1} \mathbf{c}_\ell'\cdot (\mathbf{w}_{\ell})_J^\top$. Thus, the row vectors in $X_J$ are the linear combinations of $\mathbf{u}_J,\mathbf{v}_J,(\mathbf{w}_{3})_J,\dots,(\mathbf{w}_{t-1})_J$, and therefore $\mathrm{rank}(X_J)\le t-1.$

Moreover, since $H\mathbf{c}_t=\mathbf{0}$, we have $H\mathbf{a}'=H\mathbf{a}=\mathbf{s}_0,\, H\mathbf{b}'=H\mathbf{b}=\mathbf{s}_1,\, H\mathbf{c}_\ell'=H\mathbf{c}_\ell=\mathbf{0}$. Since the weight of each column in $X_J$ is at most $E$, $X_J$ remains an $E$-witness matrix for the same syndrome line $L$.
\end{proof}

\begin{cor} \label{cor:rank-two-extraction}
Under the assumptions of~\cref{thm:rank-reduction}, suppose that $\rank(X)=t\ge 2$. Then there exists a subset $J\subseteq[K]$ with $|J|\ge K\gamma^{\,t-2}$ and vectors $\widetilde{\mathbf{a}},\widetilde{\mathbf{b}}\in\mathbb{F}_q^n$, with $\widetilde{\mathbf{b}}\neq \mathbf{0}$, such that for all $j\in J$, $\mathbf{x}_j=\widetilde{\mathbf{a}}+\alpha_j \widetilde{\mathbf{b}},\,H\widetilde{\mathbf{a}}=\mathbf{s}_0,\,H\widetilde{\mathbf{b}}=\mathbf{s}_1$. In particular, the surviving witness vectors lie on an affine line $\widetilde\ell:=\{\widetilde{\mathbf{a}}+\alpha \widetilde{\mathbf{b}}:\alpha\in\mathbb{F}_q\}$
satisfying $\widetilde\ell_H=L$.

\end{cor}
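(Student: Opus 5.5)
We will prove the corollary by iterating \cref{thm:rank-reduction} until the rank drops to $2$, and then reading off the affine form of the surviving columns directly. We argue by induction on $t=\rank(X)$, with the inductive claim being exactly the corollary's conclusion.

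\emph{Base case $t=2$.} Here $J=[K]$ and $|J|=K=K\gamma^0$. As in the proof of \cref{thm:rank-reduction}, we have $\mathrm{Row}(Y)=\mathrm{span}\{\mathbf{u},\mathbf{v}\}\subseteq \mathrm{Row}(X)$, where $\mathbf{u}=(1,\dots,1)$ and $\mathbf{v}=(\alpha_1,\dots,\alpha_K)$. The row space $\mathrm{span}\{\mathbf{u},\mathbf{v}\}$ is two-dimensional: this needs the $\alpha_j$ not all equal, i.e.\ $K\ge 2$, which is automatic once $\rank(X)\ge 2$, since a single column has rank at most $1$. Because $\rank(X)=2$, we get $\mathrm{Row}(X)=\mathrm{span}\{\mathbf{u},\mathbf{v}\}$, so $X=\mathbf{a}\mathbf{u}^\top+\mathbf{b}\mathbf{v}^\top$. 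That is, $\mathbf{x}_j=\mathbf{a}+\alpha_j\mathbf{b}$ for every $j$. Applying $H$ and using the linear independence of $\mathbf{u},\mathbf{v}$ gives $H\mathbf{a}=\mathbf{s}_0$ and $H\mathbf{b}=\mathbf{s}_1$. Finally, $\mathbf{b}\neq\mathbf{0}$ because $\mathbf{s}_1\neq \mathbf{0}$ by nondegeneracy of $L$. Set $\widetilde{\mathbf{a}}=\mathbf{a}$ and $\widetilde{\mathbf{b}}=\mathbf{b}$.

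\emph{Inductive step $t\ge3$.} Apply \cref{thm:rank-reduction} to obtain $J_1$ with $|J_1|\ge K\gamma$ such that $X_{J_1}$ is an $E$-witness matrix for the same line $L$ with $\rank(X_{J_1})\le t-1$. The only subtlety is that the rank might drop by more than one, or even below $2$. It cannot fall below $2$: $X_{J_1}$ is still a witness matrix covering $|J_1|\ge 2$ distinct points of $L$, so its row space contains $\mathrm{span}\{\mathbf{u}_{J_1},\mathbf{v}_{J_1}\}$, which has dimension $2$. This requires $|J_1|\ge 2$; if $K\gamma^{t-2}$ were below $2$, we would need care, so we note that when $|J_1|\le1$ the desired bound is handled separately.

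We therefore let $t_1:=\rank(X_{J_1})\in[2,t-1]$ and apply the inductive hypothesis to $X_{J_1}$, with the original labels $\alpha_j$ kept. This yields $J\subseteq J_1$ with
\[
|J|\ge |J_1|\gamma^{t_1-2}\ge K\gamma\cdot\gamma^{t-3}=K\gamma^{t-2},
\]
using $\gamma\in(0,1)$ and $t_1\le t-1$, together with the desired $\widetilde{\mathbf{a}},\widetilde{\mathbf{b}}$. The statement $\widetilde\ell_H=L$ then follows immediately from $H\widetilde{\mathbf{a}}=\mathbf{s}_0$ and $H\widetilde{\mathbf{b}}=\mathbf{s}_1$.

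\emph{Main obstacle.} The main care point is the degenerate small-$|J|$ situation: a subset of size $1$ has rank $1$, which breaks the rank-$\ge2$ invariant. I would handle this by noting that the inductive step is invoked only while the current rank is at least $3$. Moreover, if at any stage the surviving set has size at most $1$, then $K\gamma^{t-2}\le 1$, and a single column $\mathbf{x}_j$ trivially has the form $\widetilde{\mathbf{a}}+\alpha_j\widetilde{\mathbf{b}}$ with $H\widetilde{\mathbf{a}}=\mathbf{s}_0$ and $H\widetilde{\mathbf{b}}=\mathbf{s}_1$. To see this, pick any $\widetilde{\mathbf{b}}$ with $H\widetilde{\mathbf{b}}=\mathbf{s}_1$, which exists since the rank-$\ge 2$ witness matrix $X$ already provides one, and set $\widetilde{\mathbf{a}}=\mathbf{x}_j-\alpha_j\widetilde{\mathbf{b}}$. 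So the conclusion holds in all cases.
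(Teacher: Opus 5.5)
Your proof is correct and follows the paper's route: you iterate \cref{thm:rank-reduction} until the rank reaches $2$, then read off $\mathbf{x}_j=\widetilde{\mathbf{a}}+\alpha_j\widetilde{\mathbf{b}}$ from $\mathrm{Row}(X_J)=\mathrm{span}\{\mathbf{u}_J,\mathbf{v}_J\}$. Your induction is more careful than the paper's one-line ``after $t-2$ iterations'' count, because it explicitly handles a rank drop of more than one per step (via $\gamma^{t_1-2}\ge\gamma^{t-3}$) and the degenerate case where only a single column survives.
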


\begin{proof}
If $t=2$, the conclusion is immediate. Otherwise, apply
Lemma~\ref{thm:rank-reduction} repeatedly until the rank drops to $2$.
Each step preserves at least a $\gamma$-fraction of the columns, so after $t-2$ iterations,
at least $K\gamma^{t-2}$ columns remain. When we arrive at rank $2$, by the proof of Lemma~\ref{thm:rank-reduction}, the witness vectors are of the form $\mathbf{x}_j=\widetilde{\mathbf{a}}+\alpha_j\widetilde{\mathbf{b}}$, with $H\widetilde{\mathbf{a}}=\mathbf{s}_0$ and $H\widetilde{\mathbf{b}}=\mathbf{s}_1$.
\end{proof}

\begin{thm}\label{thm:min-rank-threshold}
Let $\cC$ be a linear code with parity-check matrix $H$ and minimum distance $d(\cC)\ge d$, and let $0<E\leq E^{+}<d$, $K\ge 2$. Let $\gamma:=\frac{d-E}{d},\,
B_{E,E^{+}}:=
\left\lfloor \frac{E^{+} + 1}{E^{+} - E + 1}\right\rfloor.$
Let $L=\{\mathbf{s}_0+\alpha \mathbf{s}_1:\alpha \in \F_q\}\subseteq\mathbb{F}_q^r$ be a nondegenerate syndrome line. Suppose there exists an $E$-witness matrix $X$ of rank $t$ covering $K$ elements of $L$ as in Definition~\ref{def:witness-matrix}. If $L\nsubseteq H_{E^+}$, then $K\cdot \gamma^{\,t-2}\le B_{E,E^{+}}$.
\end{thm}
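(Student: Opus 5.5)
The plan is to combine Corollary~\ref{cor:rank-two-extraction} with Lemma~\ref{lem:gap-line-ball}, passing from the syndrome line $L$ to an affine line in word space.

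\emph{Step 1: extract a large rank-two sub-witness.} First we check that $t\ge 2$. Since $L$ is nondegenerate, the proof of Theorem~\ref{thm:rank-reduction} shows $\mathrm{Row}(Y)=\mathrm{span}\{\mathbf{u},\mathbf{v}\}\subseteq \mathrm{Row}(X)$. Because $K\ge 2$ and the $\alpha_j$ are distinct, $\mathbf{u}$ and $\mathbf{v}$ are independent, so $t\ge 2$. Corollary~\ref{cor:rank-two-extraction} then gives $J\subseteq[K]$ with $|J|\ge K\gamma^{t-2}$ and vectors $\widetilde{\mathbf{a}},\widetilde{\mathbf{b}}$ with $\widetilde{\mathbf{b}}\ne\mathbf{0}$, $H\widetilde{\mathbf{a}}=\mathbf{s}_0$ and $H\widetilde{\mathbf{b}}=\mathbf{s}_1$, such that $\mathbf{x}_j=\widetilde{\mathbf{a}}+\alpha_j\widetilde{\mathbf{b}}$ for $j\in J$.

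\emph{Step 2: count points of $\widetilde\ell$ inside $B_E$.} Put $\widetilde\ell=\{\widetilde{\mathbf{a}}+\alpha\widetilde{\mathbf{b}}\}$. The points $\mathbf{x}_j$ for $j\in J$ have weight at most $E$. They correspond to distinct parameters $\alpha_j$, and since $\widetilde{\mathbf{b}}\ne\mathbf{0}$ they are distinct vectors. Hence $|\widetilde\ell\cap B_E|\ge |J|$.

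\emph{Step 3: show $\widetilde\ell\nsubseteq B_{E^+}$.} Suppose instead that every $\widetilde{\mathbf{a}}+\alpha\widetilde{\mathbf{b}}$ has weight at most $E^+$. Then $\mathbf{s}_0+\alpha\mathbf{s}_1=H(\widetilde{\mathbf{a}}+\alpha\widetilde{\mathbf{b}})\in H_{E^+}$ for every $\alpha$, so $L\subseteq H_{E^+}$, which contradicts the hypothesis. Lemma~\ref{lem:gap-line-ball} therefore applies and gives $|\widetilde\ell\cap B_E|\le \frac{E^++1}{E^+-E+1}$. Because the left side is an integer, this improves to $\le B_{E,E^+}$.

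Combining the steps gives $K\gamma^{t-2}\le|J|\le B_{E,E^+}$. The hypothesis $E^+<d$ is not needed beyond ensuring $\gamma$ is well defined. The only real obstacle is bookkeeping: confirming that the corollary's output line maps onto $L$ itself, so that $L\nsubseteq H_{E^+}$ forces $\widetilde\ell\nsubseteq B_{E^+}$. The corollary guarantees this through $H\widetilde{\mathbf{a}}=\mathbf{s}_0$ and $H\widetilde{\mathbf{b}}=\mathbf{s}_1$.
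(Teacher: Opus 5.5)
Your proposal is correct and is essentially the paper's proof. You extract the rank-two sub-witness via Corollary~\ref{cor:rank-two-extraction}, use $H\widetilde{\mathbf{a}}=\mathbf{s}_0$ and $H\widetilde{\mathbf{b}}=\mathbf{s}_1$ to turn $L\nsubseteq H_{E^+}$ into $\widetilde\ell\nsubseteq B_{E^+}$, and then apply Lemma~\ref{lem:gap-line-ball}; your extra checks ($t\ge 2$, distinctness of the $\mathbf{x}_j$, and rounding down to the integer $B_{E,E^+}$) only make explicit details the paper leaves implicit.
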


\begin{proof}
By Corollary~\ref{cor:rank-two-extraction}, we can find pairwise distinct elements $\alpha_1,\ldots,\alpha_K\in\mathbb F_q$ associated with the $K$ elements covered by $X=[\mathbf{x}_1|\ldots|\mathbf{x}_K]$, and a subset $J\subseteq[K]$ with $|J|\ge K\gamma^{t-2}$, such that the corresponding witness vectors $\mathbf{x}_j$ lie on an affine line $\widetilde\ell=\{\widetilde{\mathbf{a}}+\alpha\widetilde{\mathbf{b}}:\alpha\in\mathbb{F}_q\}$ with $\widetilde\ell_H=L$ and for every $j\in J$, $\wt(\widetilde{\mathbf{a}}+\alpha_j\widetilde{\mathbf{b}})\le E$. Thus, $|\widetilde\ell\cap B_E|\ge |J|\ge K\gamma^{\,t-2}$.

If $\widetilde\ell\subseteq B_{E^{+}}$, then $L=\widetilde\ell_H\subseteq H_{E^+},$ contrary to the assumption. Hence $\widetilde\ell\nsubseteq B_{E^{+}}$, and
Lemma~\ref{lem:gap-line-ball} implies $K\gamma^{\,t-2}\le|\widetilde\ell\cap B_E|\le B_{E,E^{+}}$ as claimed.
\end{proof}

Before proceeding to the main probabilistic analysis, we present some auxiliary lemmas in Appendix~\ref{appendix:B} that will be useful for probability estimates below. Our following theorem utilizes Theorem~\ref{thm:min-rank-threshold} to derive a union bound for bad syndrome lines.

\begin{thm}\label{thm:bad-line-moment}
Let $0<E\le E^+$ and $E<d$. Let $s\ge 0$ be an integer, and define $\gamma:=\frac{d-E}{d},\,B_{E,E^+}:=\left\lfloor \frac{E^+ + 1}{E^+ - E + 1}\right\rfloor$. Assume that $K\ge 2$ with $K > B_{E,E^+}\gamma^{-s}$. Let $\cC$ be a random linear code with parity-check matrix $H$. Then the probability that there exists an affine syndrome line $L\subseteq \F_q^r$ such that $|L\cap H_E|\geq K$, $L\nsubseteq H_{E^+}$ and $d(\cC)\ge d$ is at most $q^{2r}\binom{q}{K}\binom{K}{s+3}|B_E|^{s+3}q^{-r(s+3)}$.
\end{thm}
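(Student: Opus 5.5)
Fix a bad event: some nondegenerate syndrome line $L=L(\mathbf{s}_0,\mathbf{s}_1)$ has $|L\cap H_E|\ge K$, $L\nsubseteq H_{E^+}$, and $d(\cC)\ge d$. The plan is to turn this event into the existence of a high-rank witness matrix whose image under $H$ is prescribed, and then take a union bound.

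First, pick $K$ distinct $\alpha_1,\dots,\alpha_K$ with $\mathbf{s}_0+\alpha_j\mathbf{s}_1\in H_E$. Choose $\mathbf{x}_j\in B_E$ with $H\mathbf{x}_j=\mathbf{s}_0+\alpha_j\mathbf{s}_1$, and let $X=[\mathbf{x}_1|\cdots|\mathbf{x}_K]$, an $E$-witness matrix of rank $t$. Since $\mathrm{Row}(X)\supseteq\mathrm{span}\{\mathbf{u},\mathbf{v}\}$, we have $t\ge 2$.

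Next, apply Theorem~\ref{thm:min-rank-threshold}, which gives $K\gamma^{t-2}\le B_{E,E^+}$. Since $K>B_{E,E^+}\gamma^{-s}$ and $\gamma\in(0,1)$, this forces $\gamma^{t-2}<\gamma^{s}$, so $t-2>s$, i.e.\ $t\ge s+3$.

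Now choose a set $I\subseteq[K]$ of $s+3$ indices whose columns $\mathbf{x}_i$ are linearly independent. The bad event then implies the following \emph{non-random} configuration: a pair $(\mathbf{s}_0,\mathbf{s}_1)\in(\F_q^r)^2$, a $K$-set $\{\alpha_1,\dots,\alpha_K\}\subseteq\F_q$, an $(s+3)$-subset $I\subseteq[K]$ (with an ordering matched to the $\alpha$'s), and linearly independent $\mathbf{x}_i\in B_E$ for $i\in I$, such that $H\mathbf{x}_i=\mathbf{s}_0+\alpha_i\mathbf{s}_1$ for all $i\in I$.

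For the union bound, count the configurations as follows.
\begin{itemize}
\item Syndrome pairs: at most $q^{2r}$.
\item $K$-sets of $\alpha$'s: $\binom{q}{K}$. Once the set is fixed, label it in increasing order under a fixed ordering of $\F_q$, so no $K!$ factor appears.
\item Index subsets $I$: $\binom{K}{s+3}$.
\item Tuples $(\mathbf{x}_i)_{i\in I}$: at most $|B_E|^{s+3}$.
\end{itemize}
For a fixed configuration, let $X_I$ have full column rank $s+3$ and let $Y_I$ be the prescribed target. Since $H$ is uniform in $\F_q^{r\times n}$, we get $\Pr[HX_I=Y_I]=q^{-r(s+3)}$: the map $H\mapsto HX_I$ is a surjective linear map onto $\F_q^{r\times(s+3)}$, so each row of $H$ independently hits its target with probability $q^{-(s+3)}$. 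Multiplying gives the stated bound. The condition $d(\cC)\ge d$ is used only deterministically, to invoke Theorem~\ref{thm:min-rank-threshold}, and is then dropped from the event.

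The main obstacle is the ordering of quantifiers. The union bound must run over the data $(\mathbf{s}_0,\mathbf{s}_1,\alpha,I,\mathbf{x})$, which are chosen before $H$, while the rank lower bound uses $H$ only through the fact that the bad event occurred. Writing the bad event as a union of events ``$HX_I=Y_I$'' indexed by $H$-independent data, and requiring linear independence in the index set, resolves this. It is also the reason for the full-rank restriction: without it, the probability would not factor as $q^{-r(s+3)}$.
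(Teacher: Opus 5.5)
Your proposal is correct and follows the paper's proof essentially step for step: you invoke Theorem~\ref{thm:min-rank-threshold} to force $\mathrm{rank}(X)\ge s+3$, extract $s+3$ linearly independent low-weight columns, apply Lemma~\ref{lem:random-image} to get $q^{-r(s+3)}$, and union bound over $(\mathbf{s}_0,\mathbf{s}_1)$, the $K$-set of parameters, the index subset, and the columns. Two implicit points are shared with the paper. First, degenerate lines are harmless by Lemma~\ref{lem:degenerate-lines}, since they cannot satisfy both $|L\cap H_E|\ge 2$ and $L\not\subseteq H_{E^+}$. Second, Theorem~\ref{thm:min-rank-threshold} is stated under $E^+<d$, which is not assumed here, but its proof never uses that hypothesis.
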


\begin{proof}
Fix an affine syndrome line $L=\{\mathbf{s}_0+\alpha \mathbf{s}_1:\alpha\in\mathbb F_q\}\subseteq \mathbb F_q^r$ and  a $K$-element subset $\{\alpha_1,\ldots,\alpha_K\}\subseteq\mathbb F_q$. Define the event
\[
\mathcal{A}(L;\alpha_1,\dots,\alpha_K)
:=
\bigl[\forall j\in[K],\ \mathbf{s}_0+\alpha_j \mathbf{s}_1\in H_E\bigr]
\wedge
\bigl[L\not\subseteq H_{E^+}\bigr]
\wedge
[d(\cC)\ge d].
\]
If $\mathcal A(L;\alpha_1,\dots,\alpha_K)$ occurs, then there exist vectors $\mathbf{x}_1,\dots,\mathbf{x}_K\in B_E$ such that $H\mathbf{x}_j=\mathbf{s}_0+\alpha_j \mathbf{s}_1$ where $j\in [K]$.

Since $K>B_{E,E^{+}}\cdot \gamma^{-s}$, Theorem~\ref{thm:min-rank-threshold} implies that no  matrix of rank at most $s+2$ is an $E$-witness matrix for $(L,\alpha_1,\ldots,\alpha_K)$. Hence every such $E$-witness matrix must satisfy $\mathrm{rank}(X)\ge s+3$. In particular, among the $K$ columns of $X$, we can choose a subset $J\subseteq[K]$ of size $s+3$ such that the columns $\{\mathbf{x}_j:j\in J\}$ are linearly independent. There are at most $\binom{K}{s+3}|B_E|^{s+3}$ possible choices for these columns.

By Lemma~\ref{lem:random-image}, the probability that these $s+3$ linearly independent vectors were mapped by $H$ to the prescribed syndrome columns $\{\mathbf{s}_0+\alpha_j \mathbf{s}_1:j\in J\}$ is $q^{-r(s+3)}$. Therefore,
\[
\Pr_{\cC}\bigl[\mathcal A(L;\alpha_1,\dots,\alpha_K)\bigr]
\le
\binom{K}{s+3}|B_E|^{s+3}q^{-r(s+3)}.
\]
Now we fix the line $L$. If $|L\cap H_E|\geq K,\, L\not\subseteq H_{E^+}$ and $d(\cC)\ge d$, then we can choose a $K$-element subset $\{\alpha_1,\dots,\alpha_K\}\subseteq \mathbb{F}_q$ such that $\mathbf{s}_0+\alpha_j \mathbf{s}_1\in H_E$ where $j\in[K]$.

Taking a union bound over all $\binom{q}{K}$ possible $K$-subsets $\{\alpha_1,\ldots,\alpha_K\}$ of $\mathbb{F}_q$ and all affine lines $L\subseteq \mathbb{F}_q^r$, the failure probability is at most 
\[
\sum_{L,(\alpha_1,\ldots,\alpha_K)}\Pr_{\cC}\!\bigl[\mathcal{A}(L;\alpha_1,\dots,\alpha_K)\bigr]\leq q^{2r}\binom{q}{K}\binom{K}{s+3}|B_E|^{s+3}q^{-r(s+3)}.
\]
This completes the proof.
\end{proof}

We are ready to present our proximity gap theorem for random linear codes.

\begin{thm}\label{thm:explicit-two-radius}
Fix $R\in(0,1)$ and $0<\varepsilon<(1-R)/2$. Define $a_\varepsilon:=\frac{\varepsilon}{\log_2(1/\varepsilon)}$. Assume that $0<\rho<1-R-\varepsilon-a_\varepsilon$. Let $\delta=1-R-a_\varepsilon,\,\ell=\left\lceil \frac{2(1-R)}{\varepsilon}\right\rceil-1,\,K=\left\lceil \left(1+\frac{\rho}{\varepsilon}\right) \left(\frac{\delta}{\delta-\rho}\right)^\ell \right\rceil,\,r=(1-R)n,\,E=\lfloor \rho n\rfloor,\,E^+:=E+\lceil \varepsilon n\rceil$ and let $q$ be a fixed prime power such that $q\ge \max\{\left(\frac{2}{\varepsilon}\right)^{1/\varepsilon},\ K+1\}$.

Then, with probability at least $1-q^{-\Omega(n)}$, a random linear code $\cC$ of rate $R$ and length $n$ has minimum distance at least $\lfloor\delta n\rfloor$ and satisfies the $(E,E^+,K/q)$-line proximity-gap property.

\end{thm}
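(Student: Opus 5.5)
The proof has two parts: a minimum-distance bound for the random code, and an application of Theorem~\ref{thm:bad-line-moment} with $s:=\ell$ and $d:=\lfloor\delta n\rfloor$. The line proximity-gap failure event is that some syndrome line $L$ has $|L\cap H_E|/q>K/q$, i.e.\ $|L\cap H_E|\ge K+1\ge K$, while $L\nsubseteq H_{E^+}$. So it suffices to bound
\[
\Pr[d(\cC)<d]+\Pr[\exists L:\ |L\cap H_E|\ge K,\ L\nsubseteq H_{E^+},\ d(\cC)\ge d].
\]
The degenerate lines are handled by Lemma~\ref{lem:degenerate-lines}, since such an $L$ meeting $H_E$ in at least $K\geq 2$ points lies entirely in $H_E\subseteq H_{E^+}$.

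\textbf{Minimum distance.} Use the standard union bound
\[
\Pr[d(\cC)<d]\le |B_{d-1}|\,q^{-r}\le q^{(H_q(\delta)-(1-R))n}.
\]
We need $H_q(\delta)<1-R$ with an $\Omega(1)$ margin, which is where $a_\varepsilon$ and the alphabet condition enter. Write $H_q(\delta)\le \delta+h_2(\delta)/\log_2 q$. With $\log_2 q\ge \frac1\varepsilon\log_2\frac2\varepsilon$, this gives $h_2(\delta)/\log_2q\le \varepsilon/\log_2(2/\varepsilon)<a_\varepsilon$. Hence $H_q(\delta)<1-R$ with a constant gap, and the failure probability is $q^{-\Omega(n)}$.

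\textbf{Checking the hypothesis of Theorem~\ref{thm:bad-line-moment}.} We need $K>B_{E,E^+}\gamma^{-\ell}$. First, $B_{E,E^+}\le (E^++1)/(\lceil\varepsilon n\rceil+1)\le 1+\rho/\varepsilon$ up to rounding. Second, $\gamma=(d-E)/d\approx(\delta-\rho)/\delta$. So the chosen $K$ matches $(1+\rho/\varepsilon)(\delta/(\delta-\rho))^\ell$. The floors in $d$ and $E$ need a small slack argument, which I would absorb by noting that $\gamma^{-\ell}$ changes by a factor $1+O(1/n)$; if needed, I would tweak the ceiling or take $n$ large. This rounding is fiddly but not the real difficulty.

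\textbf{Exponent bound (main obstacle).} The bound from Theorem~\ref{thm:bad-line-moment} is
\[
q^{2r}\binom{q}{K}\binom{K}{\ell+3}|B_E|^{\ell+3}q^{-r(\ell+3)}.
\]
Since $q$ and $K$ are constants, $\binom{q}{K}\binom{K}{\ell+3}=O(1)$. Using $|B_E|\le q^{H_q(\rho)n}$, the base-$q$ exponent is
\[
n\bigl[2(1-R)+(\ell+3)(H_q(\rho)-(1-R))\bigr].
\]
This must be $-\Omega(n)$, i.e.\ we need $(\ell+3)\bigl(1-R-H_q(\rho)\bigr)>2(1-R)$. Again $H_q(\rho)\le\rho+a_\varepsilon'$ with $a_\varepsilon'<a_\varepsilon$, so $1-R-H_q(\rho)>\varepsilon+(a_\varepsilon-a'_\varepsilon)>\varepsilon$. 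Since $\ell+3>2(1-R)/\varepsilon+1$, we get $(\ell+3)\varepsilon>2(1-R)$, but only by a margin of order $\varepsilon$. The product $(\ell+3)\cdot(\text{gap})$ then exceeds $2(1-R)$ by a positive constant depending on $R$ and $\varepsilon$, which yields $q^{-\Omega(n)}$.

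I expect the delicate point to be the precise entropy estimate $h_2(x)/\log_2 q< a_\varepsilon$ under $q\ge(2/\varepsilon)^{1/\varepsilon}$, valid uniformly for $x\le\delta$. I would verify it using $h_2\le1$. Since $\log_2 q\ge\frac{1}{\varepsilon}\log_2(2/\varepsilon)>\frac1\varepsilon\log_2(1/\varepsilon)$, this gives exactly $h_2/\log_2q<a_\varepsilon$ with strict inequality. Summing the two failure probabilities then completes the proof.
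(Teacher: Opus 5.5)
Your proposal is correct and follows the paper's proof essentially step for step. The minimum distance is handled by a union bound $|B_d|q^{-r}$ with margin $a_\varepsilon-b_\varepsilon$; your $\varepsilon/\log_2(2/\varepsilon)$ is exactly the paper's $b_\varepsilon=\varepsilon/(1+\log_2(1/\varepsilon))$. The bad lines are then handled by Theorem~\ref{thm:bad-line-moment} with $s=\ell$ and the same exponent count $2(1-R)+(\ell+3)(\rho+b_\varepsilon-(1-R))<0$.

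One small difference concerns rounding. The paper invokes the moment bound at the true failure threshold $K+1$, which absorbs the floor/ceiling rounding in $B_{E,E^+}\gamma^{-\ell}$ for free. You weakened to threshold $K$, so you also need the fact that the integer $B_{E,E^+}$ is strictly below $1+\rho/\varepsilon$ by an $n$-independent constant. This holds, and it is needed: the $1+O(1/n)$ factor alone does not give $B_{E,E^+}\gamma^{-\ell}<K$ when $(1+\rho/\varepsilon)(\delta/(\delta-\rho))^\ell$ is an integer.
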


\begin{proof}
Define $b_\varepsilon:=\frac{\varepsilon}{1+\log_2(1/\varepsilon)}$. Let $d:=\lfloor \delta n\rfloor,\,\gamma:=\frac{d-E}{d},\,B_{E,E^+}:=\left\lfloor \frac{E^+ + 1}{E^+ - E + 1}\right\rfloor$. Since $\delta-(\rho+\varepsilon)=1-R-\rho-\varepsilon-a_\varepsilon>0$, we have $E^+<d$ for all sufficiently large $n$.

We first estimate the probability that $\cC$ has minimum distance less than $d$. By Lemma~\ref{lem:H_q-bound} and Lemma~\ref{lem:ball-volume-bound}, $|B_{d}| \le q^{(\delta+b_\varepsilon)n+o(n)}$. Therefore, $\Pr_{\cC}[d(\cC)\le d] \le |B_{d}|\,q^{-r} \le q^{(\delta+b_\varepsilon-(1-R))n+o(n)}$. Since $\delta+b_\varepsilon-(1-R)=-(a_\varepsilon-b_\varepsilon)$, we obtain
\[
\Pr_{\cC}[d(\cC)\le d]
\le q^{-(a_\varepsilon-b_\varepsilon)n+o(n)}
= q^{-\Omega(n)}.
\]

Next we estimate the bad-line probability on the event $d(\cC)\ge d$. As
\[
B_{E,E^+}
=
\left\lfloor
\frac{E+\lceil \varepsilon n\rceil+1}{\lceil \varepsilon n\rceil+1}
\right\rfloor
\le
1+\frac{E}{\lceil \varepsilon n\rceil+1}
\le
1+\frac{\rho}{\varepsilon},
\]
and $\gamma^{-1}=\frac{d}{d-E}\le \frac{\delta}{\delta-\rho}$, we obtain $B_{E,E^+}\cdot \gamma^{-\ell}
\le
\left(1+\frac{\rho}{\varepsilon}\right)
\left(\frac{\delta}{\delta-\rho}\right)^\ell< K+1$. We next bound the
probability that there exists affine syndrome line $L\subseteq \F_q^r$ such that
\[
\left[|L\cap H_E|\geq K+1\right]\wedge \left[L\nsubseteq H_{E^+}\right]\wedge \left[d(\cC)\geq d\right].
\] 
Applying Theorem~\ref{thm:bad-line-moment} with $d=d$, $s=\ell$, the probability is at most 
%\[
%\sum_{L,(\alpha_1,\ldots,\alpha_K)}\Pr_{\cC}\!\bigl[\mathcal{A}(L;\alpha_1,\dots,\alpha_K)\bigr]\leq %q^{2r}\binom{q}{K+1}\binom{K+1}{\ell+3}|B_E|^{\ell+3}q^{-r(\ell+3)}.
%\]
$q^{2r}\binom{q}{K+1}\binom{K+1}{\ell+3}|B_E|^{\ell+3}q^{-r(\ell+3)}.$

Since $K$ and $q$ depend only on $(R,\rho,\varepsilon)$, the two binomial factors contribute
only $q^{O_{R,\rho,\varepsilon}(1)}$.
On the other hand, Lemma~\ref{lem:ball-volume-bound} gives $|B_E|\le q^{(\rho+b_\varepsilon)n+o(n)}$. Hence the exponent in the preceding probability bound is at most $2(1-R)n+(\ell+3)(\rho+b_\varepsilon-(1-R))n+o(n)$. By the assumption on $\rho$, $1-R-\rho-b_\varepsilon > \varepsilon+(a_\varepsilon-b_\varepsilon) >\varepsilon$. Since $\ell+3=\left\lceil \frac{2(1-R)}{\varepsilon}\right\rceil+2>\frac{2(1-R)}{\varepsilon}$, it follows that $(\ell+3)(1-R-\rho-b_\varepsilon)>2(1-R)$. Thus, $2(1-R)+(\ell+3)(\rho+b_\varepsilon-(1-R))<0$,
and therefore the probability is at most $q^{-\Omega(n)}$.

Combining this with the minimum-distance bound, we conclude that with probability at least $1-q^{-\Omega(n)}$, every affine syndrome line $L\subseteq \F_q^r$ satisfies: if $L\not\subseteq H_{E^+}$, then $|L\cap H_E|\le K$. This is exactly the $(E,E^+,K/q)$-line proximity gap property, and the $(E,E^+,K/(q-1))$-space version follows from Lemma~\ref{lem:line-to-space-gap}.
\end{proof}

\begin{rmk}
A simpler but slightly coarser sufficient condition is $q\ge \left(\frac{2}{\varepsilon}\right)^{2/\varepsilon}$, since for fixed $R$ and sufficiently small $\varepsilon$, this implies $q> K+1$.
\end{rmk}

Taking $E^+=E$ yields the corresponding no-slack statement in the large-alphabet setting, where the required alphabet size grows linearly with $n$.

\begin{cor}
Fix $R \in (0,1)$ and $0<\varepsilon < (1-R)/2$. Assume that $0<\rho<1-R-\varepsilon$. Let $\delta := 1-R-\varepsilon, \ell := \left\lceil \frac{2(1-R)}{\varepsilon}\right\rceil - 1,E := \lfloor \rho n\rfloor,d := \lfloor \delta n\rfloor$, and define $K :=\left\lfloor(E+1)\left(\frac{d}{d-E}\right)^\ell \right\rfloor$. Let $q$ be a prime power such that $q=\Theta(n)$ and $
q \ge \max\!\left\{
\left(\frac{2}{\varepsilon}\right)^{1/\varepsilon},\,
K+2
\right\}$.

Then, with probability at least $1-q^{-\Omega(n)}$, a random linear code $\cC$ of rate $R$ and code length $n$ has minimum distance at least $d$ and satisfies the $(E,E,K/q)$-line proximity gap property.
\end{cor}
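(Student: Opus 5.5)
The plan mirrors the proof of Theorem~\ref{thm:explicit-two-radius}, specialized to $E^+=E$. The alphabet is now allowed to grow linearly, which absorbs the larger threshold $K$.

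\textbf{Step 1: minimum distance.} I first show that $d(\cC)\ge d$ with probability $1-q^{-\Omega(n)}$. By the union bound, $\Pr[d(\cC)<d]\le |B_d|\,q^{-r}$. Lemma~\ref{lem:ball-volume-bound} gives $|B_d|\le \binom{n}{d}q^{d}\le 2^{n}q^{\delta n}$. So the exponent is at most $n\log_q 2+(\delta-(1-R))n=n\log_q2-\varepsilon n$. This is $-\Omega(n)$, since $q\ge (2/\varepsilon)^{1/\varepsilon}$ forces $\log_q 2<\varepsilon$; indeed $q=\Theta(n)$ makes this term $o(n)$ anyway.

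\textbf{Step 2: the deterministic threshold.} With $E^+=E$ we have $B_{E,E}=E+1$. The choice $K=\lfloor (E+1)(d/(d-E))^\ell\rfloor$ gives $(K+1)\gamma^{\ell}>E+1=B_{E,E}$, where $\gamma=(d-E)/d$. Note that $E<d$ since $\rho<\delta$. Suppose some nondegenerate syndrome line $L$ has $|L\cap H_E|\ge K+1$ and $L\not\subseteq H_E$. By Theorem~\ref{thm:min-rank-threshold}, every $E$-witness matrix covering $K+1$ points of $L$ must then have rank at least $\ell+3$. Lemma~\ref{lem:degenerate-lines} disposes of degenerate lines, because $K+1\ge 2$ and $K+1<q$.

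\textbf{Step 3: union bound.} I apply Theorem~\ref{thm:bad-line-moment} with $s=\ell$ and $K+1$ in place of $K$. The bad-event probability is at most
\[
q^{2r}\binom{q}{K+1}\binom{K+1}{\ell+3}|B_E|^{\ell+3}q^{-r(\ell+3)}.
\]
This is the step I expect to be the main obstacle. Unlike the constant-$q$ case, $K$ is now $\Theta(n)$, so $\binom{q}{K+1}\le q^{K+1}=q^{O(n)}$. This factor is no longer negligible and must be absorbed by the exponent margin.

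\textbf{Step 4: exponent accounting.} I would bound $\binom{q}{K+1}\le 2^{q}=q^{O(n/\log n)}=q^{o(n)}$, using $q=\Theta(n)$. Similarly, $\binom{K+1}{\ell+3}\le q^{\ell+3}=q^{O(1)}$, because $\ell$ depends only on $(R,\varepsilon)$. For the ball, $|B_E|\le 2^nq^{\rho n}=q^{\rho n+o(n)}$, again since $q=\Theta(n)$. The total exponent is then
\[
2(1-R)n-(\ell+3)(1-R-\rho)n+o(n).
\]
Because $1-R-\rho>\varepsilon$ and $\ell+3>2(1-R)/\varepsilon$, this is $-\Omega(n)$.

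\textbf{Conclusion.} Combining with Step 1: with probability $1-q^{-\Omega(n)}$, every line with $L\not\subseteq H_E$ satisfies $|L\cap H_E|\le K$. This is the $(E,E,K/q)$-line proximity gap. The condition $q\ge K+2$ ensures that the $\binom{q}{K+1}$ count and the choice of $K+1$ distinct $\alpha_j$ make sense.

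The delicate point is Step 4. Using $\binom{q}{K+1}\le 2^q$ is essential; the naive bound $q^{K+1}$ would cost $q^{\Theta(n)}$ and could break the margin. If $K$ turns out to be comparable to $q$, I would instead rely on $2^q=q^{o(n)}$, which holds precisely because $q$ is linear in $n$.
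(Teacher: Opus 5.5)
Your proposal is correct and follows the same route as the paper: you bound the minimum-distance failure by $|B_d|q^{-r}$, invoke Theorem~\ref{thm:bad-line-moment} with $E^+=E$ (so $B_{E,E}=E+1$), $s=\ell$ and $K+1$ in place of $K$, and absorb $\binom{q}{K+1}$ and $\binom{K+1}{\ell+3}$ into $q^{o(n)}$ using $q=\Theta(n)$. The comparison of $2(1-R)$ with $(\ell+3)(1-R-\rho)$ then closes it; your explicit handling of degenerate lines via Lemma~\ref{lem:degenerate-lines} and the bound $\binom{q}{K+1}\le 2^q=q^{O(n/\log n)}$ just spell out steps the paper leaves implicit.
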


\begin{proof}
This is the specialization of Theorem~\ref{thm:bad-line-moment} to $E^{+}=E$, for which $B_{E,E}=E+1$. As $K+1>(E+1)\left(\frac{\delta-\rho}{\delta}\right)^{-\ell}$, every bad line with at least $K+1$ elements in $H_E$ must have witness rank at least $\ell+3$.

As above, since $q\ge \left(\frac{2}{\varepsilon}\right)^{1/\varepsilon}$, Lemma~\ref{lem:ball-volume-bound} gives $|B_{d}|\le q^{d+o(n)}< q^{(1-R-\varepsilon)n+o(n)}$, and therefore $\Pr_{\cC}[d(\cC)\le d]
\le
|B_{d}|\,q^{-r}
<
q^{-\varepsilon n+o(n)}$. We bound the probability that there exists affine syndrome line $L\subseteq \F_q^r$ such that
\[
\left[|L\cap H_E|\ge K+1\right]\wedge \left[L\nsubseteq H_E\right]\wedge \left[d(\cC)\geq d\right].
\]
By Theorem~\ref{thm:bad-line-moment} with $d=d$, $s=\ell$, the probability is at most $q^{2r}\binom{q}{K+1}\binom{K+1}{\ell+3}|B_E|^{\ell+3}q^{-r(\ell+3)}$.

Because $q=\Theta(n)$ and $K=\Theta(n)$, we have $\log_q \binom{q}{K+1}=o(n),\,\log_q \binom{K+1}{\ell+3}=o(n)$. Hence the exponent in the above bound is at most $2(1-R)n+(\ell+3)(\rho-(1-R))n+o(n)$. Since $\rho<1-R-\varepsilon$, the exponent is bounded above by $\bigl(2(1-R)-(\ell+3)\varepsilon\bigr)n+o(n)$. By the definition of $\ell$, we have $\ell+3=\Bigl\lceil \frac{2(1-R)}{\varepsilon}\Bigr\rceil+2>\frac{2(1-R)}{\varepsilon}$, and hence $2(1-R)-(\ell+3)\varepsilon<0$. Combining this with the minimum-distance bound proves the claim.
\end{proof}

\begin{rmk}
    In particular, $K \le (\rho n+1)\left(\frac{\delta n}{(\delta-\rho)n-1}\right)^\ell = \Theta(n),$ so any prime power $q \ge cn$ with $c>\rho\left(\frac{1-R-\varepsilon}{1-R-\varepsilon-\rho}\right)^\ell$ is sufficient for all large $n$.
\end{rmk}

\section{Correlated Agreement of Random Linear Codes}
%\begin{comment}
\label{section:correlated-agreement}
\noindent
For a matrix $X=[\mathbf{x}_0|\cdots|\mathbf{x}_m]\in \mathbb{F}_q^{n\times (m+1)}$, define its row support and row weight by
\[
\mathrm{rowsupp}(X):=\{i\in [n]: (\mathbf{x}_j)_i\neq 0 \text{ for some }0\le j\le m\},
\qquad
\mathrm{rowwt}(X):=|\mathrm{rowsupp}(X)|.
\]
Likewise, for an affine space $U\subseteq \mathbb{F}_q^n$, define $\mathrm{Supp}(U):=\bigcup_{\mathbf{u}\in U}\mathrm{supp}(\mathbf{u})$.

\begin{defn}[$(m,E,E^+,\tau)$ correlated agreement]
Let $m\ge 1$, $0\le E\le E^+\le n$, and $0\le \tau\le 1$. We say that a linear code $\cC$ with parity-check matrix $H$ has the $(m,E,E^+,\tau)$-correlated agreement property if, for every affine syndrome space $S=\mathbf{s}_0+\mathrm{span}(\mathbf{s}_1,\dots,\mathbf{s}_m)\subseteq \mathbb{F}_q^r$ of dimension $m$, the condition $\frac{|S\cap H_E|}{|S|}>\tau$, implies that there exists a matrix $X=[\mathbf{x}_0|\cdots|\mathbf{x}_m]\in \mathbb{F}_q^{n\times (m+1)}$
such that $H\mathbf{x}_i=\mathbf{s}_i\ (0\le i\le m)$ and $\mathrm{rowwt}(X)\le E^+$. 

For a fixed affine syndrome space $S=\mathbf{s}_0+\mathrm{span}(\mathbf{s}_1,\ldots,\mathbf{s}_m)$, if no such matrix $X$ exists, we say that $S$ has no $(H,m,E,E^+)$-correlated agreement.
\end{defn}

The next lemma shows that this is exactly the syndrome-space reformulation of Definition~\ref{def:correlated-agreement-space} for affine spaces of dimension $m$.

\begin{lemma}\label{lem:syndrome-CA-reformulation}
Let $U=\mathbf{u}_0+\mathrm{span}(\mathbf{u}_1,\dots,\mathbf{u}_m)\subseteq \mathbb{F}_q^n,\, S=U_H=\mathbf{s}_0+\mathrm{span}(\mathbf{s}_1,\dots,\mathbf{s}_m)\subseteq \mathbb{F}_q^r,$ where $\mathbf{s}_i=H\mathbf{u}_i$ for $0\le i\le m$. Then the following are equivalent:
\begin{enumerate}
\item there exists an affine code space $V=\mathbf{c}_0+\mathrm{span}(\mathbf{c}_1,\dots,\mathbf{c}_m)\subseteq \cC$ such that
\[
\left|\bigcup_{j=0}^{m} \mathrm{supp}(\mathbf{c}_j-\mathbf{u}_j)\right|\le E^+;
\]
\item there exists a matrix $X=[\mathbf{x}_0|\cdots|\mathbf{x}_m]\in \mathbb{F}_q^{n\times (m+1)}$ such that $H\mathbf{x}_i=\mathbf{s}_i\ (0\le i\le m)$ and $\mathrm{rowwt}(X)\le E^+.$
\end{enumerate}
\end{lemma}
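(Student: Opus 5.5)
The plan is to prove both implications directly through the change of variables $\mathbf{x}_j:=\mathbf{u}_j-\mathbf{c}_j$. This is the same substitution used in Lemma~\ref{lem:syndrome-characterization}, now applied column by column.

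For (1)$\Rightarrow$(2), take the affine code space $V=\mathbf{c}_0+\mathrm{span}(\mathbf{c}_1,\dots,\mathbf{c}_m)\subseteq\cC$ and set $\mathbf{x}_j:=\mathbf{u}_j-\mathbf{c}_j$ for $0\le j\le m$. First we check that every $\mathbf{c}_j$ is a codeword, not merely the points of $V$. The point $\mathbf{c}_0\in V$ gives $\mathbf{c}_0\in\cC$. For $j\ge1$, both $\mathbf{c}_0+\mathbf{c}_j$ and $\mathbf{c}_0$ lie in $V\subseteq\cC$. Since $\cC$ is linear, their difference $\mathbf{c}_j$ lies in $\cC$. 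It follows that $H\mathbf{x}_j=H\mathbf{u}_j-H\mathbf{c}_j=\mathbf{s}_j$. The row support of $X=[\mathbf{x}_0|\cdots|\mathbf{x}_m]$ is by definition $\bigcup_j\mathrm{supp}(\mathbf{x}_j)=\bigcup_j\mathrm{supp}(\mathbf{c}_j-\mathbf{u}_j)$, since negating a vector does not change its support. Hence $\mathrm{rowwt}(X)\le E^+$.

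For (2)$\Rightarrow$(1), start from $X$ with $H\mathbf{x}_j=\mathbf{s}_j$ and define $\mathbf{c}_j:=\mathbf{u}_j-\mathbf{x}_j$. Then $H\mathbf{c}_j=\mathbf{s}_j-\mathbf{s}_j=\mathbf{0}$, so each $\mathbf{c}_j\in\cC$. Because $\cC$ is linear, $V:=\mathbf{c}_0+\mathrm{span}(\mathbf{c}_1,\dots,\mathbf{c}_m)\subseteq\cC$. The support identity from the first direction, read in reverse, gives $|\bigcup_j\mathrm{supp}(\mathbf{c}_j-\mathbf{u}_j)|=\mathrm{rowwt}(X)\le E^+$.

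The only step that needs any care is the first one: deducing that each generator $\mathbf{c}_j$ is a codeword from the assumption that $V\subseteq\cC$. This is exactly where linearity of $\cC$ enters, together with the fact that $V$ contains $\mathbf{c}_0$ and each $\mathbf{c}_0+\mathbf{c}_j$. Everything else is bookkeeping of supports and syndromes.
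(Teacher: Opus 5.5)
Your proposal is correct and follows the paper's proof exactly: the substitution $\mathbf{x}_j=\mathbf{u}_j-\mathbf{c}_j$ in both directions, with the row support of $X$ identified as $\bigcup_j\mathrm{supp}(\mathbf{c}_j-\mathbf{u}_j)$. You also spell out why each generator $\mathbf{c}_j$ is a codeword when $V\subseteq\cC$: $\mathbf{c}_0$ and each $\mathbf{c}_0+\mathbf{c}_j$ lie in $V$, and their difference is in $\cC$ by linearity. The paper uses $\mathbf{c}_j\in\cC$ without stating this justification.
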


\begin{proof}
Assume (1). Define $\mathbf{x}_j:=\mathbf{u}_j-\mathbf{c}_j,\, 0\le j\le m$. As $\mathbf{c}_j\in \cC=\ker(H)$, we have $H\mathbf{x}_j=H(\mathbf{u}_j-\mathbf{c}_j)=H\mathbf{u}_j=\mathbf{s}_j$ for every $0\le j\le m$. Moreover,
\[
\mathrm{rowsupp}(X)
= \bigcup_{j=0}^{m}\mathrm{supp}(\mathbf{x}_j)=\bigcup_{j=0}^{m} \mathrm{supp}(\mathbf{c}_j-\mathbf{u}_j)
\]
so $\mathrm{rowwt}(X)\le E^+$.

Conversely, assume (2), and define $\mathbf{c}_j:=\mathbf{u}_j-\mathbf{x}_j,\, 0\le j\le m$. Then $H\mathbf{c}_j=H\mathbf{u}_j-H\mathbf{x}_j=\mathbf{s}_j-\mathbf{s}_j=\mathbf{0}$, so each $\mathbf{c}_j\in \cC$. Hence $V:=\mathbf{c}_0+\mathrm{span}(\mathbf{c}_1,\dots,\mathbf{c}_m)\subseteq \cC$. Also,
\[
\bigcup_{j=0}^{m} \mathrm{supp}(\mathbf{c}_j-\mathbf{u}_j)=\bigcup_{j=0}^{m}\mathrm{supp}(\mathbf{x}_j)=\mathrm{rowsupp}(X)
\]
so the disagreement support has size at most $E^+$. This proves the equivalence.
\end{proof}

We next generalize Lemma~\ref{lem:gap-line-ball} from affine lines to affine spaces of dimension $m$.

\begin{lemma}\label{lem:affine-space-ball}
Let $U = \left\{ \mathbf{u}_0+\sum_{j=1}^m \alpha_j \mathbf{u}_j: \mathbf{\alpha}=(\alpha_1,\dots,\alpha_m)\in \mathbb{F}_q^m \right\} \subseteq \mathbb{F}_q^n$. If $|\mathrm{Supp}(U)|>E^+$, then $|U\cap B_E| \le \frac{E^+ +1}{E^+ - E +1}\cdot q^{m-1}$.
\end{lemma}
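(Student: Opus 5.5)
The plan is to run the coordinate-cancellation counting from Lemma~\ref{lem:gap-line-ball} directly over the parameter space $\mathbb{F}_q^m$, without any induction on $m$. For $\alpha=(\alpha_1,\dots,\alpha_m)\in\mathbb{F}_q^m$, write $\mathbf{u}(\alpha):=\mathbf{u}_0+\sum_{j=1}^m\alpha_j\mathbf{u}_j$. For each coordinate $i\in[n]$, the map $f_i(\alpha):=(\mathbf{u}_0)_i+\sum_{j=1}^m\alpha_j(\mathbf{u}_j)_i$ is an affine function on $\mathbb{F}_q^m$. I will split the coordinates into two sets:
\[
T:=\{i:\ ((\mathbf{u}_1)_i,\dots,(\mathbf{u}_m)_i)\neq 0\},
\qquad
T_0:=\{i\notin T:\ (\mathbf{u}_0)_i\neq 0\}.
\]
Then $\mathrm{Supp}(U)=T\cup T_0$. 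This holds because a coordinate outside $T$ is constant on $U$, and a coordinate in $T$ is nonzero somewhere on $U$. Set $t:=|T_0|$. The hypothesis then reads $t+|T|\ge E^++1$.

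The key steps, in order, are the following.

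First, for every $\alpha$, the weight satisfies $\wt(\mathbf{u}(\alpha))=t+|T|-|\{i\in T: f_i(\alpha)=0\}|$.

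Second, for $i\in T$, the zero set of the nonconstant affine function $f_i$ is an affine hyperplane in $\mathbb{F}_q^m$, so it has exactly $q^{m-1}$ points. Double counting then gives
\[
\sum_{\alpha\in\mathbb{F}_q^m}|\{i\in T: f_i(\alpha)=0\}|=|T|\,q^{m-1}.
\]

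Third, let $A:=\{\alpha:\ \wt(\mathbf{u}(\alpha))\le E\}$. Every $\alpha\in A$ has at least $t+|T|-E$ vanishing coordinates in $T$, and this quantity is at least $E^+-E+1\ge 1$. Hence
\[
|A|\,(t+|T|-E)\le |T|\,q^{m-1}\le (t+|T|)\,q^{m-1}.
\]

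Fourth, every point of $U\cap B_E$ has at least one preimage in $A$ under $\alpha\mapsto\mathbf{u}(\alpha)$. So $|U\cap B_E|\le|A|$ even when $\mathbf{u}_1,\dots,\mathbf{u}_m$ are linearly dependent. Combining, we get $|U\cap B_E|\le \frac{x}{x-E}\,q^{m-1}$ with $x:=t+|T|$. Since $x\mapsto x/(x-E)$ is decreasing for $x>E$ and $x\ge E^++1$, this yields the claimed bound $\frac{E^++1}{E^+-E+1}q^{m-1}$.

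The only step needing care is the second: the hyperplane count holds exactly because $i\in T$ forces the linear part of $f_i$ to be nonzero. This is also why coordinates with zero linear part must be separated into $T_0$, where they contribute a fixed amount $t$ to every weight. The degenerate case of a non-injective parametrization is handled by counting parameters $\alpha$ rather than points of $U$, which only loosens the bound in the right direction. I do not expect any further obstacle; the argument is a verbatim lift of Lemma~\ref{lem:gap-line-ball}, with $1$ replaced by $q^{m-1}$ as the number of zeros per coordinate.
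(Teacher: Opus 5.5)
Your proof is correct and follows essentially the same route as the paper: double-count the pairs $(\alpha,i)$ with $f_i(\alpha)=0$ over the parameter space $\mathbb{F}_q^m$, use that each coordinate contributes at most $q^{m-1}$ zeros, and finish with the monotonicity of $x\mapsto x/(x-E)$. The only difference is cosmetic: the paper takes $T=\mathrm{Supp}(U)$ directly and notes that any affine form that is not identically zero has at most $q^{m-1}$ zeros (a nonzero constant has none), so your separate bookkeeping of the constant coordinates $T_0$ is unnecessary but harmless.
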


\begin{proof}
Let $T:=\mathrm{Supp}(U)$ with $N:=|T|$. For each $i\in T$, define the affine linear form $f_i(\mathbf{\alpha}):=(\mathbf{u}_0)_i+\sum_{j=1}^m \alpha_j (\mathbf{u}_j)_i$ where $\mathbf{\alpha}=(\alpha_1,\ldots,\alpha_m)\in \mathbb{F}_q^m$. Since $i\in T$, the form $f_i$ is not identically zero. Hence its zero set has size at
most $q^{m-1}$: $|\{\mathbf{\alpha}\in \mathbb{F}_q^m: f_i(\mathbf{\alpha})=0\}|\le q^{m-1}$. Summing over all $\mathbf{\alpha}\in \F_q^m$, we obtain $\sum_{\mathbf{\alpha}\in \mathbb{F}_q^m} |\{i\in T: f_i(\mathbf{\alpha})=0\}|=\sum_{i\in T} |\{\mathbf{\alpha}\in \mathbb{F}_q^m: f_i(\mathbf{\alpha})=0\}|\le N\cdot q^{m-1}$.

For $\mathbf{\alpha}=(\alpha_1,\ldots,\alpha_m)\in \F_q^m$, if $\mathbf{u}_0+\sum_{j=1}^m \alpha_j \mathbf{u}_j\in U\cap B_E$, then
\[
N\cdot q^{m-1}
\ge \sum_{\alpha:\, \mathbf{u}_0+\sum_{j=1}^m \alpha_j \mathbf{u}_j\in B_E}
|\{i\in T:f_i(\alpha)=0\}|
\ge |U\cap B_E|\cdot (N-E).
\]
It follows that $|U\cap B_E| \le \frac{N}{N-E}\cdot q^{m-1}$. Since $N\ge E^+ +1$ and the function $x\mapsto x/(x-E)$ is decreasing for $x>E$, we obtain $|U\cap B_E| \le \frac{E^+ +1}{E^+ - E +1}\cdot q^{m-1}$. This completes the proof.
\end{proof}

We now generalize the rank-reduction argument from affine lines to affine syndrome spaces of dimension $m$.

\begin{defn}[Witness matrix]\label{def:correlated-witness}
Fix an affine syndrome space $S=\mathbf{s}_0+\mathrm{span}(\mathbf{s}_1,\dots,\mathbf{s}_m)\subseteq \mathbb{F}_q^r$ of dimension $m$, and let $\mathbf{\beta}_1,\ldots,\mathbf{\beta}_K\in \F_q^m$ be pairwise distinct elements, where $\mathbf{\beta}_j=((\mathbf{\beta}_j)_1,\dots,(\mathbf{\beta}_j)_m)$. A matrix $X=[\mathbf{x}_1|\cdots|\mathbf{x}_K]\in \mathbb{F}_q^{n\times K}$ is called an $E$-witness matrix for $(S;\mathbf{\beta}_1,\dots,\mathbf{\beta}_K)$ if $H\mathbf{x}_j=\mathbf{s}_0+\sum_{i=1}^m (\mathbf{\beta}_j)_i\cdot \mathbf{s}_i$ where $\wt(\mathbf{x}_j)\leq E$ for $j\in [K]$.

More generally, we say that $X=[\mathbf{x}_1|\ldots|\mathbf{x}_K]\in \F_q^{n\times K}$ is an $E$-witness matrix covering $K$ elements of $S$ if there exist pairwise distinct elements
$\mathbf{\beta}_1,\ldots,\mathbf{\beta}_K\in\F_q^m$ such that $H\mathbf{x}_j=\mathbf{s}_0+\sum_{i=1}^m(\mathbf{\beta}_j)_i\cdot \mathbf{s}_i$ where $\mathrm{wt}(\mathbf{x}_j)\le E$ for $j\in [K]$.
\end{defn}

For an affine syndrome space $S=\mathbf{s}_0+\mathrm{span}(\mathbf{s}_1,\ldots,\mathbf{s}_m),$ let $h_S:=\dim_{\mathbb F_q}\mathrm{span}\{\mathbf{s}_0,\mathbf{s}_1,\ldots,\mathbf{s}_m\}$. Since $S$ has dimension $m$, we have $h_S\in\{m,m+1\}$.

\begin{lemma}\label{lem:rank-reduction-m}
Let $S=\mathbf{s}_0+\mathrm{span}(\mathbf{s}_1,\dots,\mathbf{s}_m)\subseteq \mathbb{F}_q^r$ be an affine syndrome space of dimension $m$, and let $h:=\dim_{\mathbb F_q}\mathrm{span}\{\mathbf{s}_0,\mathbf{s}_1,\ldots,\mathbf{s}_m\}$. Let $\mathbf{\beta}_1,\dots,\mathbf{\beta}_K\in \mathbb{F}_q^m$ be pairwise distinct elements that are not contained in any affine hyperplane of $\mathbb{F}_q^m$.

Let $X=[\mathbf{x}_1|\cdots|\mathbf{x}_K]\in \mathbb{F}_q^{n\times K}$ be an $E$-witness matrix for $(S;\mathbf{\beta}_1,\dots,\mathbf{\beta}_K)$, and suppose $\mathrm{rank}(X)=t\ge h+1$. If $d(\cC)\ge d$ and $\gamma:=\frac{d-E}{d}$, then there exists a subset $J\subseteq[K]$ such that $|J|\ge K\gamma,$ the restricted matrix $X_J=[x_j]_{j\in J}$ is still an $E$-witness matrix for the same affine syndrome space $S$, and $\mathrm{rank}(X_J)\le t-1$.
\end{lemma}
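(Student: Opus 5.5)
I will mirror the proof of Theorem~\ref{thm:rank-reduction}, replacing the two row vectors $\mathbf{u},\mathbf{v}$ by the $m+1$ evaluation vectors of the parameters. Set $Y:=HX=[\mathbf{s}_0+\sum_i(\boldsymbol\beta_1)_i\mathbf{s}_i|\cdots]$ and define $\mathbf{v}_0:=(1,\dots,1)\in\F_q^K$ and, for $1\le i\le m$, $\mathbf{v}_i:=((\boldsymbol\beta_1)_i,\dots,(\boldsymbol\beta_K)_i)$. Then $Y=\sum_{i=0}^m \mathbf{s}_i\mathbf{v}_i^\top$. The hypothesis that the $\boldsymbol\beta_j$ are not contained in any affine hyperplane says exactly that $\mathbf{v}_0,\dots,\mathbf{v}_m$ are linearly independent: a relation $\sum c_i\mathbf{v}_i=0$ with $(c_1,\dots,c_m)\ne 0$ would place all $\boldsymbol\beta_j$ on the hyperplane $c_0+\sum c_i\beta_i=0$, and if $c_1=\dots=c_m=0$ then $c_0=0$. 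Hence $\mathrm{Row}(Y)$ is spanned by the $\mathbf{v}_i$, and $\dim\mathrm{Row}(Y)=h$. Moreover $\mathrm{Row}(Y)\subseteq\mathrm{Row}(X)$ since $Y=HX$.

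The cleaner route is to decompose $X$ along the row space of $X$ itself, with a basis adapted to $\ker$ of the map $\mathrm{Row}(X)\to$ syndromes. Write $X=\sum_{k=1}^t \mathbf{z}_k\mathbf{w}_k^\top$ with $\mathbf{w}_1,\dots,\mathbf{w}_t$ a basis of $\mathrm{Row}(X)$ whose first $h$ elements span $\mathrm{Row}(Y)$. Then $HX=\sum_k (H\mathbf{z}_k)\mathbf{w}_k^\top=Y$. Expressing $Y$ in the same basis, with coefficients only on $\mathbf{w}_1,\dots,\mathbf{w}_h$, linear independence forces $H\mathbf{z}_k=\mathbf{0}$ for $k>h$. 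Since $t\ge h+1$, the vector $\mathbf{c}:=\mathbf{z}_t$ is a codeword. It is nonzero, for otherwise $\mathrm{rank}(X)\le t-1$ and we may take $J=[K]$. Consequently $\wt(\mathbf{c})\ge d$.

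Next comes the pigeonhole step, exactly as in Theorem~\ref{thm:rank-reduction}. On $S':=\supp(\mathbf{c})$ each column has at least $|S'|-E$ zeros, so some coordinate $h_0\in S'$ satisfies
\[
|J|\ge K(|S'|-E)/|S'|\ge K\gamma, \qquad J:=\{j:(\mathbf{x}_j)_{h_0}=0\}.
\]
For $j\in J$, solving $(\mathbf{x}_j)_{h_0}=0$ for $(\mathbf{w}_t)_j$ gives
\[
\mathbf{x}_j=\sum_{k<t}(\mathbf{w}_k)_j\,\mathbf{z}_k', \qquad \mathbf{z}_k':=\mathbf{z}_k-\frac{(\mathbf{z}_k)_{h_0}}{\mathbf{c}_{h_0}}\,\mathbf{c}.
\]
Hence $X_J$ has row space inside $\mathrm{span}\{(\mathbf{w}_k)_J:k<t\}$, so $\mathrm{rank}(X_J)\le t-1$. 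The columns of $X_J$ are unchanged original columns, so they keep weight at most $E$ and the same syndromes $\mathbf{s}_0+\sum_i(\boldsymbol\beta_j)_i\mathbf{s}_i$. Thus $X_J$ is an $E$-witness matrix for $S$, with parameters $\{\boldsymbol\beta_j\}_{j\in J}$.

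The only delicate point is the linear-algebra step justifying that a codeword direction exists when $t\ge h+1$. This needs $\dim\mathrm{Row}(Y)=h$ exactly, so that the kernel part of the decomposition has dimension $t-h\ge1$. That requires both the hyperplane assumption (independence of the $\mathbf{v}_i$) and the definition of $h$. I would check that $\mathrm{Row}(Y)=\{(\mathbf{a}^\top\mathbf{s}$-coefficients$)\}$ has dimension equal to $\rank[\mathbf{s}_0|\cdots|\mathbf{s}_m]\cdot$ (full-rank $V$) $=h$. Note that $X_J$'s parameters may now lie in a hyperplane; the lemma does not claim otherwise.
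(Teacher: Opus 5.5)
Your proposal is correct and follows essentially the same argument as the paper: the all-ones vector and the $m$ coordinate vectors of the parameters give $\dim\mathrm{Row}(Y)=h$. Extending a basis of $\mathrm{Row}(Y)$ to one of $\mathrm{Row}(X)$ yields a codeword coefficient $\mathbf{z}_t$, and pigeonholing on $\supp(\mathbf{z}_t)$ and then eliminating $(\mathbf{w}_t)_j$ drops the rank.

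Two minor remarks. First, $\mathrm{rank}(Y)\le h$ holds unconditionally, so $t\ge h+1$ already forces a kernel direction and the hyperplane hypothesis is not really needed in this step; it matters only in the final stage of Corollary~\ref{cor:rank-reduction-m-iterated}. Second, $\mathbf{z}_t\neq\mathbf{0}$ is automatic because $\mathrm{rank}(X)=t$ makes $\mathbf{z}_1,\dots,\mathbf{z}_t$ linearly independent.
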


\begin{proof}
Define $\mathbf{u}_0:=(1,\dots,1)\in \mathbb{F}_q^K,\, \mathbf{u}_i:=((\mathbf{\beta}_1)_i,\dots,(\mathbf{\beta}_K)_i)\in \mathbb{F}_q^K\, (1\le i\le m)$,
and let
\[
Y:=
\left[
\mathbf{s}_0+\sum_{i=1}^m (\mathbf{\beta}_1)_i\cdot \mathbf{s}_i
\ \Big|\ \cdots\ \Big|\
\mathbf{s}_0+\sum_{i=1}^m (\mathbf{\beta}_K)_i\cdot \mathbf{s}_i
\right]
\in \mathbb{F}_q^{r\times K}.
\]
Since $\mathbf{\beta}_1,\dots,\mathbf{\beta}_K\in \mathbb{F}_q^m$ are not contained in any affine hyperplane, the vectors $\mathbf{u}_0,\mathbf{u}_1,\dots,\mathbf{u}_m$ are linearly independent. Moreover, $Y=\mathbf{s}_0\cdot \mathbf{u}_0^\top+\sum_{i=1}^m \mathbf{s}_i\cdot \mathbf{u}_i^\top$, and hence  $\mathrm{Row}(Y)\subseteq \mathrm{span}\{\mathbf{u}_0,\dots,\mathbf{u}_m\}$ and $\mathrm{dim} \mathrm{Row}(Y)=h$. Choose a basis $\mathbf{v}_1,\ldots,\mathbf{v}_h$ of $\mathrm{Row}(Y)$ and extend it to a basis $\mathbf{v}_1,\dots,\mathbf{v}_h,\mathbf{w}_{h+1},\dots,\mathbf{w}_t$ of $\mathrm{Row}(X)$. Then there exist vectors $\mathbf{a}_1,\dots,\mathbf{a}_h,\mathbf{c}_{h+1},\dots,\mathbf{c}_t\in \mathbb{F}_q^n$ such that $X= \sum_{i=1}^h \mathbf{a}_i\cdot \mathbf{v}_i^\top +\sum_{\ell=h+1}^t \mathbf{c}_\ell\cdot \mathbf{w}_\ell^\top$. Since $HX=Y$ and $\mathbf{v}_1,\ldots,\mathbf{v}_h$ span $\mathrm{Row}(Y)$, it follows that $H\mathbf{c}_\ell=\mathbf{0}\, (h+1\le \ell\le t)$. Hence each $\mathbf{c}_\ell$ is a codeword.

If $\mathbf{c}_t=\mathbf{0}$, then already $\mathrm{rank}(X)\le t-1$, and there is nothing to prove. Thus assume $\mathbf{c}_t\neq 0$, and let $T:=\mathrm{supp}(\mathbf{c}_t)$. Since $\mathbf{c}_t\in \cC\setminus\{0\}$ and $d(\cC)\ge d$, we have $|T|=\mathrm{wt}(\mathbf{c}_t)\ge d$. Each column $\mathbf{x}_j$ has at least $|T|-E$ zero coordinates inside $T$. By the pigeonhole principle, there exists $i_0\in T$ such that $|\{j\in[K]: (\mathbf{x}_j)_{i_0}=0\}|\ge K\frac{|T|-E}{|T|} \ge K\frac{d-E}{d} = K\gamma$. Let $J:=\{j\in[K]: (\mathbf{x}_j)_{i_0}=0\}$. Then $|J|\ge K\gamma$.

For $j\in J$, the $j$-th column of $X$ is $\mathbf{x}_j=\sum_{i=1}^h (\mathbf{v}_i)_j \mathbf{a}_i+\sum_{\ell=h+1}^{t-1}(\mathbf{w}_\ell)_j \mathbf{c}_\ell+(\mathbf{w}_t)_j \mathbf{c}_t$. Since $(\mathbf{x}_j)_{i_0}=0$ and $(\mathbf{c}_t)_{i_0}\neq 0$, we may solve for $(\mathbf{w}_t)_j$ as in the proof of Theorem~\ref{thm:rank-reduction}. This yields a representation of $X_J$ using $(\mathbf{u}_0)_J,\ldots,(\mathbf{u}_m)_J,(\mathbf{w}_{m+1})_J,\ldots,(\mathbf{w}_{t-1})_J$. The restricted matrix $X_J$ therefore has rank at most $t-1$, and it remains an $E$-witness matrix for the same affine syndrome space $S$.
\end{proof}

\begin{cor}\label{cor:rank-reduction-m-iterated}
Under the assumptions of Lemma~\ref{lem:rank-reduction-m}, suppose that $\mathrm{rank}(X)=t\ge h$ and $K\gamma^{t-h}>q^{m-1}$. Then there exists a subset $J\subseteq[K]$ with $|J|\ge K\gamma^{\,t-h}$ and vectors $\mathbf{a}_0,\mathbf{a}_1,\dots,\mathbf{a}_m\in \mathbb{F}_q^n$ such that for every $j\in J$, $\mathbf{x}_j=\mathbf{a}_0+\sum_{i=1}^m (\mathbf{\beta}_j)_i\cdot \mathbf{a}_i$, $H\mathbf{a}_i=\mathbf{s}_i\ (0\le i\le m)$. In particular, the surviving witness vectors lie in an affine $m$-space $\widetilde U = \mathbf{a}_0 + \mathrm{span}(\mathbf{a}_1,\dots,\mathbf{a}_m)\subseteq \mathbb{F}_q^n$ satisfying $\widetilde U_H=S$.
\end{cor}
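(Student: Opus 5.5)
The plan is to iterate Lemma~\ref{lem:rank-reduction-m} exactly as in Corollary~\ref{cor:rank-two-extraction}, and then analyse the terminal case $\rank = h$. If $t=h$ we take $J=[K]$. Otherwise we apply Lemma~\ref{lem:rank-reduction-m} to $X$ to get $J_1$ with $|J_1|\ge K\gamma$ and $\rank(X_{J_1})\le t-1$. We then repeat, each time keeping a $\gamma$-fraction, until the rank equals $h$. This takes at most $t-h$ steps and leaves a set $J$ with $|J|\ge K\gamma^{t-h}$.

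The main obstacle is that Lemma~\ref{lem:rank-reduction-m} assumes the parameters $\beta_j$, $j\in J_i$, are not contained in an affine hyperplane of $\F_q^m$. This hypothesis must be re-verified at every step. Here the assumption $K\gamma^{t-h}>q^{m-1}$ is used: every intermediate set has size at least $K\gamma^{t-h}>q^{m-1}$. An affine hyperplane of $\F_q^m$ has exactly $q^{m-1}$ points, and the $\beta_j$ are pairwise distinct. Hence no intermediate parameter set can lie in a hyperplane, and the lemma applies at each stage.

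At the end we have $X_J$ with $HX_J=Y_J$. The rank of $X_J$ is at least $\dim\mathrm{Row}(Y_J)$, and $\dim\mathrm{Row}(Y_J)=h$. To see this, note $\mathrm{Row}(Y_J)$ is the image of $\mathrm{span}\{\mathbf{s}_0,\dots,\mathbf{s}_m\}^*$ under the map to the restricted vectors $(\mathbf{u}_0)_J,\dots,(\mathbf{u}_m)_J$. These restricted vectors are independent because the $\beta_j$, $j\in J$, avoid hyperplanes. So $\rank(X_J)=h$ and $\mathrm{Row}(X_J)=\mathrm{Row}(Y_J)\subseteq\mathrm{span}\{(\mathbf{u}_0)_J,\dots,(\mathbf{u}_m)_J\}$.

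Next we write $X_J=\sum_{i=0}^m \mathbf{a}_i (\mathbf{u}_i)_J^\top$ for some vectors $\mathbf{a}_i$. Column $j$ then reads $\mathbf{x}_j=\mathbf{a}_0+\sum_i(\beta_j)_i\mathbf{a}_i$. Applying $H$ gives $\sum_i (H\mathbf{a}_i-\mathbf{s}_i)(\mathbf{u}_i)_J^\top=0$. By the independence of the $(\mathbf{u}_i)_J$, this forces $H\mathbf{a}_i=\mathbf{s}_i$ for all $i$. Thus $\widetilde U_H=S$, and the surviving columns lie in $\widetilde U$.
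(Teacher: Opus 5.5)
Your proposal is correct and follows the paper's proof essentially step for step. You iterate Lemma~\ref{lem:rank-reduction-m}, using $K\gamma^{t-h}>q^{m-1}$ to keep every intermediate parameter set off all affine hyperplanes, and at rank $h$ you use $\mathrm{Row}(X_J)=\mathrm{Row}(Y_J)\subseteq\mathrm{span}\{(\mathbf{u}_0)_J,\ldots,(\mathbf{u}_m)_J\}$ to write $X_J=\sum_i \mathbf{a}_i(\mathbf{u}_i)_J^\top$ and compare coefficients after applying $H$. You also make explicit two points the paper leaves implicit: $\mathrm{rank}(X_J)\ge\mathrm{rank}(Y_J)=h$ throughout, so the iteration stops exactly at rank $h$, and a step that drops the rank by more than one only improves the size bound.
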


\begin{proof}
If $t=h$, we skip the iteration and apply the final-stage argument below directly. Otherwise, iterate Lemma~\ref{lem:rank-reduction-m} until the rank drops to $h$. Each step preserves at least a $\gamma$-fraction of the columns, so the final set $J$ satisfies $|J|\ge K\gamma^{t-h}>q^{m-1}$. Thus the surviving element set is never contained in an affine hyperplane, and the iteration is valid.

At the final stage, $\mathrm{rank}(X_J)=h$. Let $Y_J:=HX_J$. Since $|J|>q^{m-1}$, the vectors $(\mathbf{u}_0)_J,\ldots,(\mathbf{u}_m)_J$ are linearly independent. Hence $\mathrm{rank}(Y_J)
=\dim\mathrm{span}\{\mathbf{s}_0,\ldots,\mathbf{s}_m\}=h$. Therefore
\[
\mathrm{Row}(X_J)=\mathrm{Row}(Y_J)
\subseteq \mathrm{span}\{(\mathbf{u}_0)_J,\ldots,(\mathbf{u}_m)_J\}.
\]
It follows that there exist $\mathbf{a}_0,\ldots,\mathbf{a}_m\in\mathbb F_q^n$ such that $\mathbf{x}_j=\mathbf{a}_0+\sum_{i=1}^m(\mathbf{\beta}_j)_i \mathbf{a}_i \,\,(j\in J).$ Finally, applying $H$ and comparing coefficients gives $H\mathbf{a}_i=\mathbf{s}_i$ for $0\le i\le m$. This proves the claim.
\end{proof}

We now state the $m$-dimensional affine-space analogue of Theorem~\ref{thm:min-rank-threshold}.

\begin{thm}\label{thm:min-rank-bad-m-space}
Let $\cC$ be a linear code with parity-check matrix $H$ and minimum distance $d(C)\ge d$, and let $0<E\le E^+<d$. Let $\gamma:=\frac{d-E}{d},B_{E,E^+}:= \left\lfloor\frac{E^+ + 1}{E^+-E+1}\right\rfloor$. Let $S=\mathbf{s}_0+\mathrm{span}(\mathbf{s}_1,\ldots,\mathbf{s}_m)\subseteq\mathbb F_q^r$ be an affine syndrome space of dimension $m$ and let $h:=\mathrm{dim\, span}\{\mathbf{s}_0,\ldots,\mathbf{s}_m\}$. Suppose that $S$ has no $(H,m,E,E^+)$-correlated agreement and there exists an $E$-witness matrix $X$ of rank $t\ge h$ covering $K$ elements of $S$ as in Definition~\ref{def:correlated-witness}. Then $K\gamma^{t-h}\le B_{E,E^+}\,q^{m-1}$.
\end{thm}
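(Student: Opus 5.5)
We argue by contradiction: assume $K\gamma^{t-h}>B_{E,E^+}q^{m-1}$. Since $B_{E,E^+}\ge 1$, this gives $K\gamma^{t-h}>q^{m-1}$, which is the hypothesis of Corollary~\ref{cor:rank-reduction-m-iterated}.

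\textbf{Non-degeneracy of the $\boldsymbol\beta$'s.} Let $\boldsymbol\beta_1,\dots,\boldsymbol\beta_K$ be the distinct points covered by $X$. An affine hyperplane of $\mathbb F_q^m$ contains only $q^{m-1}$ points, and $K\ge K\gamma^{t-h}>q^{m-1}$. So the $\boldsymbol\beta_j$ do not lie in a common affine hyperplane, and Lemma~\ref{lem:rank-reduction-m} applies. The same counting keeps the hypothesis valid after each iteration, exactly as noted in the proof of the corollary.

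\textbf{Reduction to an affine space.} Apply Corollary~\ref{cor:rank-reduction-m-iterated}. It gives $J\subseteq[K]$ with $|J|\ge K\gamma^{t-h}$ and vectors $\mathbf a_0,\dots,\mathbf a_m$ with $H\mathbf a_i=\mathbf s_i$, such that $\mathbf x_j=\mathbf a_0+\sum_i(\boldsymbol\beta_j)_i\mathbf a_i$ for all $j\in J$. Let $\widetilde U=\mathbf a_0+\mathrm{span}(\mathbf a_1,\dots,\mathbf a_m)$, parametrized by $\boldsymbol\alpha\in\mathbb F_q^m$. The points $\boldsymbol\beta_j$, $j\in J$, are distinct and each gives a point of weight at most $E$. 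Hence at least $|J|$ parameters $\boldsymbol\alpha$ land in $B_E$.

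One point needs care. Lemma~\ref{lem:affine-space-ball} bounds $|U\cap B_E|$, and its proof really counts parameters $\boldsymbol\alpha$, since it sums over $\boldsymbol\alpha\in\mathbb F_q^m$. So we use the parameter-count version of that lemma, and the bound does not depend on whether the $\mathbf a_i$ are independent.

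\textbf{Using the absence of correlated agreement.} Form the matrix $X'=[\mathbf a_0|\cdots|\mathbf a_m]$. It satisfies $H\mathbf a_i=\mathbf s_i$. Since $S$ has no $(H,m,E,E^+)$-correlated agreement, $\mathrm{rowwt}(X')>E^+$. Now $\mathrm{rowsupp}(X')=\mathrm{Supp}(\widetilde U)$, because coordinate $i$ vanishes on all of $\widetilde U$ iff the affine form $f_i$ is identically zero, iff $(\mathbf a_0)_i=\cdots=(\mathbf a_m)_i=0$. Therefore $|\mathrm{Supp}(\widetilde U)|>E^+$.

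\textbf{Conclusion.} Lemma~\ref{lem:affine-space-ball}, in its parameter-count form, gives
\[
K\gamma^{t-h}\le |J|\le \frac{E^++1}{E^+-E+1}\,q^{m-1}.
\]
This contradicts the assumption up to the floor in $B_{E,E^+}$.

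\textbf{Main obstacle: the floor.} Counting parameters gives $|J|\le \frac{N}{N-E}q^{m-1}$ with $N=|\mathrm{Supp}(\widetilde U)|\ge E^++1$, where the constant before flooring is $\frac{E^++1}{E^+-E+1}$. Its floor does not immediately absorb the factor $q^{m-1}$. To close this I would refine the counting in the proof of Lemma~\ref{lem:affine-space-ball}, for instance by restricting to lines through the parameter set and applying the line bound of Lemma~\ref{lem:gap-line-ball} to integer counts. If that fails, I would accept the unfloored constant, which is all the later union bounds require.
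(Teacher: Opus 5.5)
Your argument is the paper's proof step for step. Both argue by contradiction, use $K\gamma^{t-h}>q^{m-1}$ to invoke Corollary~\ref{cor:rank-reduction-m-iterated}, use the absence of correlated agreement to get $|\mathrm{Supp}(\widetilde U)|>E^+$, and then apply Lemma~\ref{lem:affine-space-ball}. Your detour through parameter counts is harmless but not needed. Since $H\mathbf a_i=\mathbf s_i$ and $\mathbf s_1,\dots,\mathbf s_m$ are independent (because $\dim S=m$), the $\mathbf a_i$ are independent, so $|J|\le|\widetilde U\cap B_E|$ holds directly.

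The floor issue you flag is genuine, and the paper's proof passes over it. It quotes Lemma~\ref{lem:affine-space-ball} as giving $B_{E,E^+}q^{m-1}$, but that lemma only yields $\frac{E^++1}{E^+-E+1}q^{m-1}$. Do not try to sharpen the counting: the floored statement already fails for $m=2$.

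Here is a counterexample. Take $q=2$, $n=5$, and $\cC=\{\mathbf 0,\mathbf 1\}$, so $d=5$. Let $E=1$, $E^+=2$, and use unit vectors $\mathbf e_i$ to set $\mathbf u_0=\mathbf e_3$, $\mathbf u_1=\mathbf e_1+\mathbf e_3$, $\mathbf u_2=\mathbf e_2+\mathbf e_3$, with $\mathbf s_i=H\mathbf u_i$.
\begin{itemize}
\item No nonzero vector supported on $\{1,2,3\}$ is a codeword, so $\dim S=2$ and $h=3$.
\item The matrix $X=[\mathbf e_3|\mathbf e_1|\mathbf e_2]$ is a $1$-witness matrix for the parameters $(0,0),(1,0),(0,1)$, with $K=t=3$.
\item Each coset is $\mathbf u_i+\cC=\{\mathbf u_i,\mathbf u_i+\mathbf 1\}$ with $\wt(\mathbf u_i+\mathbf 1)\ge 3$. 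So any $X'$ with $H\mathbf x'_i=\mathbf s_i$ and $\mathrm{rowwt}(X')\le 2$ would have to equal $[\mathbf u_0|\mathbf u_1|\mathbf u_2]$, which has row weight $3$. Hence $S$ has no $(H,2,1,2)$-correlated agreement.
\item Yet $K\gamma^{t-h}=3>2=\lfloor 3/2\rfloor\cdot 2$, while the unfloored bound $\frac32\cdot 2=3$ is attained exactly.
\end{itemize}

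So your fallback, the unfloored constant, is the correct theorem, and nothing downstream is lost:
\begin{itemize}
\item For $m=1$ the floor is free, because $|\widetilde U\cap B_E|$ is an integer.
\item For $E^+=E$ the constant is $E+1$, which is already an integer.
\item In Theorem~\ref{thm:explicit-direct-CA-m}, the estimate $\frac{E^++1}{E^+-E+1}=1+\frac{E}{\lceil\varepsilon n\rceil+1}\le 1+\frac{\rho}{\varepsilon}$ holds without the floor.
\item Only thresholds such as $K_m$ in the random-RS theorem need to be restated with the unfloored constant.
\end{itemize}
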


\begin{proof}
Assume for contradiction that $K\gamma^{t-h}>B_{E,E^+}q^{m-1}$. Since $B_{E,E^+}\ge 1$, we also have $K\gamma^{t-h}>q^{m-1}$. Hence Corollary~\ref{cor:rank-reduction-m-iterated} applies. Therefore there exists a subset $J\subseteq[K]$ with $|J|\ge K\gamma^{t-h}$ and vectors $\mathbf{a}_0,\mathbf{a}_1,\ldots,\mathbf{a}_m\in\mathbb F_q^n$ such that, for every
$j\in J$, $\mathbf{x}_j=\mathbf{a}_0+\sum_{i=1}^m(\mathbf{\beta}_j)_i \mathbf{a}_i$ and $H\mathbf{a}_i=\mathbf{s}_i\quad(0\le i\le m)$.

Let $\widetilde U=\mathbf{a}_0+\mathrm{span}(\mathbf{a}_1,\ldots,\mathbf{a}_m)$. Then $\widetilde U_H=S$. Since $S$ has no $(H,m,E,E^+)$-correlated agreement, Lemma~\ref{lem:syndrome-CA-reformulation} implies that $|\mathrm{Supp}(\widetilde U)|>E^+$. Every surviving witness vector has weight at most $E$, so $|J|\le |\widetilde U\cap B_E|$. By Lemma~\ref{lem:affine-space-ball}, $|\widetilde U\cap B_E|\le B_{E,E^+}q^{m-1}.$ Therefore $K\gamma^{t-h}\le |J| \le B_{E,E^+}q^{m-1}$, contradicting the assumption. This proves the claim.
\end{proof}

\begin{comment}
\begin{proof}
By Corollary~\ref{cor:rank-reduction-m-iterated}, there exists a subset
$J\subseteq[K]$ with $|J|\ge K\gamma^{\,t-(m+1)}$ such that the corresponding witness vectors lie in an affine $m$-space $\widetilde U = \mathbf{a}_0 + \mathrm{span}(\mathbf{a}_1,\dots,\mathbf{a}_m)\subseteq \mathbb{F}_q^n$ with $\widetilde U_H=S.$ Because $S$ has no $(H,m,E,E^+)$-correlated agreement, Lemma~\ref{lem:syndrome-CA-reformulation}
implies that no such preimage space can satisfy $|\mathrm{Supp}(\widetilde U)|\le E^+$. Thus, $|\mathrm{Supp}(\widetilde U)|>E^+$.

Moreover, every surviving witness column belongs to $\widetilde U$ and has weight at most $E$. Hence, $|J| \le|\widetilde U\cap B_E|$. Applying Lemma~\ref{lem:affine-space-ball}, we obtain $|J| \le\frac{E^+ +1}{E^+-E+1}\,q^{m-1}$. Since $|J|\ge K\cdot \gamma^{t-(m+1)}$, this yields $K\gamma^{\,t-(m+1)}\le B_{E,E^+}\cdot q^{m-1}$ as claimed.
\end{proof}

\end{comment}

We next generalize the preceding union-bound argument, Theorem~\ref{thm:bad-line-moment}, from affine lines to affine syndrome spaces of dimension $m$.

\begin{thm}\label{thm:moment-bad-m-space}
Fix $m\ge 1$. Let $0<E\le E^+$, let $s\ge 0$ be an integer, and let $d>E$. Define $\gamma:=\frac{d-E}{d},\, B_{E,E^+}:=\left\lfloor \frac{E^+ + 1}{E^+ - E + 1}\right\rfloor$.
Assume that $K > B_{E,E^+}\,q^{m-1}\gamma^{-s}$. Let $\cC$ be a random linear code with parity-check matrix $H\in \F_q^{r\times n}$. Then the probability that there exists an affine syndrome space $S\subseteq \F_q^r$ of dimension $m$ such that $S$ has no $(H,m,E,E^+)$-correlated agreement, $|S\cap H_E|\geq K$ and $d(\cC)\geq d$ is at most $\binom{q^m}{K}q^{-r(s+1)}\left(\binom{K}{m+s+1}|B_E|^{m+s+1}+\binom{K}{m+s+2}|B_E|^{m+s+2}\right)$.
\end{thm}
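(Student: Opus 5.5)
The plan is to mirror the proof of Theorem~\ref{thm:bad-line-moment}, with one change: we avoid the union bound over the syndrome space $S$ (there is no $q^{r(m+1)}$ factor in the claimed bound). Instead we use the fact that, once the parameters $\beta_1,\dots,\beta_K$ are fixed, the linear relations among the target syndromes are determined by the $\beta_j$ alone. We split according to $h:=\dim\mathrm{span}\{\mathbf{s}_0,\dots,\mathbf{s}_m\}\in\{m,m+1\}$; the two cases produce the two summands.

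\emph{Deterministic step.} Suppose a bad $S$ exists. Pick $K$ points of $S\cap H_E$, written as $\mathbf{s}_0+\sum_i(\beta_j)_i\mathbf{s}_i$ with distinct $\beta_j\in\F_q^m$, and choose witnesses $\mathbf{x}_j\in B_E$. In the case $h=m$, we have $\mathbf{s}_0\in\mathrm{span}(\mathbf{s}_1,\dots,\mathbf{s}_m)$, so $S$ is a linear subspace. We first reparametrize with $\mathbf{s}_0:=\mathbf{0}$, which merely translates the $\beta_j$. This is harmless because we will union over all $K$-subsets of $\F_q^m$ anyway.

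Since $K>B_{E,E^+}q^{m-1}\gamma^{-s}\ge q^{m-1}$, the $\beta_j$ do not lie in an affine hyperplane, so Lemma~\ref{lem:rank-reduction-m} and Corollary~\ref{cor:rank-reduction-m-iterated} apply. Theorem~\ref{thm:min-rank-bad-m-space} then forces $\mathrm{rank}(X)=t$ with $K\gamma^{t-h}\le B_{E,E^+}q^{m-1}$, hence $t\ge h+s+1$. So we can select $J\subseteq[K]$ with $|J|=h+s+1$ such that the columns $\{\mathbf{x}_j\}_{j\in J}$ are linearly independent.

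\emph{Probabilistic step.} Fix $\{\beta_1,\dots,\beta_K\}$, which gives $\binom{q^m}{K}$ choices, and fix $J$ together with the vectors $\mathbf{x}_j\in B_E$ for $j\in J$, which gives $\binom{K}{h+s+1}|B_E|^{h+s+1}$ choices. Let $M_J$ be the matrix whose rows are $\mathbf{u}_0,\dots,\mathbf{u}_m$ restricted to $J$ when $h=m+1$, and $\mathbf{u}_1,\dots,\mathbf{u}_m$ restricted to $J$ when $h=m$ (using $\mathbf{s}_0=\mathbf{0}$). Then $Y_J=HX_J$ factors through $M_J$. Consequently $\ker M_J\subseteq\ker(HX_J)$, and $\ker M_J$ has dimension at least $|J|-h=s+1$ and depends only on the $\beta_j$.

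Take a basis $\mathbf{z}_1,\dots,\mathbf{z}_{s+1}$ of $\ker M_J$. Since $X_J$ has full column rank, the vectors $X_J\mathbf{z}_k$ are linearly independent. The event requires $H X_J\mathbf{z}_k=\mathbf{0}$ for all $k$, which by Lemma~\ref{lem:random-image} has probability $q^{-r(s+1)}$. Summing over the two values of $h$ (that is, $|J|=m+s+1$ or $m+s+2$) and over the choices above yields exactly the stated bound.

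\emph{Main obstacle.} The delicate point is the case $h=m$. Naively, the dependency of $\mathbf{s}_0$ on $\mathbf{s}_1,\dots,\mathbf{s}_m$ is unknown, so the kernel of $Y_J$ is not determined by the $\beta_j$, and one would lose either a $q^{-r}$ factor or an extra union over the coefficients. The translation trick handles this: set $\mathbf{s}_0=\mathbf{0}$ and absorb the shift into the choice of $K$-subset. Once this is done, the kernel is determined by the $\beta_j$. It remains to check that the hyperplane hypothesis of Lemma~\ref{lem:rank-reduction-m} survives the translation; it does, because translation preserves affine hyperplanes.
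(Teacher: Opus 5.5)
Your proof is correct. The deterministic half matches the paper's: the $\beta_j$ avoid every affine hyperplane since $K>q^{m-1}$, and Theorem~\ref{thm:min-rank-bad-m-space} then applies, using $\rank(X)\ge\rank(HX)=h$, to give $\rank(X)\ge h+s+1$. The probabilistic half takes a different route.

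The paper does union-bound over $S$. For fixed $S$ and $\{\beta_j\}$ it fixes the target columns and uses $\Pr[HX_J=Y_J]=q^{-r(h+s+1)}$. It then sums over the at most $q^{rh}$ affine $m$-spaces with $\dim\mathrm{span}\{\mathbf{s}_0,\dots,\mathbf{s}_m\}=h$, so $q^{rh}$ cancels and leaves $q^{-r(s+1)}$. Your reading, that the missing $q^{r(m+1)}$ factor rules out a union over $S$, is therefore not right: the union is there, just absorbed.

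You instead integrate $S$ out. You bound the probability that $H$ annihilates the $s+1$ independent vectors $X_J\mathbf{z}_k$, where the $\mathbf{z}_k$ span $\ker M_J$, a kernel determined by the $\beta_j$ alone. Your translation $\mathbf{s}_0\mapsto\mathbf{0}$ in the case $h=m$ plays exactly the role of the paper's unstated convention: a linear $S$ is counted once as a set, not once per choice of $\mathbf{s}_0\in S$. Without it, either argument would pick up an extra factor $q^{m}$ in the first summand.

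What your version buys: the count is self-contained, with no enumeration of subspaces and none of the parametrization subtlety the paper glosses over. It is also the same ``$U\subseteq\ker(HX)$, hence $X\cdot U\subseteq\cC$'' mechanism the paper later relies on for random Reed--Solomon codes. What the paper's version buys: it is a verbatim lift of the fixed-target argument of Theorem~\ref{thm:bad-line-moment}, keeping the line and space cases formally parallel.
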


\begin{proof}
Let $S:=\mathbf{s}_0+\mathrm{span}(\mathbf{s}_1,\dots,\mathbf{s}_m)\subseteq \F_q^r$ be an affine syndrome space and a $K$-element subset $\mathbf{\beta}_1,\dots,\mathbf{\beta}_K\in \F_q^m$ of pairwise distinct elements. Let $h:=\dim_{\mathbb F_q}\mathrm{span}\{\mathbf{s}_0,\ldots,\mathbf{s}_m\}$. Since $S$ has dimension $m$, we have $h\in\{m,m+1\}$. Consider the event
\[
\begin{aligned}
\mathcal{A}(S;\mathbf{\beta}_1,\dots,\mathbf{\beta}_K)
:=
&\Bigl[
\mathbf{s}_0+\sum_{i=1}^m (\mathbf{\beta}_j)_i\cdot \mathbf{s}_i \in H_E \text{ for all } j\in[K]
\Bigr] \\
&\wedge
\Bigl[
S \text{ has no } (H,m,E,E^+)\text{-correlated agreement}
\Bigr] \wedge
\bigl[d(\cC)\ge d\bigr].
\end{aligned}
\]
If this event occurs, then there exists an $E$-witness matrix $X=[\mathbf{x}_1|\cdots|\mathbf{x}_K]$ for $(S;\mathbf{\beta}_1,\dots,\mathbf{\beta}_K)$. Since $d(\cC)\ge d$ and $K > B_{E,E^+}\,q^{m-1}\gamma^{-s}\ge q^{m-1}$, the $K$ distinct elements cannot lie in any affine hyperplane of $\F_q^m$. Hence Theorem~\ref{thm:min-rank-bad-m-space} applies and shows that every such witness matrix must satisfy $\rank(X)\ge h+s+1$.

Therefore one may choose a subset of $h+s+1$ columns that is linearly independent. There are at most $\binom{K}{h+s+1}|B_E|^{h+s+1}$ ways to choose these columns. By Lemma~\ref{lem:random-image}, the probability that a fixed linearly independent $h+s+1$-tuple maps to the prescribed syndrome columns is exactly $q^{-r(h+s+1)}$. Hence
\[
\Pr_{\cC}\!\bigl[\mathcal{A}(S;\mathbf{\beta}_1,\dots,\mathbf{\beta}_K)\bigr]
\le
\binom{K}{h+s+1}|B_E|^{h+s+1}\cdot q^{-r(h+s+1)}.
\]
Now fix $S$. If $|S\cap H_E|\ge K,\, S \text{ has no }(H,m,E,E^+)\text{-correlated agreement},\,d(\cC)\ge d$, then we can choose a $K$-element subset $\{\mathbf{\beta}_1,\dots,\mathbf{\beta}_K\}\subseteq \mathbb{F}_q^m$ such that $\mathbf{s}_0+\sum_{i=1}^m (\mathbf{\beta}_j)_i\mathbf{s}_i\in H_E$ for $j\in [K]$.

%For each such choice, the event $\mathcal{A}(S;\mathbf{\beta}_1,\dots,\mathbf{\beta}_K)$ occurs. We next bound the probability that there exists affine syndrome space $S\subseteq \F_q^r$ such that
%\[
%\Bigl[|S\cap H_E|\geq K\Bigr] \wedge\Bigl[S \text{ has no } (H,m,E,E^+)\text{-correlated agreement}\Bigr] \wedge\bigl[d(\cC)\ge d\bigr].
%\]
Taking a union bound over all $\binom{q^m}{K}$ possible $K$-subsets $\{\beta_1,\ldots,\beta_K\}$ of $\mathbb{F}_q^m$ and all affine spaces $S$ of dimension $m$, the failure probability is at most 
\[
\sum_{S,(\beta_1,\ldots,\beta_K)}\Pr_{\cC}\!\bigl[\mathcal{A}(S;\mathbf{\beta}_1,\dots,\mathbf{\beta}_K)\bigr]\leq 
\binom{q^m}{K}q^{-r(s+1)}
\left(
\binom{K}{m+s+1}|B_E|^{m+s+1}
+
\binom{K}{m+s+2}|B_E|^{m+s+2}
\right).
\]
This completes the proof.
\end{proof}

We now derive an explicit random-coding corollary in the same spirit as Theorem~\ref{thm:explicit-two-radius}.

\begin{thm}\label{thm:explicit-direct-CA-m}
Fix an integer $m \ge 1$, a rate $R \in (0,1)$, and $0<\varepsilon < (1-R)/2$. Let $a_\varepsilon := \frac{\varepsilon}{\log_2(1/\varepsilon)}$, $0<\rho<1-R-\varepsilon-a_\varepsilon$, $\delta := 1-R-a_\varepsilon, \lambda := \left\lceil \frac{(m+1)(1-R)}{\varepsilon}\right\rceil - m - 2$, $\tau :=\left\lceil\left(1+\frac{\rho}{\varepsilon}\right)\left(\frac{\delta}{\delta-\rho}\right)^\lambda\right\rceil,\,E := \lfloor \rho n\rfloor,\, E^+ := E+\lceil \varepsilon n\rceil$. Let $q$ be a prime power such that $q\ge \max\left\{\left(\frac{2}{\varepsilon}\right)^{1/\varepsilon},\tau+1\right\}$. Then, with probability at least $1-q^{-\Omega(n)}$, a random linear code $\cC$ of rate $R$ and length $n$ has minimum distance at least $\lfloor\delta n\rfloor$ and satisfies the $(m,E,E^+,\tau/q)$-space correlated agreement property.
\end{thm}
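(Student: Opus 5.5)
The proof mirrors that of Theorem~\ref{thm:explicit-two-radius}, with Theorem~\ref{thm:moment-bad-m-space} replacing Theorem~\ref{thm:bad-line-moment}. Set $d:=\lfloor\delta n\rfloor$, $\gamma:=(d-E)/d$, and $b_\varepsilon:=\varepsilon/(1+\log_2(1/\varepsilon))$.

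\textbf{Step 1: minimum distance.} This is verbatim from the line case. By Lemmas~\ref{lem:H_q-bound} and~\ref{lem:ball-volume-bound}, with $q\ge(2/\varepsilon)^{1/\varepsilon}$,
\[
\Pr_{\cC}[d(\cC)<d]\le |B_d|\,q^{-r}\le q^{-(a_\varepsilon-b_\varepsilon)n+o(n)}=q^{-\Omega(n)}.
\]
Since $\delta-\rho-\varepsilon>0$, we also have $E^+<d$ for large $n$.

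\textbf{Step 2: choice of $K$ and $s$.} As before, $B_{E,E^+}\le 1+\rho/\varepsilon$ and $\gamma^{-1}\le\delta/(\delta-\rho)$. The property to prove is that any $S$ with $|S\cap H_E|>\tau q^{m-1}$ admits correlated agreement, since $|S|=q^m$. So I apply Theorem~\ref{thm:moment-bad-m-space} with
\[
K:=\tau q^{m-1}+1,\qquad s:=\lambda .
\]
Then $K>B_{E,E^+}\,q^{m-1}\gamma^{-\lambda}$ holds because $B_{E,E^+}\gamma^{-\lambda}\le\tau$.

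\textbf{Step 3: exponent bound.} The failure probability is at most
\[
\binom{q^m}{K}\,q^{-r(s+1)}\left(\binom{K}{m+s+1}|B_E|^{m+s+1}+\binom{K}{m+s+2}|B_E|^{m+s+2}\right).
\]
Here $m$, $q$, $\tau$, and $\lambda$ are all constants depending only on $(R,\rho,\varepsilon,m)$, so the binomial factors contribute $q^{O(1)}$.

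The factor $q^{-r(s+1)}$ as printed looks like it has already absorbed the union over affine spaces $S$. There are at most $q^{r(m+1)}$ choices of $(\mathbf{s}_0,\dots,\mathbf{s}_m)$, against $q^{-r(h+s+1)}$ per witness set. I would reconcile this carefully in the writeup, since it is the main bookkeeping risk. To be safe, I plan to bound the sum by
\[
q^{r(m+1)}\cdot|B_E|^{h+s+1}\,q^{-r(h+s+1)}\qquad\text{for } h\in\{m,m+1\}.
\]
Using $|B_E|\le q^{(\rho+b_\varepsilon)n+o(n)}$ and $r=(1-R)n$, the exponent divided by $n$ is at most
\[
(m+1)(1-R)-(h+s+1)(1-R-\rho-b_\varepsilon).
\]
Since $1-R-\rho-b_\varepsilon>\varepsilon$, it suffices to have $(m+\lambda+1)\varepsilon\ge(m+1)(1-R)$, taking the worst case $h=m$. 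With $\lambda=\lceil(m+1)(1-R)/\varepsilon\rceil-m-2$ we get $m+\lambda+1=\lceil(m+1)(1-R)/\varepsilon\rceil-1$. This is just short of the needed bound, so I would use the strict slack $1-R-\rho-b_\varepsilon>\varepsilon+(a_\varepsilon-b_\varepsilon)$. If that slack does not suffice, I would check whether the union over $S$ is really only $q^{r\cdot m}$-type, for example by normalising $\mathbf{s}_0$ using the witness columns themselves. If so, the printed $q^{-r(s+1)}$ bound yields the exponent $-(s+1)(1-R)+(m+s+1)(\rho+b_\varepsilon)$.

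I expect this exponent bookkeeping to be the main obstacle. The deterministic input is already supplied by Theorem~\ref{thm:min-rank-bad-m-space}, and the hyperplane condition holds automatically because $K>q^{m-1}$.

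\textbf{Step 4: conclusion.} The union of the two failure events has probability $q^{-\Omega(n)}$. On the complement, every $m$-dimensional affine syndrome space with $|S\cap H_E|/|S|>\tau/q$ has $(H,m,E,E^+)$-correlated agreement. Lemma~\ref{lem:syndrome-CA-reformulation} translates this back to Definition~\ref{def:correlated-agreement-space}.
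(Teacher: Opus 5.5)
\textbf{There is a genuine gap in Step 3.} Steps 1, 2 and 4 match the paper. But Step 3 is the only non-routine step, you leave it unresolved, and the route you commit to ``to be safe'' fails in the stated parameter range.

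You charge both strata $h\in\{m,m+1\}$ with all $q^{r(m+1)}$ tuples $(\mathbf{s}_0,\dots,\mathbf{s}_m)$. Then the $h=m$ term has exponent (per $n$) $(m+1)(1-R)-(m+\lambda+1)(1-R-\rho-b_\varepsilon)$, where $m+\lambda+1=\lceil (m+1)(1-R)/\varepsilon\rceil-1$. The extra slack $a_\varepsilon-b_\varepsilon$ does not rescue this in general. Take $R=1/2$, $\varepsilon=1/5$, $m=1$. Then $\lambda=2$ and $b_\varepsilon\approx 0.0602$, so the exponent is $1-4(1/2-\rho-b_\varepsilon)$. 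This is positive for every $\rho$ between $0.19$ and $0.213$, and all of these lie inside the allowed range $\rho<1-R-\varepsilon-a_\varepsilon\approx 0.2139$.

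Your fallback is also incomplete. The exponent you write for the printed bound tracks only the $|B_E|^{m+s+1}$ term, whereas the binding term is the $|B_E|^{m+s+2}$ one.

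\textbf{The missing idea is to stratify the union over $S$ by $h=\dim\mathrm{span}\{\mathbf{s}_0,\dots,\mathbf{s}_m\}$.}
\begin{itemize}
\item If $h=m$, then $\mathbf{s}_0\in\mathrm{span}(\mathbf{s}_1,\dots,\mathbf{s}_m)$. So $S$ is a linear $m$-subspace, and there are at most $q^{rm}$ of them.
\item If $h=m+1$, there are at most $q^{r(m+1)}$ such $S$. But now Theorem~\ref{thm:min-rank-bad-m-space} forces $m+s+2$ independent witness columns, so you pay one more factor of $q^{-r}$.
\end{itemize}
In both strata the count $q^{rh}$ cancels the $q^{-rh}$ coming from Lemma~\ref{lem:random-image}, leaving $q^{-r(s+1)}|B_E|^{h+s+1}$. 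This is exactly the printed bound of Theorem~\ref{thm:moment-bad-m-space}. So no reconciliation is needed and you can cite it directly. The union over $S$ is $q^{rh}$-type, not $q^{rm}$-type, and the extra $q^r$ for non-linear $S$ is paid for by the extra witness column.

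The paper then bounds the dominant $h=m+1$ term, whose exponent is
\[
(m+1)(1-R)-(m+\lambda+2)\bigl(1-R-\rho-b_\varepsilon\bigr).
\]
Since $m+\lambda+2=\lceil (m+1)(1-R)/\varepsilon\rceil$ and $1-R-\rho-b_\varepsilon>\varepsilon$ strictly, this exponent is negative. The $h=m$ term has exponent $m(1-R)-(m+\lambda+1)(1-R-\rho-b_\varepsilon)$. It is also negative, because $(m+\lambda+1)\varepsilon\ge (m+1)(1-R)-\varepsilon>m(1-R)$. This is why $\lambda$ carries the offset $-m-2$: it is calibrated to the $h=m+1$ stratum, not to a uniform $q^{r(m+1)}$ count.
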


\begin{proof}
Define $b_\varepsilon:=\frac{\varepsilon}{1+\log_2(1/\varepsilon)}$. Let $d:=\lfloor \delta n\rfloor,\, \gamma:=\frac{d-E}{d},\, B_n:=\left\lfloor \frac{E^+ +1}{E^+ - E +1}\right\rfloor$. Since $\delta>\rho$, we have $d>E$ for all sufficiently large $n$.

We first estimate the probability that $\cC$ has minimum distance less than $d$. By Lemma~\ref{lem:H_q-bound} and Lemma~\ref{lem:ball-volume-bound}, $|B_{d}| \le q^{(\delta+b_\varepsilon)n+o(n)}$. Therefore, $\Pr_{\cC}[d(\cC)\le d]\le |B_{d}|\,q^{-r} \le q^{(\delta+b_\varepsilon-(1-R))n+o(n)}$. As $\delta+b_\varepsilon-(1-R)=-(a_\varepsilon-b_\varepsilon)$, we obtain $\Pr_{\cC}[d(\cC)\le d] \le q^{-(a_\varepsilon-b_\varepsilon)n+o(n)}= q^{-\Omega(n)}$.

Next we estimate the bad-space probability on the event $d(\cC)\ge d$. Let $K:=\tau q^{m-1}+1$. Since $B_n \le 1+\frac{\rho}{\varepsilon},\,\gamma^{-1}=\frac{d}{d-E}\le \frac{\delta}{\delta-\rho}$, we have $B_n q^{m-1}\gamma^{-\lambda}\le q^{m-1} \left(1+\frac{\rho}{\varepsilon}\right)\left(\frac{\delta}{\delta-\rho}\right)^\lambda\le\tau q^{m-1}< K$. Also, since $q\ge \tau+1$, we have $\tau<q$, hence $K \le q^m$. If $\frac{|S\cap H_E|}{|S|}>\frac{\tau}{q}$, then $|S\cap H_E|>\tau q^{m-1}$, hence $|S\cap H_E|\ge K$. We bound
the probability that there exists affine syndrome space $S\subseteq \F_q^r$ such that
\[
\Bigl[|S\cap H_E|\ge K\Bigr] \wedge\Bigl[S \text{ has no } (H,m,E,E^+)\text{-correlated agreement}\Bigr] \wedge\bigl[d(\cC)\ge d\bigr].
\]
By Theorem~\ref{thm:moment-bad-m-space} with $d=d$ and $s=\lambda$, we obtain the probability is at most 
\[
\binom{q^m}{K}q^{-r(\lambda+1)}\left(\binom{K}{m+\lambda+1}|B_E|^{m+\lambda+1}+\binom{K}{m+\lambda+2}|B_E|^{m+\lambda+2}\right).
\]

Since $m$, $q$, and $K$ depend only on $(R,\rho,\varepsilon,m)$, the two binomial factors contribute only $q^{O(1)}$. On the other hand, Lemma~\ref{lem:ball-volume-bound} gives $|B_E|\le q^{(\rho+b_\varepsilon)n+o(n)}.$ Thus, the exponent in the preceding probability bound is at most $(m+1)(1-R)n + (m+\lambda+2)(\rho+b_\varepsilon-(1-R))n + o(n)$. By the assumption on $\rho$, we have $1-R-\rho-b_\varepsilon>\varepsilon$. Since $m+\lambda+2\ge \frac{(m+1)(1-R)}{\varepsilon}$, it follows that $(m+\lambda+2)(1-R-\rho-b_\varepsilon)>(m+1)(1-R)$. Hence, $(m+1)(1-R)+(m+\lambda+2)(\rho+b_\varepsilon-(1-R))<0$ and therefore the probability is at most $q^{-\Omega(n)}$.

Combining this with the minimum-distance bound, we conclude that with probability at least $1-q^{-\Omega(n)},$ every affine syndrome space $S$ of dimension $m$ with $\frac{|S\cap H_E|}{|S|}>\tau/q$ admits a matrix $X=[\mathbf{x}_0|\cdots|\mathbf{x}_m]$ with $H\mathbf{x}_i=\mathbf{s}_i\ (0\le i\le m)$ and $\mathrm{rowwt}(X)\le E^+$. The equivalent correlated-agreement formulation now follows from Lemma~\ref{lem:syndrome-CA-reformulation}.
\end{proof}

We next record the no-slack analogue. Since Theorem~\ref{thm:explicit-direct-CA-m} is stated in the constant-alphabet setting form $E^+=E+\lceil \varepsilon n\rceil$, the large-alphabet result is obtained instead by specializing Theorem~\ref{thm:moment-bad-m-space} to the case $m=1$ and $E^+=E$, and then lifting the line statement to affine spaces via Lemma~\ref{lem:correlated-line-to-space} together with the list-decoding bound in Lemma~\ref{lem:list-decoding}.

\begin{thm}\label{thm:space-one-radius}
Fix $R \in (0,1)$ and $0<\varepsilon < (1-R)/2$. Assume that $0<\rho<1-R-\varepsilon$. Let $\delta := 1-R-\varepsilon, \ell := \left\lceil \frac{2(1-R)}{\varepsilon}\right\rceil - 1, E := \lfloor \rho n\rfloor, d := \lfloor \delta n\rfloor$, and define $K := \left\lfloor(E+1)\left(\frac{d}{d-E}\right)^\ell \right\rfloor$. Let $q$ be a prime power such that $q=\Theta(n)$ and $q \ge K+2$.

Then, with probability at least $1-q^{-\Omega(n)}$, a random linear code $\cC$ of rate $R$, length $n$ has minimum distance at least $d$ and satisfies the $(1,E,E,K/q)$-correlated agreement property.
\end{thm}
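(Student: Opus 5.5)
We prove Theorem~\ref{thm:space-one-radius} by specializing Theorem~\ref{thm:moment-bad-m-space} to $m=1$, $E^+=E$, $s=\ell$. Before doing so we note that for a line ($m=1$) correlated agreement with $E^+=E$ is stronger than the proximity-gap conclusion, but the deterministic chain already delivers it. Theorem~\ref{thm:min-rank-bad-m-space} with $m=1$ gives: if $S$ has no $(H,1,E,E)$-correlated agreement and admits a witness matrix of rank $t\ge h$ covering $K'$ elements, then $K'\gamma^{t-h}\le B_{E,E}q^{0}=E+1$. This rests on Lemma~\ref{lem:affine-space-ball} with $E^+=E$ and Corollary~\ref{cor:rank-reduction-m-iterated}; the hyperplane condition for $m=1$ only needs at least two distinct $\beta_j$. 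So no list-decoding input is needed for the line case itself.

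\textbf{Step 1 (setup).} Put $d=\lfloor\delta n\rfloor$, $\gamma=(d-E)/d$ and $K'=K+1$. Since $K=\lfloor(E+1)\gamma^{-\ell}\rfloor$, we have $K+1>(E+1)\gamma^{-\ell}=B_{E,E}\gamma^{-\ell}$, which is the hypothesis of Theorem~\ref{thm:moment-bad-m-space} with $s=\ell$. The condition $|S\cap H_E|/|S|>K/q$ means $|S\cap H_E|\ge K+1$, and $q\ge K+2$ guarantees $K+1\le q$, so such subsets exist.

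\textbf{Step 2 (minimum distance).} We bound $\Pr[d(\cC)<d]\le|B_d|q^{-r}$. With $q=\Theta(n)$, $|B_d|\le\binom{n}{d}q^{d}=q^{d+o(n)}$, because $\binom{n}{d}\le 2^n=q^{o(n)}$. The exponent is then $(\delta-(1-R))n+o(n)=-\varepsilon n+o(n)$.

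\textbf{Step 3 (union bound).} Theorem~\ref{thm:moment-bad-m-space} bounds the bad probability by
\[
\binom{q}{K+1}q^{-r(\ell+1)}\Bigl(\tbinom{K+1}{\ell+2}|B_E|^{\ell+2}+\tbinom{K+1}{\ell+3}|B_E|^{\ell+3}\Bigr).
\]
As stated, this expression omits the factor $q^{(m+1)r}=q^{2r}$ coming from the choice of $S$, so I would redo the count keeping that factor explicit. The $h=m+1=2$ term then gives $q^{2r}\binom{K+1}{\ell+3}|B_E|^{\ell+3}q^{-r(\ell+3)}$. The degenerate $h=1$ case, where $\mathbf{s}_0\in\mathrm{span}(\mathbf{s}_1)$, has at most $q^{r+1}$ choices of $S$, and its term is handled the same way with exponent $\ell+2$. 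Because $q,K=\Theta(n)$, all binomials are $q^{o(n)}$ and $|B_E|\le q^{\rho n+o(n)}$.

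\textbf{Step 4 (exponent).} For $h=2$ the exponent is
\[
2(1-R)n-(\ell+3)(1-R-\rho)n+o(n)<\bigl(2(1-R)-(\ell+3)\varepsilon\bigr)n+o(n),
\]
which is negative since $\ell+3>2(1-R)/\varepsilon$. For $h=1$ the exponent is $(1-R)n-(\ell+2)\varepsilon n+o(n)$, and $\ell+2>(1-R)/\varepsilon$ makes it negative as well. Combining with Step 2 gives failure probability $q^{-\Omega(n)}$. On the complementary event, every line with more than $K$ points in $H_E$ admits $X=[\mathbf{x}_0|\mathbf{x}_1]$ with $\mathrm{rowwt}(X)\le E$, and Lemma~\ref{lem:syndrome-CA-reformulation} converts this into $(1,E,E,K/q)$-correlated agreement.

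The main obstacle is the bookkeeping in Step 3: the count over syndrome spaces $S$ and the correct per-case exponents $h+s+1$ for both values of $h$ must be tracked carefully, since the displayed bound in Theorem~\ref{thm:moment-bad-m-space} suppresses them. The margin comes entirely from the choice $\ell+3>2(1-R)/\varepsilon$, so verifying that each case fits within it is the crux.
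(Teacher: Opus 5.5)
Your proposal is correct and follows the paper's proof essentially line by line. You specialize Theorem~\ref{thm:moment-bad-m-space} to $m=1$, $E^+=E$, $s=\ell$ via $K+1>(E+1)\gamma^{-\ell}$, bound $\Pr[d(\cC)<d]\le |B_d|q^{-r}\le q^{-\varepsilon n+o(n)}$, and close the exponent using $\ell+3>2(1-R)/\varepsilon$, exactly as the paper does.

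One small correction: the displayed bound in Theorem~\ref{thm:moment-bad-m-space} does not omit the count over $S$. Its factor $q^{-r(s+1)}$ is already $q^{hr}\cdot q^{-r(h+s+1)}$, so your ``redone'' count reproduces the paper's exponent $2(1-R)n+(\ell+3)(\rho-(1-R))n+o(n)$. Your separate $h=1$ term only adds back a harmless factor $q$ that the paper's display drops, and it is dominated by the $h=2$ term, which is the only term the paper tracks.
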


\begin{proof}
For $m=1$, $E^+=E$, we have $B_{E,E}=E+1$. Since $K+1>(E+1)\left(\frac{d}{d-E}\right)^\ell$, Theorem~\ref{thm:moment-bad-m-space} gives the corresponding union bound. We first estimate the minimum-distance event. Since $q=\Theta(n)$ and $d=\Theta(n)$, Lemma~\ref{lem:ball-volume-bound} yields $|B_{d}|\le q^{d+o(n)}<q^{(1-R-\varepsilon)n+o(n)}$.
Therefore,
\[
\Pr_{\cC}[d(\cC)\le d]\le |B_{d}|\,q^{-r}
\le q^{-\varepsilon n+o(n)}.
\]
Next we estimate the bad-line probability on the event $d(\cC)\ge d$. We bound the probability that there exists affine syndrome space $L\subseteq \F_q^r$ such that
\[
\left[|L\cap H_E|\ge K+1\right]\wedge \left[L\text{ has no }(H,1,E,E)\text{-correlated agreement}\right]\wedge \left[d(\cC)\geq d\right].
\]
Applying Theorem~\ref{thm:moment-bad-m-space} with $m=1$, $E^+=E$, and $s=\ell$, we obtain the probability is at most 
$\binom{q}{K+1}q^{-r(\ell+1)}\left(\binom{K+1}{\ell+2}|B_E|^{\ell+2}+\binom{K+1}{\ell+3}|B_E|^{\ell+3}\right)$

Because $q=\Theta(n)$ and $K=\Theta(n)$, we have $\log_q\binom{q}{K+1}=o(n),\, \log_q\binom{K+1}{\ell+3}=o(n)$. Also, by Lemma~\ref{lem:ball-volume-bound}, $|B_E|\le q^{E+o(n)}\le q^{\rho n+o(n)}$. Hence the exponent in the preceding probability bound is at most $2(1-R)n+(\ell+3)(\rho-(1-R))n+o(n)$. Since $\rho<1-R-\varepsilon$, this is bounded above by $\bigl(2(1-R)-(\ell+3)\varepsilon\bigr)n+o(n)$. By the definition of $\ell$, $\ell+3=\left\lceil \frac{2(1-R)}{\varepsilon}\right\rceil+2 > \frac{2(1-R)}{\varepsilon}$ and therefore $2(1-R)-(\ell+3)\varepsilon<0$. Thus, the probability is at most $q^{-\Omega(n)}$.

Combining this with the minimum-distance bound, we conclude that with probability at least $1-q^{-\Omega(n)}$, every affine syndrome line $L$ with $\frac{|L\cap H_E|}{|L|}>\frac{K}{q}$ admits a matrix $X=[\mathbf{x}_0|\mathbf{x}_1]$ such that $H\mathbf{x}_0=\mathbf{s}_0,\,H\mathbf{x}_1=\mathbf{s}_1$ and $\mathrm{rowwt}(X)\le E$. This is exactly the $(1,E,E,K/q)$-correlated agreement property.
\end{proof}

\begin{cor}
Under the assumptions of Theorem~\ref{thm:space-one-radius}, with probability at least $1-q^{-\Omega(n)}$, the same random linear code satisfies the $(m,E,E,K/(q-1))$-space correlated agreement property simultaneously for every integer $m\ge 1$.
\end{cor}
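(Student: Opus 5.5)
The plan is to combine Theorem~\ref{thm:space-one-radius} with Lemma~\ref{lem:correlated-line-to-space} and Lemma~\ref{lem:list-decoding}, intersecting three high-probability events.

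\textbf{Step 1: line correlated agreement at every radius $E'\le E$.} Lemma~\ref{lem:correlated-line-to-space} needs the $(1,E',E',\tau)$-line correlated agreement property for every $0\le E'\le E$ with a single threshold $\tau$, so I will take $\tau=K/q$. For each $E'$, apply the union bound of Theorem~\ref{thm:moment-bad-m-space} with $m=1$, $E^+=E'$, $s=\ell$, and the same $d$.

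The threshold condition still holds: $B_{E',E'}=E'+1\le E+1$ and $\gamma'=(d-E')/d\ge\gamma$. Hence $(E'+1)\gamma'^{-\ell}\le(E+1)\gamma^{-\ell}<K+1$.

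The exponent bound from the proof of Theorem~\ref{thm:space-one-radius} also carries over, because $|B_{E'}|\le|B_E|$. So each failure probability is at most $q^{-\Omega(n)}$, uniformly in $E'$. A union bound over the $E+1\le n+1$ values of $E'$ keeps the total at $q^{-\Omega(n)}$. (Alternatively, I would check whether the event at radius $E$ already implies the event at smaller radii, but the union bound is simpler.)

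\textbf{Step 2: the list-size hypothesis.} Lemma~\ref{lem:correlated-line-to-space} also requires every Hamming ball of radius $E$ to contain fewer than $q$ codewords. I will get this from Lemma~\ref{lem:list-decoding} with $\varepsilon$ in place of its parameter. Since $\rho<1-R-\varepsilon$, a random linear code is, with high probability, $(1-R-\varepsilon,O(1/\varepsilon))$-average-radius list-decodable. This gives list size $O(1/\varepsilon)$ at radius $E$, which is less than $q=\Theta(n)$ for large $n$.

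\textbf{Main obstacle.} The alphabet requirement $q\ge2^{O(1/\varepsilon^2)}$ is automatic because $q=\Theta(n)$. The real issue is that Lemma~\ref{lem:list-decoding} is stated for the random-linear-code distribution, and we must make sure it matches the parity-check model with $r=(1-R)n$. If the generator and parity-check models differ, I would argue they agree up to an event of probability $q^{-\Omega(n)}$, namely $H$ having full rank. I would also note that "with high probability" should be quantified as $1-o(1)$, or $1-q^{-\Omega(n)}$ if the cited result gives it. The final probability will be of whatever strength the cited bound provides.

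\textbf{Step 3: lift to spaces.} On the intersection of the events from Steps 1 and 2, Lemma~\ref{lem:correlated-line-to-space} gives the $(m,E,E,\frac{K}{q}\cdot\frac{q}{q-1})=(m,E,E,K/(q-1))$-space correlated agreement property for every $m\ge2$. For $m=1$, the line statement from Step 1 already gives threshold $K/q\le K/(q-1)$.

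The conclusion holds for all $m$ simultaneously, because the good event does not depend on $m$. Hence no union over $m$ is needed.
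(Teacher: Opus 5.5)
Your proposal is correct and follows the paper's proof essentially step for step: line correlated agreement at every radius $E'\le E$ (threshold monotonicity, $|B_{E'}|\le|B_E|$, and a union bound over $E'$), the list-size hypothesis from Lemma~\ref{lem:list-decoding}, and then Lemma~\ref{lem:correlated-line-to-space} with $\tau=K/q$. Your separate handling of $m=1$ via $K/q\le K/(q-1)$ (the lemma is stated only for $m\ge 2$) and your caveat about the model and probability strength of the cited list-decoding result are small points the paper passes over silently.
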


\begin{proof}
By the proof of Theorem~\ref{thm:space-one-radius}, the line correlated-agreement statement holds simultaneously for all radii $0\le E'\le E$. Indeed, for such $E'$, the corresponding threshold $(E'+1)\left(\frac{d}{d-E'}\right)^\ell$ is at most $(E+1)\left(\frac{d}{d-E}\right)^\ell < K+1$, since the function $x\mapsto (x+1)\left(\frac{d}{d-x}\right)^\ell$ is increasing on $0\le x<d$. Moreover, $|B_{E'}|\le |B_E|$, so the same union bound as in Theorem~\ref{thm:space-one-radius} gives failure probability $q^{-\Omega(n)}$, uniformly over $E'\le E$. Taking a union bound over the $E+1=O(n)$ choices of $E'$ still gives failure probability $q^{-\Omega(n)}$. Hence, with high probability, $\cC$ satisfies the $(1,E',E',K/q)$-line correlated-agreement property for every $0\le E'\le E$.

It remains to verify the hypothesis of Lemma~\ref{lem:correlated-line-to-space}. Since $\rho<1-R-\varepsilon$, for all sufficiently large $n$ we have $E=\lfloor \rho n\rfloor < \left(1-R-\varepsilon\right)n$. Hence, Lemma~\ref{lem:list-decoding} implies that, with probability $1-q^{-\Omega(n)}$, for every
$\mathbf{y}\in\F_q^n$, $|\{\mathbf{c}\in \cC : d(\mathbf{y},\mathbf{c})\le E\}|=O(\frac{1}{\varepsilon})<q$.

Applying Lemma~\ref{lem:correlated-line-to-space} with $\tau=\frac{K}{q}$, we conclude that $\cC$ has the
$(m,E,E,K/(q-1))$
-space correlated agreement property for every integer $m\ge 1$.
\end{proof}

\section{Curve Based Correlated Agreement of Random Linear Codes}
\label{section:curve-correlated-agreement}
In this section we adapt the argument of Section~\ref{section:correlated-agreement} from affine $m$-spaces to polynomial curves of degree at most $\ell$. The only substantive new element is that the element space is now one-dimensional. Accordingly, the analogue of Lemma~\ref{lem:affine-space-ball} is proved by counting roots of univariate polynomials, and the factor $q^{m-1}$ from Section~\ref{section:correlated-agreement} is replaced by $\ell$.

For vectors $\mathbf{u}_0,\dots,\mathbf{u}_\ell \in \F_q^n$, define the degree-$\ell$ polynomial curve $\Gamma(\alpha):=\mathbf{u}_0+\sum_{j=1}^{\ell}\alpha^j \mathbf{u}_j,\,\alpha\in\mathbb F_q$. Likewise, for $\mathbf{s}_0,\dots,\mathbf{s}_\ell \in \F_q^r$, define the degree-$\ell$ syndrome curve $S(\alpha):=\mathbf{s}_0+\sum_{j=1}^{\ell}\alpha^j \mathbf{s}_j,\, \alpha\in\mathbb F_q$. If $H\mathbf{u}_j = \mathbf{s}_j$ for every $0 \le j \le \ell$, then $\Gamma_H :=\{H\mathbf{u}_{0}+\sum_{j=1}^{\ell}\alpha^j  H\mathbf{u}_j:\alpha\in \F_q\}=S$. By Lemma~\ref{lem:syndrome-characterization}, for every $\alpha \in \F_q$,
\[
\mathbf{s}_0 + \sum_{j=1}^{\ell} \alpha^j \mathbf{s}_j \in H_E
\iff
d\!\left(\mathbf{u}_0 + \sum_{j=1}^{\ell} \alpha^j \mathbf{u}_j,\ \cC\right) \le E .
\]

\begin{defn}[$(\ell,E,E^+,\tau)$-correlated agreement for degree-$\ell$ curves]
\label{def:curve-direct-ca}
Let $\ell\ge1$, $0\le E\le E^+\le n$, and $0\le\tau\le1$. We say that a linear code $\cC$ with parity-check matrix $H$ has the $(\ell,E,E^+,\tau)$-correlated agreement property for degree-$\ell$ curves if every degree-$\ell$ syndrome curve $S(\alpha):=\mathbf{s}_0+\sum_{j=1}^{\ell}\alpha^j \mathbf{s}_j,\, \alpha\in \F_q$ satisfies the following: if $\frac{1}{q}\left|\left\{\alpha\in\mathbb F_q:S(\alpha) \in H_E\right\}\right|>\tau$, then there exists a matrix $X=[\mathbf{x}_0|\cdots|\mathbf{x}_\ell]\in \F_q^{n\times(\ell+1)}$ such that $H\mathbf{x}_j = \mathbf{s}_j\, (0\le j\le \ell)$ and $\mathrm{\mathrm{rowwt}}(X)\le E^+$.

For a fixed degree-$\ell$ syndrome curve $S(\alpha):=\mathbf{s}_0+\sum_{j=1}^{\ell}\alpha^j \mathbf{s}_j,\, \alpha\in \F_q$, if no such matrix $X$ exists, we say that $S$ has no $(H,\ell,E,E^+)$-correlated agreement.
\end{defn}

The next lemma shows that this is exactly the syndrome-space reformulation of the coefficient-wise correlated-agreement notion for degree-$\ell$ curves.

\begin{lemma}\label{lem:curve-ca-equivalence}
Let $\Gamma(\alpha):=\mathbf{u}_0+\sum_{j=1}^{\ell}\alpha^j \mathbf{u}_j,\,\alpha\in\mathbb F_q$ and $S=\Gamma_H=\mathbf{s}_0+\sum_{j=1}^{\ell}\alpha^j \mathbf{s}_j,\,\alpha\in \F_q$ where $\mathbf{s}_j = H\mathbf{u}_j$ for $0\le j\le \ell$.
Then the following are equivalent:
\begin{enumerate}
\item there exist codewords $\mathbf{c}_0,\dots,\mathbf{c}_\ell \in \cC$ such that
\[
\left|\bigcup_{j=0}^{\ell} \mathrm{supp}(\mathbf{c}_j-\mathbf{u}_j) \right|\leq E^+
\]
\item there exists a matrix $X=[\mathbf{x}_0|\cdots|\mathbf{x}_\ell]\in \F_q^{n\times(\ell+1)}$ such that $H\mathbf{x}_j=\mathbf{s}_j \, (0\le j\le \ell),\, \mathrm{\mathrm{rowwt}}(X)\le E^+.$
\end{enumerate}
\end{lemma}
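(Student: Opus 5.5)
This is the same change of variables used in Lemma~\ref{lem:syndrome-CA-reformulation}. The only difference is that the curve coefficients $\mathbf{u}_0,\dots,\mathbf{u}_\ell$ play the role of the affine-space generators. Nothing about the polynomial structure in $\alpha$ is used, since both conditions are stated coefficient-wise. The plan is to pass between codewords and error vectors via $\mathbf{x}_j:=\mathbf{u}_j-\mathbf{c}_j$, equivalently $\mathbf{c}_j:=\mathbf{u}_j-\mathbf{x}_j$, and to check that syndromes and supports are preserved.

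For (1)$\Rightarrow$(2), I take codewords $\mathbf{c}_0,\dots,\mathbf{c}_\ell\in\cC$ as in (1) and set $\mathbf{x}_j:=\mathbf{u}_j-\mathbf{c}_j$ for $0\le j\le\ell$. Since $\cC=\ker(H)$,
\[
H\mathbf{x}_j=H\mathbf{u}_j-H\mathbf{c}_j=\mathbf{s}_j .
\]
For the matrix $X=[\mathbf{x}_0|\cdots|\mathbf{x}_\ell]$, a row index $i$ lies in $\mathrm{rowsupp}(X)$ exactly when some $(\mathbf{x}_j)_i\neq 0$. Hence
\[
\mathrm{rowsupp}(X)=\bigcup_{j=0}^{\ell}\supp(\mathbf{x}_j)=\bigcup_{j=0}^{\ell}\supp(\mathbf{c}_j-\mathbf{u}_j),
\]
using $\supp(-\mathbf{v})=\supp(\mathbf{v})$. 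It follows that $\mathrm{rowwt}(X)\le E^+$.

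For (2)$\Rightarrow$(1), I start from $X$ as in (2) and set $\mathbf{c}_j:=\mathbf{u}_j-\mathbf{x}_j$. Then $H\mathbf{c}_j=\mathbf{s}_j-\mathbf{s}_j=\mathbf{0}$, so each $\mathbf{c}_j\in\cC$. The same support identity gives $\left|\bigcup_j\supp(\mathbf{c}_j-\mathbf{u}_j)\right|=\mathrm{rowwt}(X)\le E^+$.

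There is no real obstacle here. The only care needed is the row-support identity, which is immediate from the definition. I would also remark that the resulting codeword curve $\mathbf{c}_0+\sum_j\alpha^j\mathbf{c}_j$ agrees with $\Gamma(\alpha)$ outside $\mathrm{rowsupp}(X)$ for every $\alpha$. This justifies calling (1) correlated agreement for curves.
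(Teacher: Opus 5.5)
Your proposal is correct and follows essentially the same route as the paper's proof: the change of variables $\mathbf{x}_j:=\mathbf{u}_j-\mathbf{c}_j$ (and its inverse $\mathbf{c}_j:=\mathbf{u}_j-\mathbf{x}_j$), syndrome preservation from $\cC=\ker(H)$, and the identity $\mathrm{rowsupp}(X)=\bigcup_{j}\supp(\mathbf{c}_j-\mathbf{u}_j)$. Your closing remark, that the code curve agrees with $\Gamma(\alpha)$ outside $\mathrm{rowsupp}(X)$ for every $\alpha$, also appears at the end of the paper's argument.
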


\begin{proof}[Proof Sketch.]
The proof is similar to Lemma~\ref{lem:syndrome-CA-reformulation}. Therefore, we defer it to the Appendix~\ref{appendix:C}.
\end{proof}

We next prove the degree-$\ell$ analogue of Lemma~\ref{lem:affine-space-ball}.

\begin{lemma}\label{lem:curve-ball-intersection}
Let $\Gamma(\alpha)=\mathbf{a}_0+\sum_{j=1}^{\ell}\alpha^j \mathbf{a}_j,\, \alpha\in\mathbb F_q$. If the coefficient matrix $A=[\mathbf{a}_0|\cdots|\mathbf{a}_\ell]\in \F_q^{n\times(\ell+1)}$ satisfies $\mathrm{\mathrm{rowwt}}(A) > E^+$, then $\left|\{\alpha\in\mathbb F_q:\Gamma(\alpha)\in B_E\}\right|\le \ell\cdot \frac{E^++1}{E^+-E+1}$.
\end{lemma}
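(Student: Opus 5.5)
We mirror the double-counting proofs of Lemma~\ref{lem:gap-line-ball} and Lemma~\ref{lem:affine-space-ball}, with the hyperplane bound $q^{m-1}$ replaced by a root count for univariate polynomials. Let $T:=\mathrm{rowsupp}(A)$ and $N:=|T|$, so $N\ge E^+ +1$ by hypothesis. For each $i\in T$ define $f_i(\alpha):=(\mathbf{a}_0)_i+\sum_{j=1}^{\ell}\alpha^j(\mathbf{a}_j)_i\in\F_q[\alpha]$. Since $i\in\mathrm{rowsupp}(A)$, some coefficient of $f_i$ is nonzero, so $f_i$ is a nonzero polynomial of degree at most $\ell$. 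Hence $f_i$ has at most $\ell$ roots in $\F_q$. Coordinates $i\notin T$ vanish identically along the curve.

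Next, we count pairs $(i,\alpha)$ with $i\in T$ and $f_i(\alpha)=0$. Summing the root bound over $i\in T$ gives
\[
\sum_{\alpha\in\F_q}|\{i\in T:f_i(\alpha)=0\}|=\sum_{i\in T}|\{\alpha:f_i(\alpha)=0\}|\le N\ell .
\]
On the other hand, $\supp(\Gamma(\alpha))\subseteq T$ and $\wt(\Gamma(\alpha))=N-|\{i\in T:f_i(\alpha)=0\}|$. So every $\alpha$ with $\Gamma(\alpha)\in B_E$ contributes at least $N-E$ zeros, and $N-E\ge E^+-E+1>0$. Restricting the sum to such $\alpha$ yields
\[
|\{\alpha:\Gamma(\alpha)\in B_E\}|\,(N-E)\le N\ell,
\qquad\text{so}\qquad
|\{\alpha:\Gamma(\alpha)\in B_E\}|\le \ell\cdot\frac{N}{N-E}.
\]

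Finally, the function $x\mapsto x/(x-E)$ is decreasing for $x>E$, and $N\ge E^+ +1$. Therefore $N/(N-E)\le (E^++1)/(E^+-E+1)$, which gives the claimed bound.

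The only point needing care is that each $f_i$ with $i\in T$ is a genuinely nonzero polynomial. This holds because $\alpha\mapsto\alpha^j$ for $0\le j\le\ell$ are formally independent monomials, so a nonzero coefficient vector gives a nonzero polynomial, even when $q\le\ell$. In that case the root bound $\ell$ is still valid (just not tight), since a nonzero polynomial of degree at most $\ell$ has at most $\ell$ roots in any field. The rest is routine.
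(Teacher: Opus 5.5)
Your proposal is correct and follows essentially the same argument as the paper: define $f_i$ on $T=\mathrm{rowsupp}(A)$, use the root bound $\ell$ to double-count zeros as $\bigl|\{\alpha:\Gamma(\alpha)\in B_E\}\bigr|\,(N-E)\le N\ell$, and finish with the monotonicity of $x\mapsto x/(x-E)$ together with $N\ge E^++1$. Your extra remark that the root bound still holds when $q\le\ell$ is a harmless clarification the paper leaves implicit.
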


\begin{proof}
Let $T := \mathrm{rowsupp}(A), \,N:=|T|=\mathrm{\mathrm{rowwt}}(A)$. For each $i\in T$, define the polynomial $f_i(z) := (\mathbf{a}_0)_i + \sum_{j=1}^{\ell} z^j (\mathbf{a}_j)_i \in \F_q[z]$. Since $i\in T$, the coefficient tuple $\bigl( (\mathbf{a}_0)_i,\dots,(\mathbf{a}_\ell)_i \bigr)$ is not identically zero, so $f_i$ is a nonzero polynomial of degree at most $\ell$. Hence $\bigl|\{\alpha\in \F_q : f_i(\alpha)=0\}\bigr| \le \ell$.

Summing over $\alpha\in \F_q$, we obtain
\[
\sum_{\alpha\in \F_q} |{i\in T: f_i(\alpha)=0}|= \sum_{i\in T} \bigl|\{\alpha\in \F_q : f_i(\alpha)=0\}\bigr| \le N\ell.
\]
For each $\alpha \in \F_q$, if $\Gamma(\alpha)\in B_E$, then
\[
N\ell
\ge
\sum_{\alpha:\Gamma(\alpha)\in B_E} |\{i\in T:f_i(\alpha)=0\}|
\ge
\left|\{\alpha\in\mathbb F_q:\Gamma(\alpha)\in B_E\}\right|(N-E).
\]
It follows that $\left|\{\alpha\in\mathbb F_q:\Gamma(\alpha)\in B_E\}\right|\le\frac{N\ell}{N-E}$. Since $N\ge E^+ + 1$ and the function $x\mapsto x/(x-E)$ is decreasing for $x>E$, we obtain $\left|\{\alpha\in\mathbb F_q:\Gamma(\alpha)\in B_E\}\right|\le\ell \cdot \frac{E^+ + 1}{E^+ - E + 1}$. This completes the proof.
\end{proof}

Since the proofs for degree-$\ell$ curves are highly similar to those for $m$-dimensional affine spaces, we defer the proofs of many of the following results to the Appendix~\ref{appendix:C}. We now adapt the rank-reduction argument from affine $m$-spaces to degree-$\ell$ curves.

\begin{defn}[Witness matrix for degree-$\ell$ curves]\label{def:curve-witness}
\label{def:curve-witness}
Fix a degree-$\ell$ syndrome curve $S= \mathbf{s}_0+\sum_{i=1}^{\ell}\alpha^i \mathbf{s}_i : \alpha\in \F_q$, and pairwise distinct elements $\alpha_1,\dots,\alpha_K\in \F_q$. A matrix $X=[\mathbf{x}_1|\cdots|\mathbf{x}_K]\in \F_q^{n\times K}$ is called an $E$-witness matrix for $(S;\alpha_1,\dots,\alpha_K)$ if $H\mathbf{x}_j = \mathbf{s}_0+\sum_{i=1}^{\ell}\alpha_j^i \mathbf{s}_i$ with $\wt(x_j)\le E$ for $j\in[K]$.

More generally, we say that $X=[\mathbf{x}_1|\cdots|\mathbf{x}_K]\in\mathbb F_q^{n\times K}$ is an $E$-witness matrix covering $K$ elements of $S$ if there exist pairwise distinct elements $\alpha_1,\ldots,\alpha_K\in\mathbb F_q$ such that $H\mathbf{x}_j=\mathbf{s}_0+\sum_{i=1}^{\ell}\alpha_j^i \mathbf{s}_i$ where $\mathrm{wt}(x_j)\le E$ for $j\in[K]$.
\end{defn}

For a degree-$\ell$ curve $S= \mathbf{s}_0+\sum_{i=1}^{\ell}\alpha^i \mathbf{s}_i : \alpha\in \F_q $, let $h_S:=\dim_{\mathbb F_q}\mathrm{span}\{\mathbf{s}_0,\mathbf{s}_1,\ldots,\mathbf{s}_\ell\}$. Since $S$ has degree $\ell$, we have $1\leq h_S\leq \ell+1$.

\begin{lemma}\label{lem:curve-rank-reduction}
Let $S= \mathbf{s}_0+\sum_{i=1}^{\ell}\alpha^i \mathbf{s}_i : \alpha\in \F_q$ be a degree-$\ell$ syndrome curve, and let $\alpha_1,\dots,\alpha_K\in \F_q$ be pairwise distinct with $K\ge \ell+1$. Let $X=[\mathbf{x}_1|\cdots|\mathbf{x}_K]\in \F_q^{n\times K}$ be an $E$-witness matrix for $(S;\alpha_1,\dots,\alpha_K)$ and $h:=\dim_{\mathbb F_q}\mathrm{span}\{\mathbf{s}_0,\ldots,\mathbf{s}_\ell\}$, and suppose $\rank(X)=t\ge h+1.$ 

If $d(\cC)\ge d>E$ and $\gamma:=\frac{d-E}{d}$, then there exists a subset $J\subseteq [K]$ such that $|J|\ge K\gamma$, the restricted matrix $X_J=[\mathbf{x}_j]_{j\in J}$ is still an $E$-witness matrix for the same syndrome curve $S$, with $\rank(X_J)\le t-1$.
\end{lemma}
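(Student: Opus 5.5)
The proof will follow Theorem~\ref{thm:rank-reduction} and Lemma~\ref{lem:rank-reduction-m}, with the affine coordinate vectors replaced by the Vandermonde rows. Define $\mathbf{u}_i:=(\alpha_1^i,\dots,\alpha_K^i)\in\F_q^K$ for $0\le i\le \ell$, and set
\[
Y:=HX=\sum_{i=0}^{\ell}\mathbf{s}_i\cdot\mathbf{u}_i^\top .
\]
The $\alpha_j$ are pairwise distinct and $K\ge \ell+1$, so the $K\times(\ell+1)$ Vandermonde matrix $[\mathbf{u}_0|\cdots|\mathbf{u}_\ell]$ has full column rank, i.e.\ $\mathbf{u}_0,\dots,\mathbf{u}_\ell$ are linearly independent. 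Writing $Y=[\mathbf{s}_0|\cdots|\mathbf{s}_\ell]\cdot[\mathbf{u}_0|\cdots|\mathbf{u}_\ell]^\top$, the rank of the second factor is $\ell+1$. Hence $\rank(Y)=h$ and $\mathrm{Row}(Y)\subseteq \mathrm{span}\{\mathbf{u}_0,\dots,\mathbf{u}_\ell\}$. Since $Y=HX$, we also have $\mathrm{Row}(Y)\subseteq\mathrm{Row}(X)$.

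Choose a basis $\mathbf{v}_1,\dots,\mathbf{v}_h$ of $\mathrm{Row}(Y)$ and extend it to a basis $\mathbf{v}_1,\dots,\mathbf{v}_h,\mathbf{w}_{h+1},\dots,\mathbf{w}_t$ of $\mathrm{Row}(X)$. This gives a decomposition
\[
X=\sum_{i=1}^h\mathbf{a}_i\mathbf{v}_i^\top+\sum_{k=h+1}^t\mathbf{c}_k\mathbf{w}_k^\top .
\]
Applying $H$ and comparing with $Y=\sum_i \mathbf{y}_i\mathbf{v}_i^\top$, which has no $\mathbf{w}_k$ components, linear independence of the basis forces $H\mathbf{c}_k=\mathbf{0}$. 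So each $\mathbf{c}_k\in\cC$. If $\mathbf{c}_t=\mathbf{0}$, then $\rank(X)\le t-1$ and we take $J=[K]$. Otherwise $\wt(\mathbf{c}_t)\ge d$. Each column has at most $E$ nonzeros inside $T=\supp(\mathbf{c}_t)$. Averaging over $T$ then yields a coordinate $i_0\in T$ with
\[
J:=\{j:(\mathbf{x}_j)_{i_0}=0\},\qquad |J|\ge K\,\frac{|T|-E}{|T|}\ge K\gamma .
\]

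For $j\in J$, we eliminate $(\mathbf{w}_t)_j$ using $(\mathbf{c}_t)_{i_0}\neq0$, exactly as in Theorem~\ref{thm:rank-reduction}. Replace $\mathbf{a}_i$ by $\mathbf{a}_i-\frac{(\mathbf{a}_i)_{i_0}}{(\mathbf{c}_t)_{i_0}}\mathbf{c}_t$ and $\mathbf{c}_k$ by $\mathbf{c}_k-\frac{(\mathbf{c}_k)_{i_0}}{(\mathbf{c}_t)_{i_0}}\mathbf{c}_t$. Then $X_J$ is a combination of the $t-1$ restricted row vectors $(\mathbf{v}_1)_J,\dots,(\mathbf{v}_h)_J,(\mathbf{w}_{h+1})_J,\dots,(\mathbf{w}_{t-1})_J$, so $\rank(X_J)\le t-1$. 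The columns of $X_J$ are unchanged columns of $X$. They still satisfy $H\mathbf{x}_j=\mathbf{s}_0+\sum_i\alpha_j^i\mathbf{s}_i$ and $\wt(\mathbf{x}_j)\le E$, with distinct $\alpha_j$. Hence $X_J$ is an $E$-witness matrix for the same curve $S$.

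The only point needing care is the first step: the identities $\rank(Y)=h$ and $\mathrm{Row}(Y)\subseteq\mathrm{Row}(X)$, which are what make the $\mathbf{c}_k$ codewords. This step is where $K\ge\ell+1$ enters, through Vandermonde independence. It replaces the ``not in an affine hyperplane'' hypothesis used in Lemma~\ref{lem:rank-reduction-m}. The lemma as stated does not require $|J|\ge\ell+1$. For iterating it later, one would additionally need $K\gamma^{t-h}\ge\ell+1$, which should be imposed in the corresponding corollary rather than here.
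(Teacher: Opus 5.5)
Your proposal is correct and follows essentially the same route as the paper's proof. Both arguments use Vandermonde independence of $\mathbf{u}_0,\dots,\mathbf{u}_\ell$ to get $\mathrm{rank}(Y)=h$, extend a basis of $\mathrm{Row}(Y)$ to one of $\mathrm{Row}(X)$ so the extra coefficient vectors $\mathbf{c}_k$ are codewords, pigeonhole over $\mathrm{supp}(\mathbf{c}_t)$, and eliminate $(\mathbf{w}_t)_j$. Your factorization $Y=[\mathbf{s}_0|\cdots|\mathbf{s}_\ell]\,[\mathbf{u}_0|\cdots|\mathbf{u}_\ell]^\top$ usefully justifies the rank identity $\mathrm{rank}(Y)=h$, which the paper only asserts.
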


\begin{proof}[Proof Sketch.]
    The proof is similar to Lemma~\ref{lem:rank-reduction-m}. Therefore, we defer it to the Appendix~\ref{appendix:C}.
\end{proof}

\begin{cor}
\label{cor:curve-rank-reduction}
Under the assumptions of Lemma~\ref{lem:curve-rank-reduction}, suppose that $\rank(X)=t\ge h$ and $K\gamma^{t-h}>\ell$. Then there exists a subset $J\subseteq[K]$ with $|J|\ge K\gamma^{\,t-h}$ and vectors $\mathbf{a}_0,\dots,\mathbf{a}_\ell\in \F_q^n$ such that for every $j\in J$, $\mathbf{x}_j = \mathbf{a}_0+\sum_{i=1}^{\ell}\alpha_j^i\mathbf{a}_i$ with $H\mathbf{a}_i=\mathbf{s}_i \, (0\le i\le \ell)$.
\end{cor}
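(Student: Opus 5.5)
We follow the proof of Corollary~\ref{cor:rank-reduction-m-iterated}, replacing the hyperplane condition by a Vandermonde condition. Set $\mathbf{u}_0:=(1,\dots,1)$ and $\mathbf{u}_i:=(\alpha_1^i,\dots,\alpha_K^i)$ for $1\le i\le \ell$, all in $\F_q^K$. Then $Y:=HX=\sum_{i=0}^{\ell}\mathbf{s}_i\mathbf{u}_i^\top$.

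The key point is that the evaluation vectors stay linearly independent after restriction to any index set $J$ with $|J|\ge \ell+1$. The restricted vectors $(\mathbf{u}_0)_J,\dots,(\mathbf{u}_\ell)_J$ form the columns of a $|J|\times(\ell+1)$ Vandermonde matrix on distinct nodes, so they are independent. Consequently $\mathrm{rank}(Y_J)=h$ and $\mathrm{Row}(Y_J)\subseteq \mathrm{span}\{(\mathbf{u}_i)_J\}$. This plays the role that ``not contained in an affine hyperplane'' played before; here it follows from $|J|>\ell$.

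If $t=h$, we skip the iteration. Otherwise we apply Lemma~\ref{lem:curve-rank-reduction} repeatedly until the rank drops to $h$. Each application keeps at least a $\gamma$-fraction of the columns, so after at most $t-h$ steps the surviving set satisfies $|J|\ge K\gamma^{t-h}>\ell$, hence $|J|\ge \ell+1$. So the hypothesis $K\ge \ell+1$ of Lemma~\ref{lem:curve-rank-reduction} holds at every intermediate stage, since all intermediate sets are at least this large. If the rank ever drops below the value we are tracking, we simply stop early; the size bound still holds because $\gamma<1$.

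At the final stage, $\mathrm{rank}(X_J)\le h$. Since $HX_J=Y_J$ has rank $h$, we get $\mathrm{rank}(X_J)=h$. From $\mathrm{Row}(Y_J)\subseteq \mathrm{Row}(X_J)$ and equal dimensions, $\mathrm{Row}(X_J)=\mathrm{Row}(Y_J)\subseteq\mathrm{span}\{(\mathbf{u}_0)_J,\dots,(\mathbf{u}_\ell)_J\}$. Hence
\[
X_J=\sum_{i=0}^{\ell}\mathbf{a}_i(\mathbf{u}_i)_J^\top
\]
for some $\mathbf{a}_0,\dots,\mathbf{a}_\ell\in\F_q^n$, which says exactly that $\mathbf{x}_j=\mathbf{a}_0+\sum_{i=1}^{\ell}\alpha_j^i\mathbf{a}_i$ for all $j\in J$.

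It remains to show $H\mathbf{a}_i=\mathbf{s}_i$. Applying $H$ gives $\sum_i (H\mathbf{a}_i-\mathbf{s}_i)(\mathbf{u}_i)_J^\top=0$. Because the $(\mathbf{u}_i)_J$ are independent, each coefficient vanishes, so $H\mathbf{a}_i=\mathbf{s}_i$.

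The main obstacle is the bookkeeping around $\mathrm{Row}(Y)\subseteq\mathrm{Row}(X)$ when $h<\ell+1$. We must make sure that the equality of row spaces at rank $h$ is justified, and that the representation uses the full Vandermonde basis rather than only a basis of $\mathrm{Row}(Y_J)$. Both points follow from the independence of $(\mathbf{u}_i)_J$ established in the first step.
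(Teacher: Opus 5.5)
Your proof is correct and follows the paper's argument. You iterate Lemma~\ref{lem:curve-rank-reduction} down to rank $h$, use the independence of the restricted Vandermonde vectors $(\mathbf{u}_i)_J$ for $|J|>\ell$ to get $\mathrm{Row}(X_J)=\mathrm{Row}(Y_J)\subseteq\mathrm{span}\{(\mathbf{u}_0)_J,\ldots,(\mathbf{u}_\ell)_J\}$, and then read off $\mathbf{a}_0,\ldots,\mathbf{a}_\ell$ with $H\mathbf{a}_i=\mathbf{s}_i$ by comparing coefficients.

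Your write-up is in fact slightly more careful than the paper's, which says ``nothing to prove'' when $t=h$. You run the final-stage argument in that case, check the $K\ge \ell+1$ hypothesis at every intermediate step, and justify $\mathrm{rank}(X_J)=h$ even if a single step drops the rank by more than one. The only nit is that $\gamma\le 1$, rather than $\gamma<1$, is what you need for the size bound.
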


\begin{proof}
If $t=h$, there is nothing to prove. Otherwise, apply Lemma~\ref{lem:curve-rank-reduction} until the rank drops to $h$. Each step keeps at least a $\gamma$-fraction of the columns, so the final set $J$ satisfies $|J|\ge K\gamma^{t-h}>\ell$. 

At the last step, $\mathrm{rank}(X_J)=h$. Let $Y_J:=HX_J$. Since $|J|>\ell$ and the elements $\alpha_j$ are pairwise distinct, the restricted vectors $(\mathbf u_0)_J,\ldots,(\mathbf u_\ell)_J$ are linearly independent by the Vandermonde determinant. Hence $\mathrm{rank}(Y_J)=\dim_{\mathbb F_q}\mathrm{span}\{\mathbf s_0,\ldots,\mathbf s_\ell\}=h$. Therefore $\mathrm{Row}(X_J)=\mathrm{Row}(Y_J)\subseteq \mathrm{span}\{(\mathbf u_0)_J,\ldots,(\mathbf u_\ell)_J\}$. It follows that there exist $\mathbf a_0,\ldots,\mathbf a_\ell\in\mathbb F_q^n$
such that $\mathbf x_j=\mathbf a_0+\sum_{i=1}^{\ell}\alpha_j^i\mathbf a_i$ for $j\in J$. This implies $H\mathbf a_i=\mathbf s_i\, (0\le i\le \ell)$ and the proof is completed.
\end{proof}

We can now prove the degree-$\ell$ analogue of Theorem~\ref{thm:min-rank-bad-m-space}.

\begin{thm}
\label{thm:curve-witness-rank}
Let $S= \mathbf{s}_0+\sum_{i=1}^{\ell}\alpha^i \mathbf{s}_i : \alpha\in \F_q$ be a degree-$\ell$ syndrome curve and let $h:=\dim_{\mathbb F_q}\mathrm{span}\{\mathbf s_0,\ldots,\mathbf s_\ell\}$. Assume further that $d(\cC)\ge d>E$. Let $\gamma:=\frac{d-E}{d},\,B_{E,E^+}:= \left\lfloor\frac{E^+ +1}{E^+ - E +1}\right\rfloor$. Assume that $S$ has no $(H,\ell,E,E^+)$-correlated agreement and there exists an $E$-witness matrix $X$ of rank $t\ge h$ covering $K$ elements of $S$ as in Definition~\ref{def:curve-witness}. Then $K\gamma^{\,t-h} \le \ell\cdot B_{E,E^+}$.
\end{thm}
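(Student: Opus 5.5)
I will argue by contradiction, following the proof of Theorem~\ref{thm:min-rank-bad-m-space} verbatim but with Corollary~\ref{cor:curve-rank-reduction} and Lemma~\ref{lem:curve-ball-intersection} in place of their affine-space counterparts.

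Assume $K\gamma^{t-h}>\ell\cdot B_{E,E^+}$. Since $B_{E,E^+}\ge 1$, this gives $K\gamma^{t-h}>\ell$. Because $\gamma\le 1$, it also gives $K>\ell$, so $K\ge \ell+1$. Thus the hypotheses of Lemma~\ref{lem:curve-rank-reduction} are met for the witness matrix $X$ with distinct points $\alpha_1,\dots,\alpha_K$.

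The iteration remains valid at every stage. Each intermediate column set has size at least $K\gamma^{t-h}>\ell$, so it still has at least $\ell+1$ distinct points. Hence the Vandermonde independence of $(\mathbf u_0)_J,\ldots,(\mathbf u_\ell)_J$ needed inside Lemma~\ref{lem:curve-rank-reduction} continues to hold. Corollary~\ref{cor:curve-rank-reduction} then produces a set $J\subseteq[K]$ with $|J|\ge K\gamma^{t-h}$ and vectors $\mathbf a_0,\dots,\mathbf a_\ell$ such that:
\begin{itemize}
\item $H\mathbf a_i=\mathbf s_i$ for $0\le i\le \ell$;
\item $\mathbf x_j=\Gamma(\alpha_j)$ for every $j\in J$, where $\Gamma(\alpha)=\mathbf a_0+\sum_{i=1}^\ell \alpha^i\mathbf a_i$.
\end{itemize}

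Let $A=[\mathbf a_0|\cdots|\mathbf a_\ell]$. Since $H\mathbf a_i=\mathbf s_i$ for all $i$, $A$ is a candidate matrix in Definition~\ref{def:curve-direct-ca}. By assumption $S$ has no $(H,\ell,E,E^+)$-correlated agreement, so $\mathrm{rowwt}(A)>E^+$.

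On the other hand, every $j\in J$ satisfies $\Gamma(\alpha_j)=\mathbf x_j\in B_E$, and the $\alpha_j$ are pairwise distinct. Therefore
\[
|J|\le|\{\alpha\in\F_q:\Gamma(\alpha)\in B_E\}|.
\]
Lemma~\ref{lem:curve-ball-intersection} bounds the right-hand side by $\ell\cdot\frac{E^++1}{E^+-E+1}$. Since $|J|$ is an integer, the bound can be floored to $\ell\cdot B_{E,E^+}$ only when $\ell=1$; in general the integer bound is $\lfloor \ell(E^++1)/(E^+-E+1)\rfloor$.

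This flooring is the one delicate point. With the floor taken inside $B_{E,E^+}$, the integer count alone does not give $|J|\le \ell\cdot B_{E,E^+}$ when $\ell\ge 2$. I would first check whether the intended statement uses the unfloored ratio, in which case the contradiction is immediate. Otherwise I would argue directly from the proof of Lemma~\ref{lem:curve-ball-intersection}, which gives
\[
|J|\le \frac{\ell N}{N-E},\qquad N=\mathrm{rowwt}(A)\ge E^++1.
\]
If this still falls short, I would state the conclusion as $K\gamma^{t-h}\le \ell\frac{E^++1}{E^+-E+1}$, which suffices for all later union bounds.

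Modulo this constant, we obtain $K\gamma^{t-h}\le|J|\le \ell\cdot B_{E,E^+}$, contradicting the assumption. Every other step is a direct transcription of the affine-space argument.
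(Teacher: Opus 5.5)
Your argument follows the paper's own proof. It argues by contradiction, uses Corollary~\ref{cor:curve-rank-reduction} to extract $J$ and $A=[\mathbf a_0|\cdots|\mathbf a_\ell]$ with $H\mathbf a_i=\mathbf s_i$, gets $\mathrm{rowwt}(A)>E^+$ from the absence of correlated agreement, and then applies Lemma~\ref{lem:curve-ball-intersection}. Your check that $K\ge \ell+1$, which Lemma~\ref{lem:curve-rank-reduction} needs, is left implicit in the paper.

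The paper finishes with ``By Lemma~\ref{lem:curve-ball-intersection}, $|J|\le \ell B_{E,E^+}$''. That is exactly the step you questioned, and your doubt is justified: for $\ell\ge 2$, neither that step nor the theorem with the floored $B_{E,E^+}$ is true. Here is a counterexample.
\begin{itemize}
\item Take $\ell=2$, $E=1$, $E^+=2$, so $\ell B_{E,E^+}=2$.
\item Fix distinct $a,b,c\in\F_q$. Let rows $1,2,3$ of $A=[\mathbf a_0|\mathbf a_1|\mathbf a_2]$ be the coefficient vectors of $f_1=(z-a)(z-b)$, $f_2=(z-a)(z-c)$, $f_3=(z-b)(z-c)$, and let all other rows be zero.
\item Then $\Gamma(a)=f_3(a)\mathbf e_3$, $\Gamma(b)=f_2(b)\mathbf e_2$, $\Gamma(c)=f_1(c)\mathbf e_1$ all have weight $1$, while $\mathrm{rowwt}(A)=3$.
\item Let $d(\cC)\ge 6$ (e.g.\ the repetition code of length $6$) and set $\mathbf s_i:=H\mathbf a_i$.
\item Each $f_i$ is nonzero at exactly one of $a,b,c$, so the $f_i$ are linearly independent. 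No nonzero codeword is supported on $\{1,2,3\}$, so $h=3$. The witness matrix $X=[\Gamma(a)|\Gamma(b)|\Gamma(c)]$ has $t=3$.
\item Any $X'$ with $H\mathbf x'_i=\mathbf s_i$ and $\mathrm{rowwt}(X')\le 2$ would give codewords $\mathbf x'_i-\mathbf a_i$ of weight at most $5$. This forces $X'=A$, a contradiction, so $S$ has no $(H,2,1,2)$-correlated agreement.
\item Yet $K\gamma^{t-h}=3>2$.
\end{itemize}
Since $3=\ell\cdot\frac{E^++1}{E^+-E+1}$, the unfloored bound of Lemma~\ref{lem:curve-ball-intersection} is tight.

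So drop your hedged last line (``modulo this constant $\dots\le \ell B_{E,E^+}$, contradicting the assumption''), which asserts an inequality you do not have. State the corrected conclusion $K\gamma^{t-h}\le \ell\cdot\frac{E^++1}{E^+-E+1}$ instead; your argument proves it completely.

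As you anticipated, this costs nothing downstream:
\begin{itemize}
\item The unfloored ratio equals $1+\frac{E}{\lceil\varepsilon n\rceil+1}\le 1+\rho/\varepsilon$, which is the estimate actually used in Theorem~\ref{thm:curve-explicit}.
\item For $E^+=E$ the ratio is the integer $E+1$.
\item Only the hypothesis of Theorem~\ref{thm:curve-moment} and the threshold $K^{(\ell)}$ in Theorem~\ref{thm:iid-rs-ca-main} need restating with the unfloored ratio.
\end{itemize}

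The affine template you are copying has the same defect for $m\ge 2$, in the proof of Theorem~\ref{thm:min-rank-bad-m-space}. Over $\F_3$, take the five affine forms $\beta_2,\ \beta_2-1,\ \beta_1,\ \beta_2-\beta_1-2,\ \beta_2+\beta_1-2$. At $7$ of the $9$ points of $\F_3^2$ at least two of them vanish. This gives $|\widetilde U\cap B_3|\ge 7>6=B_{3,4}\,q$, even though $|\mathrm{Supp}(\widetilde U)|=5>4$.
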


\begin{proof}[Proof Sketch.]
    The proof is similar to~\cref{thm:min-rank-bad-m-space}. Therefore, we defer it to the Appendix~\ref{appendix:C}.
\end{proof}

We next derive a quantitative union-bound estimate for bad syndrome curves.

\begin{thm}
\label{thm:curve-moment}
Fix $\ell\ge 1$. Let $0<E\le E^+$, let $s\ge 0$ be an integer, and let $d>E$.
Define $\gamma:=\frac{d-E}{d},\,B_{E,E^+}:=
\left\lfloor
\frac{E^+ +1}{E^+ - E +1}
\right\rfloor$.
Assume that $K > \ell\, B_{E,E^+}\,\gamma^{-s}$. Let $\cC$ be a random linear code with parity check matrix $H$. Then the probability of there exists a degree-$\ell$ syndrome curve $S=\mathbf{s}_0+\sum_{i=1}^{\ell}\alpha^i \mathbf{s}_i,\,\alpha\in\mathbb F_q,$ such that $\left|\{\alpha\in\F_q:\mathbf{s}_0+\sum_{i=1}^{\ell}\alpha^i \mathbf{s}_i\in H_E\}\right|\ge K$, $S\text{ has no }(H,\ell,E,E^+)\text{-correlated agreement}$ and $d(\cC)\ge d$ is at most 
\[
\binom{q}{K}q^{-r(s+1)}\sum_{h=1}^{\ell+1}q^{h(\ell+1)}\binom{K}{h+s+1}|B_E|^{h+s+1}.
\]
\end{thm}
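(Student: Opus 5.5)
We follow the template of Theorem~\ref{thm:moment-bad-m-space}, replacing affine spaces by degree-$\ell$ syndrome curves and Theorem~\ref{thm:min-rank-bad-m-space} by Theorem~\ref{thm:curve-witness-rank}. First fix a curve $S$ with coefficient vectors $(\mathbf{s}_0,\ldots,\mathbf{s}_\ell)$ and a $K$-subset $\{\alpha_1,\ldots,\alpha_K\}\subseteq\F_q$. Let $h:=\dim\mathrm{span}\{\mathbf{s}_0,\ldots,\mathbf{s}_\ell\}\in\{1,\ldots,\ell+1\}$. Define the event $\mathcal{A}(S;\alpha_1,\ldots,\alpha_K)$ as the conjunction of three conditions:
\begin{itemize}
\item $S(\alpha_j)\in H_E$ for all $j\in[K]$;
\item $S$ has no $(H,\ell,E,E^+)$-correlated agreement;
\item $d(\cC)\ge d$.
\end{itemize}
On this event there is an $E$-witness matrix $X$ for $(S;\alpha_1,\ldots,\alpha_K)$. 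Its rank is at least $h$, since $HX$ has row space spanned by $\mathbf{s}_0\mathbf{u}_0^\top+\cdots$ with Vandermonde rows, which are independent because $K>\ell$. Now suppose $\rank(X)=t\le h+s$. Then $\gamma^{t-h}\ge\gamma^{s}$, so $K\gamma^{t-h}>\ell B_{E,E^+}$. This contradicts Theorem~\ref{thm:curve-witness-rank}. Hence $\rank(X)\ge h+s+1$.

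Next, pick $h+s+1$ linearly independent columns of $X$. There are at most $\binom{K}{h+s+1}|B_E|^{h+s+1}$ choices of index set and column values. For each choice, Lemma~\ref{lem:random-image} gives probability exactly $q^{-r(h+s+1)}$ that $H$ sends these columns to the prescribed syndromes $S(\alpha_j)$. A union bound over choices then gives
\[
\Pr[\mathcal{A}(S;\alpha_1,\ldots,\alpha_K)]\le\binom{K}{h+s+1}|B_E|^{h+s+1}q^{-r(h+s+1)}.
\]

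The remaining step, and the main counting subtlety, is the union bound over curves. Unlike the affine case, the number of coefficient tuples $(\mathbf{s}_0,\ldots,\mathbf{s}_\ell)$ is $q^{r(\ell+1)}$. This is far too many to absorb naively, and the per-curve bound depends on $h$. So we stratify by $h$ and bound the number of tuples with span dimension exactly $h$. Such a tuple is determined by two pieces of data:
\begin{itemize}
\item an $h$-dimensional subspace with an ordered basis $\mathbf{b}_1,\ldots,\mathbf{b}_h$, giving at most $q^{rh}$ choices;
\item an $h\times(\ell+1)$ coefficient matrix expressing each $\mathbf{s}_i$ in that basis, giving at most $q^{h(\ell+1)}$ choices.
\end{itemize}
This yields at most $q^{rh}q^{h(\ell+1)}$ curves of span dimension $h$. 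Multiplying by the per-curve bound, the $q^{rh}$ factor cancels against $q^{-rh}$ from $q^{-r(h+s+1)}$, leaving $q^{-r(s+1)}$.

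Finally, any bad curve (at least $K$ good points, no correlated agreement, $d(\cC)\ge d$) triggers $\mathcal{A}$ for some $K$-subset of $\F_q$. So we also union-bound over the $\binom{q}{K}$ such subsets and sum over $h=1,\ldots,\ell+1$. This gives exactly
\[
\binom{q}{K}q^{-r(s+1)}\sum_{h=1}^{\ell+1}q^{h(\ell+1)}\binom{K}{h+s+1}|B_E|^{h+s+1}.
\]
The step needing care is the parametrization count: the curve count must carry the factor $q^{rh}$ rather than $q^{r(\ell+1)}$, so that it cancels against the rank bound.
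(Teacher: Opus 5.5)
Your proposal is correct and follows essentially the same route as the paper. It uses the event $\mathcal{A}(S;\alpha_1,\ldots,\alpha_K)$, the rank lower bound $h+s+1$ from Theorem~\ref{thm:curve-witness-rank}, Lemma~\ref{lem:random-image}, and a union bound over $K$-subsets and curves. Your explicit count of at most $q^{rh}q^{h(\ell+1)}$ coefficient tuples of span dimension $h$, together with your check that $\rank(X)\ge h$, makes precise what the paper's phrase ``all coefficient tuples in $(\F_q^r)^{\ell+1}$'' leaves implicit. The one thing to add is that $h=0$ cannot give a bad curve, because then $X=0$ already witnesses correlated agreement; this is what justifies restricting to $h\in\{1,\ldots,\ell+1\}$.
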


\begin{proof}[Proof Sketch.]
    The proof is similar to~\cref{thm:moment-bad-m-space}. Therefore, we defer it to the Appendix~\ref{appendix:C}.
\end{proof}

We now derive an explicit random-coding corollary in the same spirit as Theorem~\ref{thm:explicit-direct-CA-m}.

\begin{thm}\label{thm:curve-explicit}
Fix an integer $\ell \ge 1$, a rate $R \in (0,1)$, and $0<\varepsilon < (1-R)/2$.
Define $a_\varepsilon := \frac{\varepsilon}{\log_2(1/\varepsilon)}$. Assume that $0<\rho<1-R-\varepsilon-a_\varepsilon$. Let $\delta := 1-R-a_\varepsilon,\, \lambda := \left\lceil \frac{(\ell+1)(1-R)}{\varepsilon}\right\rceil - \ell - 2,\,\tau :=\left\lceil\ell\left(1+\frac{\rho}{\varepsilon}\right)\left(\frac{\delta}{\delta-\rho}\right)^\lambda\right\rceil,\,E := \lfloor \rho n\rfloor,\,E^+ := E+\lceil \varepsilon n\rceil$. Let $q$ be a prime power such that $q \ge \max\!\left\{\left(\frac{2}{\varepsilon}\right)^{1/\varepsilon},\,\tau+1 \right\}$. Then, with probability at least $1-q^{-\Omega(n)}$, a random linear code $\cC$ of rate $R$, length $n$ and minimum distance at least $\lfloor \delta n\rfloor$ satisfies the $(\ell,E,E^+,\tau/q)$-correlated agreement property for degree-$\ell$ curves.
\end{thm}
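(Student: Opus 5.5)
The proof follows the template of Theorem~\ref{thm:explicit-direct-CA-m}, with the factor $q^{m-1}$ replaced by $\ell$ and Theorem~\ref{thm:curve-moment} used in place of Theorem~\ref{thm:moment-bad-m-space}. Set $b_\varepsilon:=\frac{\varepsilon}{1+\log_2(1/\varepsilon)}$, $d:=\lfloor\delta n\rfloor$, $\gamma:=\frac{d-E}{d}$ and $B_n:=\lfloor\frac{E^++1}{E^+-E+1}\rfloor$. The first step bounds the minimum-distance event exactly as before. Lemma~\ref{lem:H_q-bound} and Lemma~\ref{lem:ball-volume-bound} give $|B_d|\le q^{(\delta+b_\varepsilon)n+o(n)}$ once $q\ge (2/\varepsilon)^{1/\varepsilon}$. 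Since $\delta+b_\varepsilon-(1-R)=-(a_\varepsilon-b_\varepsilon)<0$, we get $\Pr[d(\cC)<d]\le q^{-\Omega(n)}$. Since $\delta>\rho$, we also have $d>E$ for large $n$.

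Next set $K:=\tau+1$. The bounds $B_n\le 1+\rho/\varepsilon$ and $\gamma^{-1}\le \delta/(\delta-\rho)$ give
\[
\ell B_n\gamma^{-\lambda}\le \ell\left(1+\frac{\rho}{\varepsilon}\right)\left(\frac{\delta}{\delta-\rho}\right)^{\lambda}\le\tau<K .
\]
This is the hypothesis of Theorem~\ref{thm:curve-moment} with $s=\lambda$. The assumption $q\ge\tau+1$ guarantees $K\le q$, so $K$-subsets of $\F_q$ exist. A curve with $\frac1q|\{\alpha:S(\alpha)\in H_E\}|>\tau/q$ has at least $K$ good points. So the failure event, restricted to $d(\cC)\ge d$, has probability at most
\[
\binom{q}{K}q^{-r(\lambda+1)}\sum_{h=1}^{\ell+1}q^{h(\ell+1)}\binom{K}{h+\lambda+1}|B_E|^{h+\lambda+1}.
\]

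The main step is checking that this exponent is negative. The term is less transparent than in the affine case, because the union bound over curves contributes the factor $q^{h(\ell+1)}$ indexed by $h$. I read the $q^{h(\ell+1)}$ as really encoding $q^{rh}$ choices of syndromes up to the $h$-dimensional span, and I will verify this against the appendix statement. Under that reading, each summand has exponent, in units of $n\log q$, at most
\[
h(1-R)-(h+\lambda+1)(1-R)+(h+\lambda+1)(\rho+b_\varepsilon)+o(1) .
\]
Here $q,K,\ell$ are constants, so the binomials contribute only $q^{O(1)}$, and Lemma~\ref{lem:ball-volume-bound} gives $|B_E|\le q^{(\rho+b_\varepsilon)n+o(n)}$.

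With $\eta:=1-R-\rho-b_\varepsilon>\varepsilon$, the exponent becomes $h(1-R)-(h+\lambda+1)\eta$. The worst case is $h=\ell+1$, because $(1-R)-\eta>0$ makes the expression increasing in $h$. The choice of $\lambda$ gives
\[
\ell+\lambda+2\ge \frac{(\ell+1)(1-R)}{\varepsilon},
\]
so $(\ell+\lambda+2)\eta>(\ell+1)(1-R)$ and every summand is $q^{-\Omega(n)}$. If the $q^{h(\ell+1)}$ factor really is just a constant power of $q$, then the bound only improves. Either way the sum of the $\ell+1$ terms is $q^{-\Omega(n)}$.

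Finally, take a union bound of this failure event with the minimum-distance event. With probability $1-q^{-\Omega(n)}$, every degree-$\ell$ syndrome curve with more than a $\tau/q$ fraction of points in $H_E$ admits $X$ with $H\mathbf{x}_j=\mathbf{s}_j$ and $\mathrm{rowwt}(X)\le E^+$. By Lemma~\ref{lem:curve-ca-equivalence}, this is exactly the $(\ell,E,E^+,\tau/q)$-correlated agreement property for degree-$\ell$ curves.
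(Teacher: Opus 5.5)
Your proof is correct and is essentially the paper's argument: the same minimum-distance estimate, the choice $K=\tau+1$, Theorem~\ref{thm:curve-moment} with $s=\lambda$, the same negativity check via $\ell+\lambda+2\ge (\ell+1)(1-R)/\varepsilon$ and $1-R-\rho-b_\varepsilon>\varepsilon$, and then Lemma~\ref{lem:curve-ca-equivalence}. The one wrinkle is your reading of $q^{h(\ell+1)}$: it is the constant count of coefficient matrices expressing $(\mathbf{s}_0,\dots,\mathbf{s}_\ell)$ in a basis of the $h$-dimensional span, and the $q^{rh}$ count of bases has already cancelled into $q^{-r(s+1)}$. So the exponent you wrote is exactly that of the stated bound rather than an improvement of it, and the paper's exponent $(\ell+1)(1-R)n+(\ell+\lambda+2)(\rho+b_\varepsilon-(1-R))n$ is precisely your worst case $h=\ell+1$.
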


\begin{proof}[Proof Sketch.]
    The proof is similar to~\cref{thm:explicit-direct-CA-m}. Therefore, we defer it to Appendix~\ref{appendix:C}.
\end{proof}

\begin{thm}\label{thm:curve-one-radius}
Fix an integer $\ell \ge 1$, a rate $R \in (0,1)$, and $0<\varepsilon < (1-R)/2$. Assume that $0<\rho<1-R-\varepsilon,\,\delta := 1-R-\varepsilon,\,\lambda := \left\lceil \frac{(\ell+1)(1-R)}{\varepsilon}\right\rceil - \ell - 1,\,E := \lfloor \rho n\rfloor,\, d := \lfloor \delta n\rfloor,\, K :=\left\lfloor\ell(E+1)\left(\frac{d}{d-E}\right)^\lambda\right\rfloor$ and let $q$ be a prime power such that $q=\Theta(n)$ and $q \ge K+2$.

Then, with probability at least $1-q^{-\Omega(n)}$, a random linear code $\cC$ of rate $R$ and code length $n$ has minimum distance at least $d$ and satisfies the $(\ell,E,E,K/q)$-correlated agreement property for degree-$\ell$ curves.
\end{thm}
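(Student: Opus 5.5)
The plan is to mirror the proof of Theorem~\ref{thm:space-one-radius}, replacing the line union bound with the curve union bound of Theorem~\ref{thm:curve-moment} specialized to $E^+=E$, where $B_{E,E}=E+1$.

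\textbf{Step 1: minimum distance.} I will first control the minimum-distance event exactly as before. Since $q=\Theta(n)$ and $d=\Theta(n)$, Lemma~\ref{lem:ball-volume-bound} gives $|B_d|\le q^{d+o(n)}$. Hence
\[
\Pr[d(\cC)\le d]\le |B_d|q^{-r}\le q^{-\varepsilon n+o(n)}.
\]

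\textbf{Step 2: the curve union bound.} Next I set $s=\lambda$ and apply Theorem~\ref{thm:curve-moment} with threshold $K+1$. This is legitimate because $K+1>\ell(E+1)(d/(d-E))^\lambda=\ell B_{E,E}\gamma^{-\lambda}$ by the definition of $K$. The failure probability (on $d(\cC)\ge d$) is then at most
\[
\binom{q}{K+1}q^{-r(\lambda+1)}\sum_{h=1}^{\ell+1}q^{h(\ell+1)}\binom{K+1}{h+\lambda+1}|B_E|^{h+\lambda+1}.
\]
Because $q,K=\Theta(n)$ and $\ell,\lambda$ are constants, the factors $\binom{q}{K+1}$, $q^{h(\ell+1)}$ and $\binom{K+1}{h+\lambda+1}$ each contribute only $q^{o(n)}$. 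Together with $|B_E|\le q^{\rho n+o(n)}$, the exponent for each $h$ is at most
\[
-r(\lambda+1)+(h+\lambda+1)\rho n+o(n).
\]

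\textbf{Step 3: the exponent computation (main obstacle).} The difficulty is that the factor $q^{-r(\lambda+1)}$ in Theorem~\ref{thm:curve-moment} is written independently of $h$. This suggests that the real saving per admissible witness tuple is $q^{-r(h+\lambda+1)}$ (as in Lemma~\ref{lem:random-image}), while a factor of order $q^{rh}$ is paid for choosing the syndrome curve within an $h$-dimensional span. I would therefore re-derive the bound in that form, paying $q^{r(\ell+1)}$ for the curve coefficients and $q^{-r(h+\lambda+1)}$ for the preimage event. The worst case is $h=1$, where the exponent is
\[
(\ell+1)(1-R)n-(\lambda+2)(1-R-\rho)n+o(n)<\bigl((\ell+1)(1-R)-(\lambda+2)\varepsilon\bigr)n+o(n).
\]
The choice $\lambda=\lceil(\ell+1)(1-R)/\varepsilon\rceil-\ell-1$ is presumably tuned so that this is negative. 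Concretely, I need to check that $\lambda+\ell+2>(\ell+1)(1-R)/\varepsilon$, and reconcile the term $\lambda+2$ with $\lambda+\ell+2$ by recounting the number of independent witness columns, which is $h+\lambda+1$ with $h\le \ell+1$.

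If this bookkeeping does not close for small $h$, I would fall back on bounding the number of syndrome curves with $h$-dimensional coefficient span by $q^{rh}q^{h(\ell+1)}$. That is exactly the form appearing in Theorem~\ref{thm:curve-moment}, and it makes each term $q^{rh}\cdot q^{-r(h+\lambda+1)}$, reducing everything to $(\lambda+1)$ copies of the margin $(1-R-\rho)n>\varepsilon n$. Verifying that the stated $\lambda$ makes the total exponent negative uniformly over $1\le h\le \ell+1$ is the step I expect to require care.

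\textbf{Step 4: conclusion.} Combining with Step 1, with probability $1-q^{-\Omega(n)}$ every degree-$\ell$ syndrome curve with more than $K$ elements in $H_E$ admits a matrix $X$ with $H\mathbf{x}_j=\mathbf{s}_j$ and $\mathrm{rowwt}(X)\le E$. This is precisely the $(\ell,E,E,K/q)$-correlated agreement property for degree-$\ell$ curves, and Lemma~\ref{lem:curve-ca-equivalence} translates it into the coefficient-wise form.
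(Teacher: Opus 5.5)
Your skeleton (minimum-distance estimate, then Theorem~\ref{thm:curve-moment} with $E^+=E$, $s=\lambda$ and threshold $K+1$, then translation via Lemma~\ref{lem:curve-ca-equivalence}) is the paper's, and Steps~1, 2 and~4 are fine. The gap is Step~3. It is the only substantive computation in this proof, and you leave it unverified while partly getting it wrong.

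Your primary re-derivation pays $q^{r(\ell+1)}$ for every coefficient tuple regardless of $h$. That is a strictly weaker union bound, and it does not close. Its worst term is $h=1$, with exponent $(\ell+1)(1-R)n-(\lambda+2)(1-R-\rho)n$. Since $\lambda+2=\lceil(\ell+1)(1-R)/\varepsilon\rceil-\ell+1<(\ell+1)(1-R)/\varepsilon$ for every $\ell\ge 2$, this exponent is positive once $\rho$ is close to $1-R-\varepsilon$. Recounting witness columns does not help: for $h=1$, Theorem~\ref{thm:curve-witness-rank} forces only $\lambda+2$ independent columns. The saving for small $h$ must instead come from the fact that there are only about $q^{rh}$ curves whose coefficient span has dimension $h$.

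Your fallback does exactly this, and it is the bound already stated in Theorem~\ref{thm:curve-moment}, which you wrote down in Step~2. However, your summary of it as ``$(\lambda+1)$ copies of the margin'' is incorrect. After the cancellation $q^{rh}q^{-r(h+\lambda+1)}=q^{-r(\lambda+1)}$, the term still carries $|B_E|^{h+\lambda+1}$. Its exponent is therefore $-(\lambda+1)(1-R-\rho)n+h\rho n+o(n)$. The extra $h\rho n$ is precisely why $\lambda$ has to be large; if your description were right, any $\lambda\ge 0$ would do.

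The missing step is short. In Theorem~\ref{thm:curve-moment}, the power $q^{-r(\lambda+1)}$ is independent of $h$, and $q^{h(\ell+1)}$ and $\binom{K+1}{h+\lambda+1}$ are $q^{O(1)}$. The $h$-dependence is therefore only through $|B_E|^{h}$, so the dominant term is $h=\ell+1$, not $h=1$. Its exponent is
\[
-(\lambda+1)(1-R)n+(\ell+\lambda+2)\rho n+o(n)=(\ell+1)(1-R)n-(\ell+\lambda+2)(1-R-\rho)n+o(n).
\]
Now use $1-R-\rho>\varepsilon$ and $\ell+\lambda+2=\lceil(\ell+1)(1-R)/\varepsilon\rceil+1\ge(\ell+1)(1-R)/\varepsilon+1$. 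The exponent is then less than $\bigl((\ell+1)(1-R)-(\ell+\lambda+2)\varepsilon\bigr)n+o(n)\le-\varepsilon n+o(n)$, which is the paper's computation. So the inequality $\lambda+\ell+2>(\ell+1)(1-R)/\varepsilon$ you flagged is indeed the one to check. It comes from the $h=\ell+1$ term, and it holds by the definition of $\lambda$.
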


\begin{proof}[Proof Sketch.]
    The proof is similar to Theorem~\ref{thm:space-one-radius}. Therefore, we defer it to the Appendix~\ref{appendix:C}.
\end{proof}

\section{Correlated Agreement for Random Reed-Solomon Codes}
In this section we show that the syndrome-space reductions proved above also apply to random Reed--Solomon codes. For random linear codes, this final step was a direct union bound in the random parity-check-matrix model. For random Reed--Solomon codes, a parity-check matrix is no longer random in this sense. We therefore replace the union bound by the local-profile theorem of~\cite{levi2025random}. The bridge is simple: a high-rank low-weight syndrome witness forces many linear combinations of its columns to be codewords, and hence gives a local profile. Random RS codes avoid such profiles with high probability.

This yields correlated-agreement statements for random RS codes up to radius $ \rho \le 1-R-\varepsilon $ over fields of size $q \ge n \cdot 2^{O(\varepsilon^{-3})},$ for affine $m$-spaces and $q\geq n \cdot 2^{O_\ell(\varepsilon^{-3})}$ for degree-$\ell$ curves. This improves the corresponding random-RS consequences obtained from the framework of~\cite{goyal2025optimal} which shows that the correlated-agreement for random RS codes up to radius $\rho\leq 1-R-2\varepsilon$ over fields of size $q\geq n\cdot 2^{O(\varepsilon^{-7})}$ for affine $m$-spaces and $q\geq n\cdot 2^{O_{\ell}(\varepsilon^{-(2\ell+5)})}$ for degree-$\ell$ curves.

A random Reed--Solomon code in this section means the iid evaluation model
\[
\cC=\mathrm{RS}_{\F_q}((\alpha_1,\ldots,\alpha_n);k)
:=\{(f(\alpha_1),\ldots,f(\alpha_n)): f\in \F_q[X],\deg f<k\}\subseteq \F_q^n,
\]
where $\alpha_1,\ldots,\alpha_n$ are sampled independently and uniformly from $\F_q$, and $k=Rn$. Throughout this section, $H$ denotes an arbitrary parity-check matrix for $\cC$.

We use the following standard minimum-distance estimate for the iid model. It is only needed to ensure that the deterministic rank-reduction theorems from Sections~4 and~5 can be applied directly, without passing to the without-repetitions model.

\begin{lemma}\label{lem:iid-rs-distance}
Fix $R\in(0,1)$ and $0<\varepsilon<(1-R)/2$.  If $q\ge (4/\varepsilon)n$, then, with probability at least $1-\exp(-\Omega_\varepsilon(n))$, the iid random RS code above has dimension $k$ and minimum distance at least $d:=\big\lfloor (1-R-\varepsilon/4)n\big\rfloor.$
\end{lemma}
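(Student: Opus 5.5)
The plan is to use that an RS code with $k$ \emph{distinct} evaluation points has dimension exactly $k$ and is MDS with distance $n-k+1$. The iid model does not give distinct points, so I would work with the evaluation polynomials directly and take a union bound over low-weight nonzero polynomial evaluations.

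\textbf{Dimension.} The code has dimension $k$ iff the evaluation map $f\mapsto (f(\alpha_1),\ldots,f(\alpha_n))$ on polynomials of degree $<k$ is injective. This holds as soon as at least $k$ of the $\alpha_i$ are distinct, since a nonzero $f$ with $\deg f<k$ has fewer than $k$ roots. It will follow from the distance event below, because $d\ge 1$ forces injectivity.

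\textbf{Distance.} The code has minimum distance $<d$ iff some nonzero $f$ with $\deg f<k$ vanishes at at least $n-d+1$ of the points. Set $w:=n-d+1\ge (R+\varepsilon/4)n$. Such an $f$ has at most $k-1$ distinct roots. So the bad event implies there is a set $T\subseteq[n]$ with $|T|=w$ such that $\{\alpha_i:i\in T\}$ takes at most $k-1$ distinct values.

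For a fixed $T$, I bound the probability that $w$ iid uniform elements take at most $k-1$ distinct values. Fix a candidate value set $V$ of size $k-1$, chosen in at most $\binom{q}{k-1}$ ways. The probability that all $\alpha_i$ with $i\in T$ land in $V$ is $((k-1)/q)^w$. This gives
\[
\Pr[\text{bad}]\le \binom{n}{w}\binom{q}{k-1}\Bigl(\frac{k}{q}\Bigr)^{w}\le 2^n\Bigl(\frac{eq}{k}\Bigr)^{k}\Bigl(\frac{k}{q}\Bigr)^{w}=2^n e^{k}\Bigl(\frac{k}{q}\Bigr)^{w-k}.
\]
Now use $w-k\ge \varepsilon n/4$ and $k/q\le R\varepsilon/4\le \varepsilon/4$ (from $q\ge 4n/\varepsilon$). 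The bound becomes
\[
\exp\bigl(n\ln 2+Rn-(\varepsilon n/4)\ln(4/\varepsilon)\bigr).
\]

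\textbf{Main obstacle.} The exponent in this bound is not obviously negative for every $\varepsilon$. The term $(\varepsilon/4)\ln(4/\varepsilon)$ is small when $\varepsilon$ is small, while $\ln 2+R$ is a constant, so the estimate fails in that regime. The loss comes from two crude steps: the $\binom{n}{w}$ count of sets $T$, and the $\binom{q}{k-1}$ count of value sets $V$.

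To fix this, I would not fix $V$ in advance. Instead I would count coincidences sequentially. Reveal $\alpha_i$ for $i\in T$ in order and count the indices whose value repeats an earlier one. At least $w-k+1\ge \varepsilon n/4$ indices must be repeats. Each index is a repeat with probability at most $n/q\le \varepsilon/4$. So the probability that there are at least $\varepsilon n/4$ repeats among all $n$ samples is at most
\[
\binom{n}{\varepsilon n/4}(\varepsilon/4)^{\varepsilon n/4}\le (e\cdot\tfrac{4}{\varepsilon}\cdot \tfrac{\varepsilon}{4})^{\varepsilon n/4}=e^{\varepsilon n/4}.
\]
This is still not below $1$, so the constant needs tightening. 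I would use the threshold $q\ge (4/\varepsilon)n$ differently, by bounding the total number of repeats over all $n$ points. This is a single event, so it removes the union over $T$: the bad event forces at least $\varepsilon n/4$ repeats in total. The number of repeats is stochastically dominated by $\mathrm{Bin}(n,\varepsilon/4)$. A Chernoff bound at its mean is not exponentially small, so some slack is needed. The statement as written allows it: I would show the required number of repeats exceeds the mean by a constant factor, using that $w-k\ge \varepsilon n/4+1$ only marginally. If that factor cannot be obtained, I would adjust the distance target (for example $\varepsilon/4$ versus $\varepsilon/8$ repeats) so that a Chernoff bound of the form $\exp(-\Omega_\varepsilon(n))$ applies.
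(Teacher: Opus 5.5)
\textbf{There is a genuine gap: the final, essential step is missing.}

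Your reduction is sound, and it is the paper's. Suppose a nonzero $f$ with $\deg f<k$ vanishes on a set $T$ of $w=n-d+1$ coordinates. Then $\alpha_T$ takes at most $k-1$ values. So the total number of repeats $D=n-|\{\alpha_1,\dots,\alpha_n\}|$ satisfies $D\ge w-k+1>\varepsilon n/4$. The paper phrases the same fact as follows: delete the repeated coordinates; the remaining distinct-point RS code is MDS, so the distance is at least $n-D-k+1$.

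Where your proposal stops is exactly where the work is. You never show that $\Pr[D>\varepsilon n/4]$ is exponentially small, and neither of your suggested fixes works. The additive $+1$ in $w-k\ge \varepsilon n/4+1$ cannot create a constant-factor gap above a mean of $\varepsilon n/4$. Moving the target to $\varepsilon/8$ would prove a different statement, since the lemma fixes both $q\ge (4/\varepsilon)n$ and $d=\lfloor (1-R-\varepsilon/4)n\rfloor$.

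\textbf{The missing idea.} Your per-index bound is too lossy. Conditioned on the past, index $i$ is a repeat with probability $|\{\alpha_1,\dots,\alpha_{i-1}\}|/q\le (i-1)/q$, not $n/q$. Summing over $i$,
\[
\mathbb{E}[D]\;\le\;\sum_{i=1}^{n}\frac{i-1}{q}\;=\;\binom{n}{2}\frac{1}{q}\;\le\;\frac{n^2}{2q}\;\le\;\frac{\varepsilon n}{8},
\]
which is half the threshold $\varepsilon n/4$. That factor of $2$ is the slack you were looking for.

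Concentration then finishes the argument, in either of two ways.
\begin{itemize}
\item The paper uses McDiarmid's inequality. Changing a single $\alpha_i$ changes $D$ by at most $1$, so
\[
\Pr[D>\varepsilon n/4]\le \exp\bigl(-2(\varepsilon n/8)^2/n\bigr)=\exp(-\varepsilon^2 n/32).
\]
\item Alternatively, keep your stochastic-domination view. Dominate $D$ by a sum of independent $\mathrm{Bernoulli}((i-1)/q)$ variables rather than by $\mathrm{Bin}(n,\varepsilon/4)$. A Chernoff bound with mean upper bound $\mu_H=\varepsilon n/8$ gives $\Pr[D\ge 2\mu_H]\le \exp(-\varepsilon n/24)$.
\end{itemize}
With this step inserted, your argument is complete and coincides with the paper's. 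The dimension claim then follows from $d\ge 1$, as you noted.
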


This follows by deleting all repeated evaluation coordinates except their first occurrence. Let $D:=n-\big|\{\alpha_1,\ldots,\alpha_n\}\big|$ be the number of deleted coordinates. The remaining code is an RS code on distinct evaluation points, hence is MDS~\cite{RS60}. Moreover, $\mathbb E[D]\le \binom n2\frac1q\le \frac{\varepsilon n}{8}$ when $q\ge (4/\varepsilon)n$. Since changing one evaluation point changes $D$ by at most one, McDiarmid's inequality~\cite{McD89} gives $\Pr\left[D>\frac{\varepsilon n}{4}\right]\le \exp(-\Omega_\varepsilon(n)).$ Thus, with probability $1-\exp(-\Omega_\varepsilon(n))$, after deleting at most $\varepsilon n/4$ repeated coordinates, the remaining distinct-evaluation RS code has minimum distance at least $n-D-k+1\ge (1-R-\varepsilon/4)n$. Now we record the local-profile notation in~\cite{levi2025random}.

\begin{defn}
Let $b\ge 1$ be an integer. A $b$-local profile is a tuple $\mathcal V=(V_1,\ldots,V_n)$, where each $V_i\subseteq \F_q^b$ is a linear subspace.  A matrix $A\in\F_q^{n\times b}$ satisfies $\mathcal{V}$ if $A_{i,*}\in V_i$ for every $i\in[n]$.

Let $\cC\subseteq\F_q^n$ be a linear code and let $U\subseteq\mathbb F_q^b$.  We say that $\cC$ contains $(\mathcal V,U)$ if there is a matrix $A\in\mathbb F_q^{n\times b}$ such that
\[
  A_{*,j}\in \cC \quad (j=1,\ldots,b),
  \qquad
  A_{i,*}\in V_i \quad (i\in[n]),
  \qquad
  \mathrm{Row}(A)=U .
\]
For $U\subseteq \mathbb F_q^b$, define
\begin{equation}\label{eq:threshold-rate}
R_{\mathcal V,U}:=\max_{W\subsetneq U}\left( 1- \frac{\sum_{i=1}^n\bigl(\dim(V_i\cap U)-\dim(V_i\cap W)\bigr)}   {n(\dim U-\dim W)}\right).
\end{equation}
We call $R_{\mathcal{V},U}$ the threshold rate of $(\mathcal{V},U)$.
\end{defn}

\begin{lemma}[Theorem~3.10~\cite{levi2025random}]
\label{lem:rs-fixed-rowspan-lms}
Let $\cC=\mathrm{RS}_q((\alpha_1,\ldots,\alpha_n);k)$, where $\alpha_1,\ldots,\alpha_n$ are sampled independently and uniformly from $\mathbb F_q$, and let $k=Rn$.  Let $\mathcal V=(V_1,\ldots,V_n)$ be a $b$-local profile and let $0\ne U\le\mathbb F_q^b$.  Assume that $q>kb$ and fix $\eta\cdot n\ge 2b(b+1)$. If $R\leq R_{\mathcal{V},U}-\eta$, then $\Pr\bigl[\cC\text{ contains }(\mathcal V,U)\bigr]\le(2b-1)\left( \frac{(4b)^{4b}k}{\eta\cdot q} \right)^{\eta n/(2b)} .$

\end{lemma}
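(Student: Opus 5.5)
The statement is Theorem~3.10 of~\cite{levi2025random}, so in the paper it is simply imported. If I had to reprove it, I would follow the ``reveal the evaluation points one at a time'' strategy of that work.

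\textbf{Algebraic reformulation.} Every codeword of $\cC$ is $(f(\alpha_1),\ldots,f(\alpha_n))$ with $\deg f<k$. So a matrix $A$ whose columns lie in $\cC$ is exactly $A=VM$, where $V$ is the $n\times k$ Vandermonde matrix in the $\alpha_i$ and $M\in\F_q^{k\times b}$. Containing $(\mathcal V,U)$ then means that the row space of $VM$ equals $U$ and that, for every $i$, the row $\mathbf{v}(\alpha_i)^\top M$ lies in $V_i$, where $\mathbf{v}(z)=(1,z,\ldots,z^{k-1})$.

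\textbf{Reducing to a statement about $W$.} I would change basis so that $\mathrm{Row}(A)=U$ is replaced by $\mathrm{Row}(A)\subseteq U$ together with $\mathrm{Row}(A)\not\subseteq W$ for every $W\subsetneq U$. This is why the threshold rate $R_{\mathcal V,U}$ maximizes over all $W$. Fix the worst-case $W$. The requirement $A_{i,*}\in V_i\cap U$ imposes, modulo $W$, a linear condition on $M$ of codimension $c_i:=\dim(U/W)-\dim\bigl((V_i\cap U)/(V_i\cap W)\bigr)$. Meanwhile, the space of candidate $M$ modulo $W$ has dimension $k\dim(U/W)$. The hypothesis $R\le R_{\mathcal V,U}-\eta$ translates exactly into
\[
\sum_{i=1}^n c_i\ \ge\ k\dim(U/W)+\eta n\dim(U/W),
\]
so the total number of imposed constraints exceeds the available degrees of freedom by about $\eta n$.

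\textbf{Probabilistic core.} I would reveal $\alpha_1,\alpha_2,\ldots$ in order and track the affine/linear solution space $\mathcal M_i$ of coefficient matrices consistent with the first $i$ coordinates. Coordinate $i$ is \emph{good} if its $c_i$ constraints cut $\dim\mathcal M_{i-1}$ by the full amount $c_i$ (capped at $0$), and \emph{bad} otherwise.

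Conditioned on the past, coordinate $i$ is bad only if $\alpha_i$ is a root of a nonzero polynomial of degree at most $kb$ built from minors of $\mathcal M_{i-1}$. Hence the probability that coordinate $i$ is bad is at most $O(kb/q)$; this uses $q>kb$.

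Since the dimension never drops below the dimension needed for a surviving $M$ with row span not inside $W$, the dimension count forces at least $\eta n/(2b)$ bad coordinates, each of which loses at most $b$ in dimension. I would then union bound over the set of bad coordinates and over the at most $(4b)^{O(b)}$ possible dimension-drop patterns per coordinate. This gives a bound of the form $\bigl((4b)^{4b}k/(\eta q)\bigr)^{\eta n/(2b)}$. Finally, a union bound over the at most $2b-1$ relevant ``types'' of $W$, meaning the values of $\dim W$ and the structural cases, produces the leading factor $(2b-1)$.

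\textbf{Main obstacle.} The hard step is making ``bad with probability at most $k/q$'' rigorous. The constraint at coordinate $i$ depends on $\alpha_i$ through $\mathbf{v}(\alpha_i)$, but the solution space $\mathcal M_{i-1}$ is random. I would handle this by conditioning on $\mathcal M_{i-1}$ and exhibiting an explicit nonzero determinant polynomial in $\alpha_i$ whose non-vanishing certifies the full drop. Nonzeroness would follow from the fact that a nonzero $M$ cannot have $\mathbf{v}(z)^\top M\in V_i$ identically in $z$ unless its row space is trapped in $V_i$.
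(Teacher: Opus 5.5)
Citing Theorem~3.10 of~\cite{levi2025random} is exactly what the paper does: it gives no argument of its own and imports the statement as a fixed-profile corollary of that theorem. The self-contained reproof you sketch, however, has a genuine gap at its central step. That step is the claim that, conditioned on the past, coordinate $i$ is bad only if $\alpha_i$ is a root of a nonzero polynomial. In fact the full-drop minor can vanish identically in $\alpha_i$. This happens precisely in the ``trapped'' case you set aside, and that case occurs even for the maximizing $W$.

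Here is an example. Take $b=2$, $U=\mathbb{F}_q^2$, $n=3N$ and $k=N/2$ (so $R=1/6$). Let $V_i$ be $\langle e_1\rangle$, $\langle e_2\rangle$, $\langle e_1+e_2\rangle$ on three consecutive blocks of $N$ coordinates. Then $R_{\mathcal V,U}=1/2$, attained only at $W=0$ (each line $W$ gives at most $1/3$). So the hypothesis holds for every $\eta\le 1/3$, and your worst-case $W$ is $0$.

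Write the columns of $M$ as polynomials $(F_1,F_2)$ of degree $<k$. The first block imposes $F_2(\alpha_i)=0$. Once $k$ distinct points of that block have been revealed, every element of $\mathcal M_k$ has $F_2\equiv 0$, i.e.\ all its rows lie in $\langle e_1\rangle=V_i$, while $\dim\mathcal M_k=k\ge c_i=1$. Each of the remaining $N/2=n/6$ coordinates of the block is therefore bad with conditional probability $1$. Since $n/6>\eta n/(2b)$, the union bound over sets of bad coordinates gives nothing. Yet the lemma's conclusion is true here, because blocks $1$ and $2$ alone force $F=0$.

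The missing idea is the actual content of the theorem. Your minor argument only shows the following: with conditional probability $1-O(kb/q)$, the actual drop at coordinate $i$ equals the \emph{generic} drop. The generic drop is the rank over $\mathbb{F}_q(z)$ of $M\mapsto \mathbf v(z)^\top M$ modulo $V_i$ (and $W$), restricted to $\mathcal M_{i-1}$. Faults must be measured against this generic drop, not against $c_i$.

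You then need a separate, GM-MDS/partition-connectivity-type argument. It must show that the generic drops still exhaust $k\dim(U/W)$ unless $\Omega(\eta n/b)$ faults occur, and it must be robust to earlier faults and to the order of revelation. This is where the maximum over $W$ in $R_{\mathcal V,U}$ really enters. Choosing a maximizing $W$ makes every subspace of $U/W$ meet the projected $V_i$ no more densely than $U/W$ itself. As the example shows, though, that density condition alone does not prevent deterministic deficits when coordinates are revealed in a fixed order.

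Your sketch does not supply this step. Its accounting for the constants is also reverse-engineered from the target bound rather than derived. For instance, it explains the factor $(2b-1)$ as a union bound over ``types of $W$'', although $W$ is fixed by the profile.
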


Lemma~\ref{lem:rs-fixed-rowspan-lms} is a fixed-profile corollary of Theorem 3.10 of~\cite{levi2025random}. We next convert high-rank syndrome witnesses into local profiles. We state the lemma in a form that applies uniformly to affine spaces and degree-$\ell$ curves, because the same statement will be used for affine spaces and polynomial curves.

The next lemma is the bridge between syndrome witnesses and the local-profile framework of~\cite{levi2025random}. Suppose a high-rank witness matrix $X$ has many linear combinations that are actual codewords, namely $X\cdot U\subseteq \cC$ for a subspace $U\subseteq \mathbb F_q^t$. Here $XU\subseteq \cC$ means that for every $\mathbf{u}=(\mathbf{u}_1,\ldots,\mathbf{u}_t)\in U$, $X\mathbf{u}=\sum_{j=1}^t \mathbf{u}_j\mathbf{x}_j\in \cC.$ Since each original witness column $\mathbf{x}_j$ has weight at most $E$, the matrix $X$ has at most $tE$ nonzero entries in total. Therefore, the rows of $X$ gives a local profile $\mathcal{V}=(V_1,\ldots,V_n)$ contained by $\cC$, and this profile satisfies $\sum_{i=1}^n \dim V_i \le tE.$

Let $X\in \mathbb F_q^{n\times t}$ and write $J_i(X):=\{j\in[t]:X_{i,j}\ne 0\}$ for $i\in [n]$. Thus, $J_i(X)$ records the columns of $X$ whose $i$-th coordinate is nonzero.

\begin{lemma}\label{lem:rs-structured-witness-profile}
Let $X=[\mathbf{x}_1|\cdots|\mathbf{x}_t]\in \mathbb F_q^{n\times t}$ have rank $t$, and assume $\mathrm{wt}(\mathbf{x}_j)\le E$ for all $j\in [t]$. Let $U\subseteq \mathbb F_q^t$ be a $b$-dimensional subspace with $1\le b<t$, and assume $X\cdot U\subseteq \cC.$

Choose a basis $\mathbf{u}_1,\ldots,\mathbf{u}_b$ of $U$. For each $j\in[t]$, define $\mathbf{m}_j:=((\mathbf{u}_1)_j,\ldots,(\mathbf{u}_b)_j)\in \mathbb F_q^b,$ and for each coordinate $i\in[n]$, define $V_i:=\mathrm{span}\{\mathbf{m}_j:j\in J_i(X)\}\subseteq \mathbb F_q^b.$ Let $\mathcal{V}=(V_1,\ldots,V_n)$. Then $\cC$ contains $(\mathcal{V},\F_q^b)$. Moreover, $\sum_{i=1}^n \dim V_i\le tE.$
\end{lemma}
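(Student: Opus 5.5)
The plan is to build the matrix $A$ explicitly and verify the three conditions in the definition of ``$\cC$ contains $(\mathcal V,\F_q^b)$'', then bound the local dimensions by counting nonzero entries.

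Let $M\in\F_q^{t\times b}$ be the matrix whose columns are $\mathbf{u}_1,\ldots,\mathbf{u}_b$. Its $j$-th row is $\mathbf{m}_j$. Set
\[
A:=XM=[X\mathbf{u}_1|\cdots|X\mathbf{u}_b]\in\F_q^{n\times b}.
\]

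\emph{Columns.} Each column $A_{*,k}=X\mathbf{u}_k$ lies in $X\cdot U\subseteq\cC$ by hypothesis.

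\emph{Rows.} Row $i$ of $A$ is
\[
A_{i,*}=\sum_{j=1}^t X_{i,j}\,\mathbf{m}_j=\sum_{j\in J_i(X)}X_{i,j}\,\mathbf{m}_j .
\]
This is a combination of the vectors $\mathbf{m}_j$ with $j\in J_i(X)$, so $A_{i,*}\in V_i$. Hence $A$ satisfies $\mathcal V$.

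\emph{Row space.} This is the one step that uses the hypotheses nontrivially. Since $\rank(X)=t$, the columns of $X$ are independent, so $X$ has $t$ linearly independent rows. Its row space is therefore all of $\F_q^t$, and there is a matrix $P\in\F_q^{t\times n}$ with $PX=I_t$. Then $PA=M$, which gives $\mathrm{Row}(M)\subseteq\mathrm{Row}(A)$. Conversely $\mathrm{Row}(A)=\mathrm{Row}(XM)\subseteq\mathrm{Row}(M)$, so $\mathrm{Row}(A)=\mathrm{Row}(M)$. The columns $\mathbf{u}_1,\ldots,\mathbf{u}_b$ of $M$ form a basis of $U$, so $\rank(M)=b$ and $\mathrm{Row}(M)=\F_q^b$. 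Thus $\mathrm{Row}(A)=\F_q^b$, and $A$ witnesses that $\cC$ contains $(\mathcal V,\F_q^b)$.

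\emph{Dimension bound.} For each $i$, $\dim V_i\le|J_i(X)|$. Summing over $i$,
\[
\sum_{i=1}^n\dim V_i\;\le\;\sum_{i=1}^n|J_i(X)|\;=\;\sum_{j=1}^t\wt(\mathbf{x}_j)\;\le\;tE,
\]
because $\sum_i|J_i(X)|$ is the total number of nonzero entries of $X$, counted either by rows or by columns.

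The only point needing care is the equality $\mathrm{Row}(A)=\F_q^b$: it needs both the full column rank of $X$ and the fact that $\mathbf{u}_1,\ldots,\mathbf{u}_b$ form a basis of $U$. Everything else follows directly from the definitions.
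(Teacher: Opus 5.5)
Your proof is correct and follows essentially the same route as the paper: the same matrix $A=[X\mathbf{u}_1|\cdots|X\mathbf{u}_b]$, the same check that $A_{i,*}\in V_i$, and the same count of nonzero entries for the dimension bound. The one small difference is in showing $\mathrm{Row}(A)=\F_q^b$: the paper notes that injectivity of $X$ makes the columns $X\mathbf{u}_k$ independent, so $\rank(A)=b$, whereas you use a left inverse of $X$ to get $\mathrm{Row}(A)=\mathrm{Row}(M)$, and both rest on the same full-column-rank hypothesis.
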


\begin{proof}
Define $A:=[X\mathbf{u}_1|\cdots|X\mathbf{u}_b]\in \mathbb F_q^{n\times b}.$ Since $\mathbf{u}_1,\ldots,\mathbf{u}_b\in U$ and $X\cdot U\subseteq \cC$, every column of $A$ lies in $\cC$. Since $\mathbf{u}_1,\ldots,\mathbf{u}_b$ are linearly independent, the columns $X\mathbf{u}_1,\ldots,X\mathbf{u}_b$ are also linearly independent. Hence $\mathrm{rank}(A)=b$ and therefore $\mathrm{Row}(A)=\mathbb F_q^b$.

For each coordinate $i\in[n]$, the $i$-th row of $A$ is $A_{i,*}=\sum_{j\in J_i(X)} X_{i,j}\cdot \mathbf{m}_j,$ and hence $A_{i,*}\in V_i$. Thus $\cC$ contains $(\mathcal{V},\mathbb F_q^b)$.

Finally, $\sum_{i=1}^n \dim V_i\le \sum_{i=1}^n |J_i(X)|= \sum_{j=1}^t \mathrm{wt}(\mathbf{x}_j)\le tE.$ This proves the lemma.
\end{proof}

\begin{prop}
\label{prop:rs-profile-exclusion-uniform}
Fix $R\in(0,1)$, integers $1\le b<T$, and real numbers $\rho,\theta>0$ such that $T\rho<\theta<b(1-R).$ Let $E=\lfloor \rho n\rfloor$. Then there exists a constant $C_{R,b,T,\theta}$ such that, if $q\ge C_{R,b,T,\theta}\cdot n$, then with probability at least $1-\exp(-\Omega(n))$, the iid random RS code $\cC$ has the following property:

There do not exist an integer $t$ with $b+1\le t\le T$, a $b$-dimensional subspace $U\subseteq \F_q^t$, and a full-rank matrix $X=[\mathbf{x}_1|\cdots|\mathbf{x}_t]\in \mathbb F_q^{n\times t}$ such that every column of $X$ has weight at most $E$ and all $U$-linear combinations of the columns of $X$
are codewords. Equivalently, there is no such triple $(t,U,X)$ satisfying
\[
\mathrm{rank}(X)=t,\qquad
\mathrm{wt}(\mathbf{x}_j)\le E\ \text{ for all }j\in[t],
\qquad
X\cdot U\subseteq \cC.
\]
\end{prop}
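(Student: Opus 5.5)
We reduce the statement to Lemma~\ref{lem:rs-fixed-rowspan-lms} through Lemma~\ref{lem:rs-structured-witness-profile}, and then take a union bound over a family of local profiles whose size is only $\exp(O(n))$, independent of $q$.

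Suppose a triple $(t,U,X)$ exists. Lemma~\ref{lem:rs-structured-witness-profile} gives a $b$-local profile $\mathcal V=(V_1,\ldots,V_n)$ such that $\cC$ contains $(\mathcal V,\F_q^b)$. Moreover, $\sum_i\dim V_i\le tE\le T\rho n$.

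Each $V_i$ is a span of vectors $\mathbf m_j$, and these vectors depend on $U$ and hence on $q$. To make the count independent of $q$, we normalise. Fix the row-reduced echelon basis of $U$, so that $(\mathbf m_1,\ldots,\mathbf m_t)$ are the columns of a $b\times t$ matrix $M$ of rank $b$ in reduced echelon form. Profiles that differ by a common invertible change of basis of $\F_q^b$ are equivalent, because containment of $(\mathcal V,\F_q^b)$ is invariant under it. The profile is then determined by two pieces of data. The first is the support pattern $(J_i(X))_{i\in[n]}$, which admits at most $2^{Tn}$ choices. The second is the matrix $M$, which admits at most $q^{bT}$ choices, and $t\le T$ ranges over at most $T$ values. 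The factor $q^{bT}$ is polynomial in $q$, which is fine because the probability bound will decay like $q^{-\Omega(n)}$.

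For each fixed profile we need its threshold rate. Lemma~\ref{lem:rs-fixed-rowspan-lms} requires $R\le R_{\mathcal V,\F_q^b}-\eta$. Taking $W=0$, the corresponding term in the maximum is $1-\sum_i\dim V_i/(nb)\ge 1-T\rho/b$. We want this to exceed $R+\eta$. This holds when $T\rho<b(1-R)$, which follows from $T\rho<\theta<b(1-R)$. We take $\eta:=(\theta-T\rho)/b$ or something similar, and use $\theta$ as the slack. Since $R_{\mathcal V,U}$ is a maximum over $W$, a single good choice of $W$ suffices. This makes the step easy, but it must be double-checked against the exact definition, since a max only helps us here.

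Lemma~\ref{lem:rs-fixed-rowspan-lms}, applied with $\eta n\ge 2b(b+1)$ (true for large $n$) and $q>kb$, then gives
\[
\Pr\bigl[\cC\text{ contains }(\mathcal V,\F_q^b)\bigr]\le (2b-1)\Bigl(\tfrac{(4b)^{4b}k}{\eta q}\Bigr)^{\eta n/(2b)}.
\]

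Finally we take the union bound. The failure probability is at most
\[
T\,2^{Tn}q^{bT}(2b-1)\Bigl(\tfrac{(4b)^{4b}Rn}{\eta q}\Bigr)^{\eta n/(2b)}.
\]
Choose $q\ge C n$ with $C$ large in terms of $R,b,T,\theta$, so that the base $(4b)^{4b}Rn/(\eta q)$ is at most $2^{-4Tb/\eta}$. The product is then $\exp(-\Omega(n))$; the term $q^{bT}$ needs care, so instead of counting $M$ over $\F_q$ we invoke the lemma with $U$-normalised profiles. Alternatively, we can argue that $\mathcal V$ depends only on the matroid of $\{\mathbf m_j\}$ up to $\mathrm{GL}_b$. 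This is false in general, so the honest route is to absorb $q^{bT}$: it equals $\exp(bT\ln q)$, while the main term is at most $(q/n)^{-\eta n/(2b)}$, and this dominates once $n$ is large.

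The main obstacle is this last counting step: making sure the union over profiles is small enough. Two points need to be handled. Profiles are determined up to $\mathrm{GL}_b$ by $M$ and the supports. The polynomial factor $q^{bT}$ must be swamped by the exponential decay rate $\eta n/(2b)$ in $q/n$, and this works when $q=\Theta(n)$ with a large constant.
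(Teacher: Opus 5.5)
Your proposal is correct and follows essentially the same route as the paper: convert $(t,U,X)$ into a profile via Lemma~\ref{lem:rs-structured-witness-profile}, bound the threshold rate from below using the $W=0$ term, apply Lemma~\ref{lem:rs-fixed-rowspan-lms}, and union-bound over $t$, one fixed basis per $U$ (polynomially many choices in $q$), and the at most $2^{Tn}$ support patterns. The paper phrases the slack as total dimension at most $\theta n$ with $\eta=\tfrac12(1-R-\theta/b)$ rather than your $\eta=(\theta-T\rho)/b$, which makes no difference; your absorption of the polynomial-in-$q$ factor is valid for every $q\ge Cn$, because once $\eta n/(2b)>bT$ the bound decreases in $q$, and the $\mathrm{GL}_b$-normalisation and matroid detours you considered and dropped are not needed.
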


\begin{proof}
Let $\eta:=\frac12\left(1-R-\frac{\theta}{b}\right)>0.$ We first describe the whole family of local profiles that may arise from a matrix $X$ as in the statement. For such a matrix, the induced profile depends only on which columns of $X$ are nonzero at each coordinate. Thus, in the first part of the proof we enumerate all possible choices of these column sets, before any specific matrix $X$ is fixed.

Fix an integer $t\in\{b+1,\ldots,T\}$ and a $b$-dimensional subspace $U\subseteq \mathbb F_q^t$. Choose one basis $\mathbf{u}_1,\ldots,\mathbf{u}_b$ of $U$, and write $\mathbf{m}_j:=((\mathbf{u}_1)_j,\ldots,(\mathbf{u}_b)_j)\in \mathbb F_q^b$ for $j\in [t]$. If a matrix $X$ is later given, then at coordinate $i$, the $i$-th row of $[X\mathbf{u}_1|\cdots|X\mathbf{u}_b]$ is a linear combination of exactly those vectors $\mathbf{m}_j$ for which $X_{i,j}\ne 0$. Thus, to cover all possible matrices $X$, it suffices to let the subsets $J_i\subseteq[t]$ range over all choices and define $V_i:=\mathrm{span}\{\mathbf{m}_j:j\in J_i\}\subseteq \mathbb F_q^b$ for $i\in [n]$. Let $\mathcal{V}=(V_1,\ldots,V_n)$.

Consider such a profile $\mathcal{V}$ with $\sum_{i=1}^n \dim V_i\le \theta n.$ Taking $W=0$ in the definition of the threshold rate eq~\eqref{eq:threshold-rate} gives 
\[
R_{\mathcal{V},\mathbb F_q^b}\ge 1-\frac{\sum_{i=1}^n \dim V_i}{nb}\ge 1-\frac{\theta}{b}=R+2\eta.
\]
Hence, Lemma~\ref{lem:rs-fixed-rowspan-lms} applies and gives
\[
\Pr\big[\cC\text{ contains }(\mathcal{V},\mathbb F_q^b)\big]\le(2b-1)\left(\frac{(4b)^{4b}Rn}{\eta q}\right)^{\eta n/(2b)}.
\]

We now take a union bound over all choices of $t$, $U$, and $J_1,\ldots,J_n$, and hence over all profiles in the family enumerated above. There are at most $T$ choices for $t$, at most $q^{b(t-b)}\le q^{T^2}$ choices for $U$, and at most $2^{tn}\le 2^{Tn}$ choices for the subsets $J_1,\ldots,J_n$. Therefore the probability that $\cC$ contains any such profile is at most $Tq^{T^2}2^{Tn}(2b-1)\left(\frac{(4b)^{4b}Rn}{\eta q}\right)^{\eta n/(2b)}.$ Since $b,T,\eta$ are fixed constants, choosing $C_{R,b,T,\theta}$ sufficiently large makes the last expression at most $\exp(-\Omega(n))$.

It remains to connect this enumeration back to the matrices $X$ in the proposition. Suppose that a tuple $(t,U,X)$ as in the statement exists. For this particular matrix $X$, take $J_i=J_i(X):=\{j\in[t]:X_{i,j}\ne 0\}.$ Then the profile induced by $X$ is one of the profiles enumerated above. Moreover, Lemma~\ref{lem:rs-structured-witness-profile} shows that $\cC$ contains this profile, and $\sum_{i=1}^n \dim V_i\le tE\le T\rho n<\theta n.$ This contradicts the preceding union bound, which showed that with high probability $\cC$ contains no profile in this enumerated family with total dimension at most $\theta n$. Hence no such tuple $(t,U,X)$ exists, except with probability $\exp(-\Omega(n))$.

\end{proof}

We can now combine the deterministic rank-reduction theorems from Sections~\ref{section:correlated-agreement} and~\ref{section:curve-correlated-agreement} with Proposition~\ref{prop:rs-profile-exclusion-uniform}. The deterministic part shows that a bad affine space or a bad degree-$\ell$ curve forces the existence of many linearly independent low-weight witness columns. After collecting these columns into a matrix $X$, the syndrome matrix $HX$ has small rank: in the affine-space case, all its columns lie in $\mathrm{span}(\mathbf{s}_0,\ldots,\mathbf{s}_m)$, and in the degree-$\ell$ curve case, all its columns lie in $\mathrm{span}(\mathbf{s}_0,\ldots,\mathbf{s}_\ell)$. Hence $\ker(HX)$ has positive dimension, and in the proof below we will choose a subspace $U\subseteq \ker(HX)$ of the dimension required by Proposition~\ref{prop:rs-profile-exclusion-uniform}. For every such $U$, we have $X\cdot U\subseteq \cC.$ Proposition~\ref{prop:rs-profile-exclusion-uniform} then shows that, with high probability, a random RS code admits no such full-rank low-weight matrix $X$ together with such a subspace $U$.

The theorem is stated with a free output radius $E^+$. Thus $E^+=E$ gives the no-slack statement in the i.i.d. random-RS model, while $E^+=E+\Omega(n)$ gives the fixed-slack form that will later transfer to the distinct-evaluation model. For $E\le E^+$, write $B_{E,E^+}:=\left\lfloor \frac{E^++1}{E^+-E+1}\right\rfloor$.

\begin{thm} \label{thm:iid-rs-ca-main}
Fix $R\in(0,1)$, $0<\varepsilon<(1-R)/2$, and fixed integers $m,\ell\ge 1$.  There exists a constant $C_{R,\varepsilon,m,\ell}$ such that the following holds for all sufficiently large $n$. Let $0<\rho\le 1-R-\varepsilon$, $E=\lfloor \rho n\rfloor$, and let $d=\big\lfloor (1-R-\varepsilon/4)n\big\rfloor$. Let $E^+$ be any integer with $E\le E^+<d$, and let $\cC=\mathrm{RS}_{\F_q}((\alpha_1,\ldots,\alpha_n);k)$ be the iid random RS code, where $q\ge n\cdot 2^{O_{R,m,\ell}(\varepsilon^{-3})}$.  Then, with probability at least $1-\exp(-\Omega(n))$, $\cC$ simultaneously satisfies:

\begin{enumerate}
\item \emph{Affine spaces.}  The code $\cC$ satisfies the $(m,E,E^+,K_m/q^m)$-space correlated-agreement property, where
\[
K_m:=\left\lfloor
q^{m-1}
\left\lfloor\frac{E^++1}{E^+-E+1}\right\rfloor
\left(\frac{d}{d-E}\right)^{
\left\lceil \frac{4(m+1)(1-R)}{\varepsilon}\right\rceil}
\right\rfloor .
\]

\item \emph{Curves.}  The code $\cC$ satisfies the $(\ell,E,E^+,K^{(\ell)}/q)$-correlated-agreement property for degree-$\ell$ curves, where
\[
K^{(\ell)}:=\left\lfloor
\ell
\left\lfloor\frac{E^++1}{E^+-E+1}\right\rfloor
\left(\frac{d}{d-E}\right)^{
\left\lceil \frac{4(\ell+1)(1-R)}{\varepsilon}\right\rceil}
\right\rfloor .
\]
\end{enumerate}
\end{thm}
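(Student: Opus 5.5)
Work on the intersection of two good events, each of probability $1-\exp(-\Omega(n))$. The first is the minimum-distance event of Lemma~\ref{lem:iid-rs-distance}: $\cC$ has dimension $k$ and $d(\cC)\ge d=\lfloor(1-R-\varepsilon/4)n\rfloor$. This needs $q\ge(4/\varepsilon)n$, which the field-size assumption covers. The second is the profile-exclusion event of Proposition~\ref{prop:rs-profile-exclusion-uniform}, invoked once for affine spaces and once for curves, with suitable constants $b$, $T$, $\theta$.

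For affine $m$-spaces, set $s:=\lceil 4(m+1)(1-R)/\varepsilon\rceil$. Suppose $S$ has dimension $m$, satisfies $|S\cap H_E|/|S|>K_m/q^m$, and has no $(H,m,E,E^+)$-correlated agreement. Then $|S\cap H_E|\ge K:=K_m+1$. Choose $K$ covered elements $\beta_j$ and an $E$-witness matrix $X$ for them. Since
\[
K>B_{E,E^+}q^{m-1}\gamma^{-s}\ge q^{m-1},
\]
the $\beta_j$ do not lie in any affine hyperplane. Theorem~\ref{thm:min-rank-bad-m-space} then forces $\operatorname{rank}(X)\ge h+s+1$, where $h\le m+1$. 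Pick $t:=m+1+s'$ linearly independent columns, for a suitable $s'\le s$, and call the resulting matrix $X'\in\F_q^{n\times t}$. It has full rank and column weights at most $E$. Every column of $HX'$ lies in $\operatorname{span}(\mathbf s_0,\dots,\mathbf s_m)$, so $\operatorname{rank}(HX')\le m+1$ and $\dim\ker(HX')\ge t-m-1=:b$. Take any $b$-dimensional $U\subseteq\ker(HX')$; then $X'U\subseteq\cC$. This contradicts Proposition~\ref{prop:rs-profile-exclusion-uniform}, provided its hypotheses $b+1\le t\le T$ and $T\rho<\theta<b(1-R)$ hold.

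Curves are handled identically. Use Theorem~\ref{thm:curve-witness-rank} with threshold $\ell B_{E,E^+}\gamma^{-s}$ and $s=\lceil 4(\ell+1)(1-R)/\varepsilon\rceil$, and use $\operatorname{span}(\mathbf s_0,\dots,\mathbf s_\ell)$, so that $b=t-\ell-1$.

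The main obstacle is choosing parameters so that $t\rho<b(1-R)$, i.e.\ $(b+m+1)\rho<b(1-R)$. This requires
\[
b\bigl(1-R-\rho\bigr)>(m+1)\rho .
\]
Since $1-R-\rho\ge\varepsilon$, it suffices to take $b\approx(m+1)(1-R)/\varepsilon$ (up to a factor of about $2$ for slack). Then set $T:=b+m+1$, and pick $\theta$ strictly between $T\rho$ and $b(1-R)$; the gap is at least of order $\varepsilon$ times a constant. The witness rank we obtain, $s+h+1$, is at least $T$ because $s\approx 4(m+1)(1-R)/\varepsilon\ge b$. So we just take the first $T$ independent columns, and only one value of $t$ is actually needed.

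It remains to check that $\eta=\tfrac12(1-R-\theta/b)$ is of order $\varepsilon\cdot(\rho/b)$-ish, bounded below by $\Omega(\varepsilon^2)$ with the choice above. The union bound in the proposition is
\[
q^{T^2}2^{Tn}\Bigl(\tfrac{(4b)^{4b}k}{\eta q}\Bigr)^{\eta n/(2b)} .
\]
For this to be $\exp(-\Omega(n))$ we need $q/n\ge (4b)^{4b}\eta^{-1}2^{O(T b/\eta)}$. With $b,T=O_m(1/\varepsilon)$ and $\eta=\Omega(\varepsilon^{2})$ this is $2^{O(\varepsilon^{-3})}$, which gives the stated field size $q\ge n\cdot 2^{O_{R,m,\ell}(\varepsilon^{-3})}$. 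The delicate bookkeeping is exactly this: keeping $\eta\gtrsim\varepsilon^2$ by taking $b$ a constant factor above $(m+1)(1-R)/\varepsilon$, so that $\theta/b\le 1-R-\Omega(\varepsilon)$. Finally, a union bound over the distance event and the two profile events completes the argument.
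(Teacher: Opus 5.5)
Your argument is essentially the paper's proof. You use the distance event of Lemma~\ref{lem:iid-rs-distance}, the witness-rank bound $h+s+1$ from Theorems~\ref{thm:min-rank-bad-m-space} and~\ref{thm:curve-witness-rank}, a subspace $U\subseteq\ker(HX)$ giving $X\cdot U\subseteq\cC$, and Proposition~\ref{prop:rs-profile-exclusion-uniform}. The only difference is the parametrization: you fix $t=T=b+m+1$ with $b\approx 2(m+1)(1-R)/\varepsilon$, whereas the paper takes $b=s+1$, $T=s+m+2$ and $t=h+s+1$.

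Two bookkeeping points need fixing.

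First, your $\eta$ estimate is inconsistent with your conclusion. With $b,T=\Theta(1/\varepsilon)$ and only $\eta=\Omega(\varepsilon^2)$, the exponent $bT/\eta$ is $\Theta(\varepsilon^{-4})$, not $\varepsilon^{-3}$. In fact $\eta$ is larger: taking $\theta$ at the midpoint,
\[
\eta=\frac{b(1-R)-T\rho}{4b},\qquad b(1-R)-T\rho\ge b\varepsilon-(m+1)\rho\ge (m+1)(1-R).
\]
So $\eta=\Omega(\varepsilon)$, which matches your own remark that $\theta/b\le 1-R-\Omega(\varepsilon)$. This is what actually yields $q\ge n\cdot 2^{O(\varepsilon^{-3})}$.

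Second, in the curve case $h$ can be any value in $\{1,\dots,\ell+1\}$, not just $m$ or $m+1$. Extracting exactly $T=b+\ell+1$ independent columns from a witness of rank at least $h+s+1$ therefore needs $s\ge b+\ell-1$, not merely $s\ge b$. Your factor-of-two slack between $s$ and $b$ covers this, since $(1-R)/\varepsilon>2$. The case $h=0$ is trivial, because then $X=0$ already gives correlated agreement.
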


\begin{proof}
By Lemma~\ref{lem:iid-rs-distance}, if the constant $C_{R,\varepsilon,m,\ell}$ is large enough, then with probability at least $1-\exp(-\Omega(n))$, the iid random RS code has dimension $k$ and minimum distance at least $d=\left\lfloor (1-R-\varepsilon/4)n\right\rfloor$. We will work on this event.

We first provide the parameter choice that will be used to apply Proposition~\ref{prop:rs-profile-exclusion-uniform}. For an integer $a\ge 1$, define $s_a:=\left\lceil \frac{4a(1-R)}{\varepsilon}\right\rceil,\,b_a:=s_a+1,\,T_a:=s_a+a+1$ and define $\theta_a:=\frac{T_a\rho+b_a(1-R)}2.$ Since $\rho\le 1-R-\varepsilon$, we have
\[
b_a(1-R)-T_a\rho=(s_a+1)(1-R)-(s_a+a+1)\rho\ge (s_a+a+1)\varepsilon-a(1-R)>0.
\]
Therefore $T_a\rho<\theta_a<b_a(1-R).$ Thus, Proposition~\ref{prop:rs-profile-exclusion-uniform} can be applied with parameters $(b,T,\theta)=(b_a,T_a,\theta_a).$ Let $\eta_a:=\frac12\left(1-R-\frac{\theta_a}{b_a}\right).$ By the proof of Proposition~\ref{prop:rs-profile-exclusion-uniform}, it is enough to require
\[
q\ge n\cdot\max\left\{2Rb_a,\,\frac{(4b_a)^{4b_a}R}{\eta_a}\cdot 2^{4b_aT_a/\eta_a}\right\}.
\]
We now estimate this quantity. Since $\eta_a=\frac12\left(1-R-\frac{\theta_a}{b_a}\right)=\frac{b_a(1-R)-T_a\rho}{4b_a}$ and $\rho\le 1-R-\varepsilon$, we have $b_a(1-R)-T_a\rho\ge (s_a+a+1)\varepsilon-a(1-R).$ By the choice $s_a=\left\lceil 4a(1-R)/\varepsilon\right\rceil$, the right-hand side is positive, and moreover $\eta_a=\Omega_a(\varepsilon)$. Also $b_a=O_a(\varepsilon^{-1})$ and $T_a=O_a(\varepsilon^{-1}).$ Therefore
\[
\log_2\left(\frac{(4b_a)^{4b_a}R}{\eta_a}\cdot 2^{4b_aT_a/\eta_a}\right)=O_a(\varepsilon^{-1}\log(1/\varepsilon))+O_a(\log(1/\varepsilon))+O_a(\varepsilon^{-3})=O_a(\varepsilon^{-3}).
\]
The term $2Rb_a$ is also bounded by $2^{O_a(\varepsilon^{-3})}$. Hence, the field-size condition required by Proposition~\ref{prop:rs-profile-exclusion-uniform} is of the form $q\ge n\cdot 2^{O_a(\varepsilon^{-3})}.$ Since we only use this estimate for $a=m+1$ and $a=\ell+1$, both applications of Proposition~\ref{prop:rs-profile-exclusion-uniform} are covered by $q\ge n\cdot 2^{O_{m,\ell}(\varepsilon^{-3})}.$ The remaining requirements, namely the requirement in  Lemma~\ref{lem:iid-rs-distance} and the inequalities $K_m<q^m$ and $K^{(\ell)}<q$, are also implied by the same bound. Thus all field-size conditions in the proof follow from $q\ge n\cdot 2^{O_{R,m,\ell}(\varepsilon^{-3})}.$

We prove the affine-space statement first. Let $s:=\left\lceil \frac{4(m+1)(1-R)}{\varepsilon}\right\rceil$. By the definition of $K_m$, $K_m+1>B_{E,E^+}\,q^{m-1}\left(\frac{d}{d-E}\right)^s$. Suppose, for contradiction, that there is an affine syndrome space $S=\mathbf{s}_0+\mathrm{span}(\mathbf{s}_1,\ldots,\mathbf{s}_m)$ such that $|S\cap H_E|>K_m$ and $S$ has no $(H,m,E,E^+)$-correlated agreement. Choose $K_m+1$ points from $S\cap H_E$. Write each selected point uniquely as $\mathbf{s}_0+\sum_{i=1}^m \mathbf{\beta}_i \mathbf{s}_i$ with parameter $\mathbf{\beta}=(\beta_1,\ldots,\beta_m)\in\mathbb F_q^m$, and choose one low-weight preimage for each of them. This gives an $E$-witness matrix with corresponding points in $\mathbb F_q^m$. Write $h:=\dim_{\mathbb F_q}\mathrm{span}(\mathbf{s}_0,\mathbf{s}_1,\ldots,\mathbf{s}_m).$ Since $S$ has dimension $m$, we have $m\le h\le m+1$.

Since $K_m+1>B_{E,E^+}q^{m-1}\left(\frac{d}{d-E}\right)^s\ge q^{m-1},$ the corresponding parameter points in $\mathbb F_q^m$ cannot all lie in an affine hyperplane. Equivalently, the augmented vectors $(1,\mathbf{\beta}_1,\ldots,\mathbf{\beta}_m)$ of these parameter points span $\mathbb F_q^{m+1}$. Hence the corresponding syndrome columns $\mathbf{s}_0+\sum_{i=1}^m \mathbf{\beta}_i \mathbf{s}_i$ span $\mathrm{span}(\mathbf{s}_0,\mathbf{s}_1,\ldots,\mathbf{s}_m)$. In particular, the syndrome matrix has rank $h$, and every resulting witness matrix has rank at least $h$. By Theorem~\ref{thm:min-rank-bad-m-space}, every such witness matrix has rank at least $h+s+1$. Hence we may choose $t:=h+s+1$ linearly independent witness columns and collect them into a matrix $X\in \mathbb F_q^{n\times t}.$ All columns of $X$ have weight at most $E$. The syndrome columns of $X$ lie in $\mathrm{span}(\mathbf{s}_0,\mathbf{s}_1,\ldots,\mathbf{s}_m),$ so $\mathrm{rank}(HX)\le h.$ Therefore, $\dim\ker(HX)\ge t-h=s+1.$ Choose a $s+1$-dimensional subspace $U\subseteq \ker(HX)$. Then $X\cdot U\subseteq \cC$, because for every $\mathbf{u}\in U$, we have $H(X\mathbf{u})=(HX)\mathbf{u}=\mathbf{0}.$ Also, if we take $T:=s+m+2,$ then $s+2\le t\le T.$ Thus, $X$ and $U$ give exactly the type of configuration that Proposition~\ref{prop:rs-profile-exclusion-uniform} shows cannot occur with high probability, with the above parameter choice for $a=m+1$. This contradiction proves the affine-space statement.

The proof for degree-$\ell$ curves is the same, with Theorem~\ref{thm:curve-witness-rank} replacing Theorem~\ref{thm:min-rank-bad-m-space}. Let $s:=\left\lceil \frac{4(\ell+1)(1-R)}{\varepsilon}\right\rceil$. By the definition of $K^{(\ell)}$, $K^{(\ell)}+1>\ell B_{E,E^+}\left(\frac{d}{d-E}\right)^s$. Suppose that a degree-$\ell$ syndrome curve $S(\alpha)=\mathbf{s}_0+\sum_{i=1}^{\ell}\alpha^i \mathbf{s}_i$ has more than $K^{(\ell)}$ points $\alpha\in\mathbb F_q$ satisfying $S(\alpha)\in H_E$, but has no $(H,\ell,E,E^+)$-correlated agreement. Choose $K^{(\ell)}+1$ such distinct parameters $\alpha$, and choose one low-weight preimage for each corresponding syndrome point $S(\alpha)$. This gives the corresponding $E$-witness matrix.

Since $K^{(\ell)}+1>\ell B_{E,E^+}\left(\frac{d}{d-E}\right)^s\ge \ell+1,$ we have selected at least $\ell+1$ distinct parameters $\alpha\in\mathbb F_q$. By the Vandermonde determinant, the vectors $(1,\alpha,\alpha^2,\ldots,\alpha^\ell)$ associated with any $\ell+1$ distinct selected points span $\mathbb F_q^{\ell+1}$. Hence the corresponding syndrome columns $\mathbf{s}_0+\sum_{i=1}^{\ell}\alpha^i \mathbf{s}_i$ span $\mathrm{span}(\mathbf{s}_0,\ldots,\mathbf{s}_\ell)$. Thus the syndrome matrix has rank $h=\dim_{\mathbb F_q}\mathrm{span}(\mathbf{s}_0,\ldots,\mathbf{s}_\ell).$ If $h=0$, then the curve has trivial correlated agreement, so we may assume $1\le h\le \ell+1$. By Theorem~\ref{thm:curve-witness-rank}, every corresponding witness matrix has rank at least $h+s+1$. Choose $t:=h+s+1$ linearly independent witness columns and collect them into a matrix $X\in\mathbb F_q^{n\times t}$. Then all columns of $X$ have weight at most $E$, and $\mathrm{rank}(HX)\le h.$ Hence, $\dim\ker(HX)\ge t-h=s+1.$ Choose a $s+1$-dimensional subspace $U\subseteq \ker(HX)$. Thus, $X$ and $U$ give exactly the type of configuration that Proposition~\ref{prop:rs-profile-exclusion-uniform} shows cannot occur with high probability, with the above parameter choice for $a=\ell+1$. This proves the curve statement, and hence the theorem.
\end{proof}

\begin{rmk}\label{rmk:compare-random-rs}
Theorem~\ref{thm:iid-rs-ca-main} already yields both the no-slack and the fixed-slack regimes. The proof of Theorem~\ref{thm:iid-rs-ca-main} shows that it suffices to take $q \ge n\cdot 2^{O(\varepsilon^{-3})}$ for affine $m$-spaces and $q \ge n\cdot 2^{O_\ell(\varepsilon^{-3})}$ with degree-$\ell$ curves.

Now we compare these parameters with the random-RS consequences of~\cite{goyal2025optimal}. Their theorem applies up to radius $\rho\le 1-R-2\varepsilon.$ Concretely, this gives a sufficient condition for alphabet size $q \ge n\cdot 2^{O(\varepsilon^{-7})}$ in the line/affine-space case, and more generally $q \ge n\cdot 2^{O_\ell(\varepsilon^{-(2\ell+5)})}$ for degree-$\ell$ curves.

\end{rmk}

\begin{table}[H]
\centering
\renewcommand{\arraystretch}{1.18}
\caption{Comparison with the random-RS consequences of~\cite{goyal2025optimal}. The comparison is written in the same $\varepsilon$-parameterization as Theorem~\ref{thm:iid-rs-ca-main}.}
\label{tab:rrs-gg25-comparison}
\begin{tabularx}{\textwidth}{@{}
>{\raggedright\arraybackslash}p{0.22\textwidth}
>{\raggedright\arraybackslash}X
>{\raggedright\arraybackslash}X
@{}}
\toprule
\textbf{Aspect}
&
\textbf{Consequences of~\cite{goyal2025optimal}}
&\qquad\qquad
\textbf{This work}
\\
\midrule

Radius
&
\[
\rho \le 1-R-\varepsilon
\]
&
\[
\rho \le 1-R-\varepsilon
\]
\\

Affine spaces / lines
&
\[
q \ge n \cdot 2^{O(\varepsilon^{-7})}
\]
&
\[
q \ge n \cdot 2^{O(\varepsilon^{-3})}
\]
\\

Degree-$\ell$ curves
&
\[
q \ge n \cdot 2^{O_\ell(\varepsilon^{-(2\ell+5)})}
\]
&
\[
q \ge n \cdot 2^{O_\ell(\varepsilon^{-3})}
\]
\\

\bottomrule
\end{tabularx}
\end{table}

Theorem~\ref{thm:iid-rs-ca-main} was stated for the i.i.d. random RS model. We now explain how its fixed-slack consequences transfer to the more standard model in which the evaluation points are sampled without repetitions.

Let $\cC=\mathrm{RS}_{\mathbb F_q}((\alpha_1,\ldots,\alpha_n);k),$ where $\alpha_1,\ldots,\alpha_n$ are sampled independently and uniformly from $\mathbb F_q$. Let $\cC'=\mathrm{RS}_{\mathbb F_q}((\alpha_1',\ldots,\alpha_n');k),$ where $(\alpha_1',\ldots,\alpha_n')$ is sampled uniformly among all ordered $n$-tuples of distinct elements of $\mathbb F_q$. The following lemma of~\cite{levi2025random} relates these two models.

\begin{lemma}[Lemma~A.2~\cite{levi2025random}]\label{lem:distinct-rs}
Write $k=Rn$. Then, there exists a coupling $(\mathcal{C},\mathcal{C}')$ where
$\mathcal{C}\subseteq \mathbb{F}_q^n$ is a random RS code of dimension $k$ and
$\mathcal{C}'\subseteq \mathbb{F}_q^n$ is a random RS code of dimension $k$
without repetitions, such that there exists a linear bijection
$\varphi:\mathcal{C}\to \mathcal{C}'$ with
\[
\Pr\left[
\max_{x\in\mathcal{C}}
\bigl\{ \mathrm{wt}(x-\varphi(x)) \bigr\}
\geq
n-q\left(1-e^{-n/q}\right)(1-\delta)
\right]
\leq
\left(
\frac{e^{-\delta}}{(1-\delta)^{1-\delta}}
\right)^{q(1-e^{-n/q})}
\]
for all $0<\delta<1$.
\end{lemma}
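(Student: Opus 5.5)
The plan is to build the coupling explicitly by \emph{repairing repetitions}, take $\varphi$ to be ``re-evaluation of the same polynomial,'' and reduce the weight bound to a lower-tail estimate for the number of distinct values among $n$ i.i.d.\ uniform samples from $\F_q$.

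\textbf{Construction of the coupling.} Sample $\alpha_1,\ldots,\alpha_n$ i.i.d.\ uniform in $\F_q$. Let $P\subseteq[n]$ be the set of positions $i$ that are first occurrences, i.e.\ $\alpha_i\notin\{\alpha_1,\ldots,\alpha_{i-1}\}$, and let $N:=|P|$ be the number of distinct values. Define $\alpha'_i:=\alpha_i$ for $i\in P$. For the $n-N$ positions $i\notin P$, assign $\alpha'_i$ by drawing, uniformly at random and without replacement, elements of $\F_q\setminus\{\alpha_1,\ldots,\alpha_n\}$ in increasing order of $i$; this is possible since $q\ge n$.

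\textbf{$\alpha'$ is uniform over distinct tuples.} The whole procedure is equivariant under every permutation $\sigma$ of $\F_q$ applied pointwise. The i.i.d.\ law of $(\alpha_i)$ is $\sigma$-invariant, and the uniform fresh choices in a $\sigma$-image of the unused set have the $\sigma$-image law. Hence the law of $(\alpha'_1,\ldots,\alpha'_n)$ is invariant under $\mathrm{Sym}(\F_q)$. This group acts transitively on ordered $n$-tuples of distinct elements, so $\alpha'$ is uniform over that set, and $\cC'$ is exactly the random RS code without repetitions.

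\textbf{The map $\varphi$.} Set
\[
\varphi\bigl((f(\alpha_1),\ldots,f(\alpha_n))\bigr):=(f(\alpha'_1),\ldots,f(\alpha'_n)),\qquad \deg f<k.
\]
On the event $N\ge k$, evaluation at the $N$ distinct points $\alpha_i$ ($i\in P$) is injective on polynomials of degree $<k$. So $f\mapsto$ codeword is an isomorphism for both codes, both have dimension $k$, and $\varphi$ is a well-defined linear bijection. The complementary event $N<k$ has probability absorbed by the same tail bound below, since $k\le q(1-e^{-n/q})(1-\delta)$ can be assumed in the relevant range; otherwise the stated bound is trivial. Since $\alpha'_i=\alpha_i$ for $i\in P$, we get $\mathrm{wt}(x-\varphi(x))\le n-N$ for every $x\in\cC$. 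So it suffices to show
\[
\Pr\bigl[N\le q(1-e^{-n/q})(1-\delta)\bigr]\le\left(\frac{e^{-\delta}}{(1-\delta)^{1-\delta}}\right)^{q(1-e^{-n/q})}.
\]

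\textbf{Tail bound — the main step.} Write $N=\sum_{a\in\F_q}Z_a$, where $Z_a=\mathbf 1[a\in\{\alpha_i\}]$. Then
\[
\mu:=\mathbb E N=q\bigl(1-(1-1/q)^n\bigr)\ge \mu_0:=q(1-e^{-n/q}).
\]
The occupancy indicators in balls-into-bins are negatively associated (Dubhashi--Ranjan), so the multiplicative Chernoff lower-tail bound $\Pr[N\le(1-\delta')\mu]\le\bigl(e^{-\delta'}/(1-\delta')^{1-\delta'}\bigr)^{\mu}$ holds as for independent summands. Writing $(1-\delta)\mu_0=(1-\delta')\mu$ with $\delta'\ge\delta$, it remains to check that $\mu\bigl(-\delta'-(1-\delta')\ln(1-\delta')\bigr)\le\mu_0\bigl(-\delta-(1-\delta)\ln(1-\delta)\bigr)$. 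This is a short calculus check: the exponent $\mu\,g(\delta')$ with $\mu(1-\delta')$ fixed is monotone in $\mu$. I expect justifying negative association, or citing it cleanly, to be the only non-routine point; the remaining steps are bookkeeping.
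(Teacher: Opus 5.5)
The paper does not prove this lemma; it imports it from \cite{levi2025random}. Your argument is a correct self-contained proof. The shape of the bound, a multiplicative Chernoff lower tail with mean $q(1-e^{-n/q})$, suggests it is the same route as the cited source: keep first occurrences, refill repeated positions with fresh unused points, and map $f(\alpha)\mapsto f(\alpha')$. This gives $\wt(x-\varphi(x))\le n-N$. You then bound the number $N$ of occupied bins using negative association of the occupancy indicators (Dubhashi--Ranjan), replacing the true mean $q(1-(1-1/q)^n)$ by its lower bound $q(1-e^{-n/q})$. The $\mathrm{Sym}(\F_q)$-invariance argument for uniformity of $\alpha'$ is right. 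So is the monotonicity check: with $a=(1-\delta')\mu$ fixed, $\mu g(\delta')=-\mu+a+a\ln(\mu/a)$ has $\mu$-derivative $-\delta'\le 0$.

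The one incorrect sentence is your treatment of the event $\{N<k\}$. The right-hand side is $<1$ for every $\delta\in(0,1)$, so the bound is never ``trivial.'' The accurate picture has two parts.
\begin{itemize}
\item On $\{N\ge k\}$, $\varphi$ exists and the bad event forces $N\le(1-\delta)q(1-e^{-n/q})$, which your tail bound controls.
\item On $\{N<k\}$, $\dim\cC=N<k=\dim\cC'$, so no linear bijection exists at all. This event lies inside the tail event only when $k-1\le(1-\delta)q(1-e^{-n/q})$.
\end{itemize}
Outside that range the event $\{N<k\}$ cannot in general be absorbed. For example, if $q=n$ and $R>1-1/e$, then $N<k$ with probability $1-e^{-\Omega(n)}$ for every $\delta$.

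So the literal statement must be read either with that parameter restriction, or as a bound on the probability that $\varphi$ exists and has large distortion. Under the second reading, the regime $k>(1-\delta)q(1-e^{-n/q})$ is vacuous, since then $n-N\le n-k$ on $\{N\ge k\}$. You should say this explicitly rather than dismiss it. It is harmless for the paper, which uses the lemma with $q\ge C_\eta n$ and small $\delta$, where $(1-\delta)q(1-e^{-n/q})\ge Rn$.
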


We now explain how to use Lemma~\ref{lem:distinct-rs} to compare the two random RS models. The lemma allows us to sample the i.i.d. code $\cC$ and the distinct-evaluation code $\cC'$ together, together with a linear bijection $\varphi:\cC\to \cC'.$ Let $\Delta:=\max_{x\in \cC}\mathrm{wt}(\mathbf{x}-\varphi(\mathbf{x})).$ Thus $\Delta$ means that every codeword $\mathbf{x}\in \cC$ differs from its image $\varphi(\mathbf{x})\in \cC'$ in at most $\Delta$ coordinates.

Lemma~\ref{lem:distinct-rs} shows that $\Delta$ is small when $q$ is a sufficiently large constant multiple of $n$. Indeed, $n-q(1-e^{-n/q})=O(n^2/q)$ and hence $n-q(1-e^{-n/q})(1-\delta)\le \delta n+O(n^2/q).$ Therefore, for every fixed $\eta>0$, by first choosing $\delta>0$ sufficiently small and then taking $q\ge C_\eta n$ with $C_\eta$ sufficiently large, we have $\Delta\le \eta n$ with probability at least $1-\exp(-\Omega_\eta(n))$.

The meaning of this bound is simple. Suppose a word $\mathbf{y}\in\mathbb F_q^n$ is $E$-close to a codeword $\mathbf{c}'\in \cC'$. Since $\varphi$ is a bijection, we can write $\mathbf{c}'=\varphi(\mathbf{c})$ for a unique $\mathbf{c}\in \cC$. Because $\mathbf{c}$ and $\varphi(\mathbf{c})$ differ in at most $\Delta$ coordinates, $d(\mathbf{y},\mathbf{c})\le d(\mathbf{y},\mathbf{c}')+d(\mathbf{c}',\mathbf{c})\le E+\Delta.$ Thus every $E$-close codeword explanation over $\cC'$ becomes an $(E+\Delta)$-close codeword explanation over $\cC$.

Consider first an affine $m$-space instance. After pulling all nearby codewords back from $\cC'$ to $\cC$, we may apply Theorem~\ref{thm:iid-rs-ca-main} to the i.i.d. code $\cC$. The conclusion gives codewords $\mathbf{c}_0,\mathbf{c}_1,\ldots,\mathbf{c}_m\in \cC$ such that the affine code space $\mathbf{c}_0+\mathrm{span}(\mathbf{c}_1,\ldots,\mathbf{c}_m)\subseteq \cC$ agrees with the received affine space outside a small set of coordinates.

For degree-$\ell$ curves, the same statement gives codewords $\mathbf{c}_0,\mathbf{c}_1,\ldots,\mathbf{c}_\ell\in \cC$ such that the code curve $\mathbf{c}_0+\sum_{i=1}^{\ell}\alpha^i \mathbf{c}_i$ has small disagreement support.

We then map these codewords forward to $\cC'$. In the affine-space case, this gives $\varphi(\mathbf{c}_0),\varphi(\mathbf{c}_1),\ldots,\varphi(\mathbf{c}_m)$ and in the curve case, this gives $\varphi(\mathbf{c}_0),\varphi(\mathbf{c}_1),\ldots,\varphi(\mathbf{c}_\ell).$ Since $\varphi$ is linear, the affine-space structure and the degree-$\ell$ curve structure are preserved. The only possible loss is in the support size: each coefficient codeword $\mathbf{c}_i$ may change on at most $\Delta$ coordinates when replaced by $\varphi(\mathbf{c}_i)$. Therefore the final disagreement support increases by at most $a\Delta,$ where $a=m+1$ for affine $m$-spaces and $a=\ell+1$ for degree-$\ell$ curves.

We now make the fixed-slack transfer precise. Suppose the desired input radius is $E=\lfloor \rho n\rfloor$ for $\rho\le 1-R-\varepsilon,$ and the desired output radius is $E^+=E+\lceil \alpha n\rceil$ for some fixed $\alpha>0$. We apply Theorem~\ref{thm:iid-rs-ca-main} to the i.i.d. code $\cC$ with parameter $\varepsilon/2$. Let $d':=\left\lfloor (1-R-\varepsilon/8)n\right\rfloor$ be the corresponding distance parameter. Before using Lemma~\ref{lem:distinct-rs}, we take a union bound over all $O(n^2)$ possible integer radius pairs $0\le E_0\le E_1<d'$ with $E_0\le (1-R-\varepsilon/2)n.$ Thus, with probability $1-\exp(-\Omega(n))$, Theorem~\ref{thm:iid-rs-ca-main} holds simultaneously for all such radius pairs.

Choose $q/n$ large enough so that, with high probability, $\Delta\le\min\left\{\frac{\varepsilon n}{4},\frac{\alpha n}{2(a+1)}\right\}.$ On this event, define $\widetilde E:=E+\Delta$ and $\widetilde E^+:=\min\{E^+-a\Delta,\ d'-1\}.$ Then, for all sufficiently large $n$, $\widetilde E\le (1-R-\varepsilon/2)n$ and $\widetilde E\le \widetilde E^+<d'.$ Indeed,
\[
E+\Delta\le (1-R-\varepsilon)n+\frac{\varepsilon n}{4}\le (1-R-\varepsilon/2)n.
\]
Also, the choice $\Delta\le \frac{\alpha n}{2(a+1)}$ implies, up to harmless integer rounding, that $E^+-a\Delta\ge E+\Delta.$ Hence $\widetilde E\le \widetilde E^+$.

Now apply Theorem~\ref{thm:iid-rs-ca-main} to $\cC$ with input radius $\widetilde E$ and output radius $\widetilde E^+$. Mapping the resulting codewords in $\cC$ forward by $\varphi$ gives codewords in $\cC'$. The affine-space or curve structure remains valid in $\cC'$, and the output support increases by at most $a\Delta$. Therefore the final output radius is at most $\widetilde E^+ + a\Delta \le E^+.$

This proves that every fixed-slack correlated-agreement consequence of Theorem~\ref{thm:iid-rs-ca-main} transfers from the i.i.d. random RS model to the distinct-evaluation random RS model, after increasing the constant in the field-size condition. The exact no-slack case does not follow from this comparison, because the final map from $\cC$ to $\cC'$ may increase the support by $a\Delta$.

\bibliographystyle{alpha}
\bibliography{refs}

\appendix
\section{Proofs of Lemma~\ref{lem:line-to-space-gap}, Lemma~\ref{lem:correlated-line-to-space} and Theorem~\ref{thm:no-slack-obstruction}}\label{appendix:A}
\begin{proof}[Proof of Lemma~\ref{lem:line-to-space-gap}]
Let $U\subseteq \mathbb{F}_q^r$ be an affine subspace such that $\frac{|U\cap B_E|}{|U|}>\tau\frac{q}{q-1}$. We show that $U\subseteq H_{E^+}$. If $\dim(U)=0$, then $U$ is a singleton and the claim is immediate. Hence assume
$\dim(U)\ge 1$. Let $\mathbf{u}_0\in U$ be arbitrary. We claim that $\mathbf{u}_0\in H_{E^+}$.

If already $\mathbf{u}_0\in H_{E^+}$, there is nothing to prove. So assume $\mathbf{u}_0\notin H_{E^+}$. Since $E\le E^+$, this also implies $\mathbf{u}_0\notin H_E$. Now choose $\mathbf{u}'$ uniformly from $U\setminus\{\mathbf{u}_0\}$, and consider the affine line $L(\mathbf{u}_0,\mathbf{u}') := \mathbf{u}_0+\mathbb{F}_q(\mathbf{u}'-\mathbf{u}_0)\subseteq U$. For each fixed $\alpha\in \mathbb{F}_q^\ast$, the map $\mathbf{u}' \mapsto \mathbf{u}_0+\alpha(\mathbf{u}'-\mathbf{u}_0)$ is a bijection of $U\setminus\{\mathbf{u}_0\}$ onto itself. Therefore,
\[
\mathbb{E}_{\mathbf{u}'\in U\setminus\{u_0\}}
\Pr_{\alpha\in \mathbb{F}_q^\ast}
\bigl[\mathbf{u}_0+\alpha(\mathbf{u}'-\mathbf{u}_0)\in H_E\bigr]
=
\Pr_{\mathbf{u}\in U\setminus\{\mathbf{u}_0\}}[\mathbf{u}\in H_E].
\]
Since $\mathbf{u}_0\notin H_E$, the right-hand side equals $\frac{|U\cap B_E|}{|U|-1}$. Using $\frac{|U\cap B_E|}{|U|}>\tau q/(q-1)$, we obtain $\frac{|U\cap B_E|}{|U|-1}> \frac{\tau q}{q-1}\cdot \frac{|U|}{|U|-1} > \frac{\tau q}{q-1}$. Hence, there exists some $\mathbf{u}'\in U\setminus\{\mathbf{u}_0\}$ such that
\[
\Pr_{\alpha\in \mathbb{F}_q^\ast}
\bigl[\mathbf{u}_0+\alpha(\mathbf{u}'-\mathbf{u}_0)\in H_E\bigr]
>
\frac{\tau q}{q-1}.
\]
Let $L:=\{\mathbf{u}_0+\alpha(\mathbf{u}'-\mathbf{u}_0):\alpha\in \F_q\}.$ As $\mathbf{u}_0\notin H_E$, the elements of $L\cap H_E$ are all among the $q-1$ nonzero-$\alpha$ elements on $L$. Therefore,
\[
\frac{|L\cap H_E|}{q}
=
\frac{q-1}{q}
\Pr_{\alpha\in \mathbb{F}_q^\ast}
\bigl[\mathbf{u}_0+\alpha(\mathbf{u}'-\mathbf{u}_0)\in H_E\bigr]
>
\tau.
\]
By the assumed line-level proximity-gap property, it follows that $L\subseteq H_{E^+}$. In particular, $\mathbf{u}_0\in H_{E^+}$. Since $\mathbf{u}_0\in U$ was arbitrary, we conclude that $U\subseteq H_{E^+}$, as required.
\end{proof}

\begin{proof}[Proof of Lemma~\ref{lem:correlated-line-to-space}]
Fix an integer $m\ge 2$, and let $U=\mathbf{u}_0+\mathrm{span}(\mathbf{u}_1,\dots,\mathbf{u}_m)\subseteq \F_q^n$ be an affine space. Write $S=U_H=\mathbf{s}_0+\mathrm{span}(\mathbf{s}_1,\dots,\mathbf{s}_m)\subseteq \F_q^r$ where $\mathbf{s}_i:=H\mathbf{u}_i$ for $0\leq i\leq m$. Assume that $\mu_S(E)>\tau\frac{q}{q-1}.$ We must show that there exists a matrix $X=[\mathbf{x}_0|\cdots|\mathbf{x}_m]\in \F_q^{n\times (m+1)}$ such that $H\mathbf{x}_i=\mathbf{s}_i\,(0\leq i\leq m)$ and $\mathrm{rowwt}(X)\le E$.

Since line correlated agreement implies line proximity gap, Lemma~\ref{lem:line-to-space-gap} yields $S\subseteq H_E$. Hence, by Lemma~\ref{lem:syndrome-characterization}, $d(\mathbf{u},\cC)\le E$ for all $\mathbf{u}\in U$. Let $E^\ast:=\max_{\mathbf{u}\in U} d(\mathbf{u},\cC)$, and choose $\mathbf{u}^\ast\in U$ such that $d(\mathbf{u}^\ast,\cC)=E^\ast.$ Then $E^\ast\le E$. Let $\{\mathbf{c}_1,\dots,\mathbf{c}_L\}:=\{\mathbf{c}\in \cC:\ d(\mathbf{u}^\ast,\mathbf{c})=E^\ast\}$. By assumption, $L<q$.

For each $j\in[L]$, define $T_j:=\{i\in[n]: (\mathbf{c}_j)_i=\mathbf{u}^\ast_i\}$, so that $|T_j|=n-E^\ast$. Let $U_j:=\{\mathbf{u}\in U:\ \pi_{T_j}(\mathbf{u})\in \pi_{T_j}(\cC)\}$, where $\pi_{T_j}$ denotes the coordinate projection onto $T_j$. Each $U_j$ is an affine subspace of $U$. We claim that $U=\bigcup_{j=1}^L U_j$.

Indeed, fix any $\mathbf{u}\in U$, and consider the affine line $\ell:=\mathbf{u}^\ast+\F_q(\mathbf{u}-\mathbf{u}^\ast)\subseteq U$. Every element of $\ell$ has distance at most $E^\ast$ from $\cC$, hence the corresponding syndrome line has $\mu(E^\ast)=1>\tau$. By the assumed $(1,E^\ast,E^\ast,\tau)$-line correlated agreement property, there exists a code line $V=\mathbf{c}_0+\F_q \mathbf{c}_1\subseteq \cC$ such that
\[
\left|
\{\,i\in[n]: (\mathbf{c}_{0})_i\neq \mathbf{u}^\ast_i \text{ or } (\mathbf{c}_0+\mathbf{c}_1)_i\neq \mathbf{u}_i\,\}
\right|
\le E^\ast.
\]
Let $T:=\{i\in[n]: (\mathbf{c}_{0})_i=\mathbf{u}^\ast_i \text{ and } (\mathbf{c}_0+\mathbf{c}_1)_i=\mathbf{u}_i\}$. Then $|T|\ge n-E^\ast.$ Since $d(\mathbf{u}^\ast,\mathbf{c}_0)\le E^\ast=d(\mathbf{u}^\ast,\cC)$, we have $d(\mathbf{u}^\ast,\mathbf{c}_0)=E^\ast$, so $\mathbf{c}_0=\mathbf{c}_j$ for some $j\in[L]$. Hence $T\subseteq T_j$, and because $|T_j|=n-E^\ast\le |T|$, we get $T=T_j$. Therefore
\[
\pi_{T_j}(\mathbf{u})=\pi_{T_j}(\mathbf{c}_0+\mathbf{c}_1)\in \pi_{T_j}(\cC),
\]
and $\mathbf{u}\in U_j$. This proves the claim.

Now $U$ is a union of $L<q$ affine subspaces $U_1,\dots,U_L$. Since every proper affine
subspace of $U$ has size at most $|U|/q$, one of them must equal $U$. Thus there exists
$j^\ast\in[L]$ such that $U_{j^\ast}=U$. Let $T:=T_{j^\ast}$. Then for every $\mathbf{u}\in U$, $\pi_T(\mathbf{u})\in \pi_T(\cC).$ In particular, we may choose codewords $\mathbf{c}_0,\mathbf{c}_1,\dots,\mathbf{c}_m\in \cC$ such that $\pi_T(\mathbf{c}_i)=\pi_T(\mathbf{u}_i)$ for $0\leq i\leq m$.

Define $\mathbf{x}_i:=\mathbf{u}_i-\mathbf{c}_i$ where $0\leq i\leq m$ and let $X=[\mathbf{x}_0|\cdots|\mathbf{x}_m]$. Since each $\mathbf{c}_i\in \cC$, we have $H\mathbf{x}_i=\mathbf{s}_i$ for $0\leq i\leq m$. Moreover, $\pi_T(\mathbf{x}_i)=\mathbf{0}$ for every $0\le i\le m$, so every row of $X$ indexed by $T$ is zero. Hence, $\mathrm{rowwt}(X)\le n-|T|=E^\ast\le E$. This proves the claim.
\end{proof}

\begin{proof}[Proof of Theorem~\ref{thm:no-slack-obstruction}]
Choose pairwise distinct coordinates $i_1,\dots,i_K \in [n]$ and further choose pairwise distinct coordinates $j_1,\dots,j_{E+1-K}\in [n]$ disjoint from $\{i_1,\dots,i_K\}$.

Define $\mathbf{x}_2\in\mathbb F_q^n$ by
\[
(\mathbf{x}_2)_{i_t}=1 \quad (t\in[K]),\qquad (\mathbf{x}_2)_u=0 \quad \text{otherwise},
\]
and define $\mathbf{x}_1\in\mathbb F_q^n$ by
\[
(\mathbf{x}_1)_{i_t}=-\alpha_t \quad (t\in[K]),\qquad
(\mathbf{x}_1)_{j_s}=1 \quad (s\in[E+1-K]),\qquad
(\mathbf{x}_1)_u=0 \quad \text{otherwise}.
\]

Fix $\alpha\in\mathbb F_q$. Then for each $s\in[E+1-K]$ we have $x(\alpha)_{j_s}=1$, so these coordinates contribute exactly $E+1-K$ nonzero entries. On the other hand, for each $t\in[K]$, $(\mathbf{x}_1+\alpha\mathbf{x}_2)_{i_t}=-\alpha_t+\alpha$ which vanishes if and only if $\alpha=\alpha_t$.

Hence, if $\alpha=\alpha_t$ for some $t\in[K]$, then exactly one among the coordinates
$i_1,\dots,i_K$ vanishes, while the other $K-1$ remain nonzero. Therefore,
\[
\wt\bigl(\mathbf{x}_1+\alpha_t\mathbf{x}_2\bigr)=(E+1-K)+(K-1)=E.
\]
If instead $\alpha\notin\{\alpha_1,\dots,\alpha_K\}$, then none of the coordinates
$i_1,\dots,i_K$ vanishes, and thus $\wt\bigl(\mathbf{x}_1+\alpha\mathbf{x}_2\bigr)=(E+1-K)+K=E+1$. This proves the first part.

Now fix a linear code $\cC$ with parity check matrix $H$ and $d(\cC)\ge 2E+2$, and let $\mathbf{s}_0:=H\mathbf{x}_1,\,\mathbf{s}_1:=H\mathbf{x}_2,\, L=\{\mathbf{s}_0+\alpha \mathbf{s}_1:\alpha\in\mathbb F_q\}$. For each $j\in[K]$, since $\wt(\mathbf{x}_1+\alpha_j\mathbf{x}_2)=E$, we have $\mathbf{s}_0+\alpha_j \mathbf{s}_1=H(\mathbf{x}_1+\alpha_j\mathbf{x}_2)\in H_E$.

We claim that these $K$ syndromes are pairwise distinct. Indeed, if $H(\mathbf{x}_1+\alpha_i\mathbf{x}_2)=H(\mathbf{x}_1+\alpha_j\mathbf{x}_2)$, then $(\alpha_i-\alpha_j)\mathbf{x}_2 \in \cC$. Since $\alpha_i\neq \alpha_j$ and $\wt(\mathbf{x}_2)=K$, this gives a nonzero codeword of weight $K<d(\cC)$ which implies a contradiction. Hence, $|L\cap H_E|\ge K$.

It remains to prove that $L\not\subseteq H_E$. Since $K<q$, choose $\alpha_\star \in \mathbb F_q\setminus\{\alpha_1,\dots,\alpha_K\}$. By the first part, $\wt\bigl(\mathbf{x}_1+\alpha_\star\mathbf{x}_2)=E+1$. Suppose for contradiction that $\mathbf{s}_0+\alpha_\star \mathbf{s}_1 = H(\mathbf{x}_1+\alpha_\star\mathbf{x}_2)\in H_E$. Then there exists $\mathbf{z}\in B_E$ such that $H\mathbf{z}=H(\mathbf{x}_1+\alpha_\star\mathbf{x}_2)$. Thus, $\mathbf{0}\neq \mathbf{c}:=(\mathbf{x}_1+\alpha_\star\mathbf{x}_2)-\mathbf{z} \in \cC.$

However,
\[
\wt(\mathbf{c})\le \wt\bigl(\mathbf{x}_1+\alpha_\star\mathbf{x}_2\bigr)+\wt(\mathbf{z})\le (E+1)+E=2E+1,
\]
contradicting $d(\cC)\ge 2E+2$. Therefore, $L\not\subseteq H_E$ combining this with $|L\cap H_E|\ge K$ proves the theorem.
\end{proof}

\section{Auxiliary lemmas for probabilistic analysis} \label{appendix:B}
\begin{lemma}\label{lem:random-image}
Let $H\in\mathbb{F}_q^{r\times n}$ be uniformly random matrix. If $X\in\mathbb{F}_q^{n\times t}$
has full column rank $t$, then for every fixed $Y\in\mathbb{F}_q^{r\times t}$, we have $\Pr_H[HX=Y]=q^{-rt}$.
\end{lemma}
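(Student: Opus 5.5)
The plan is to reduce the statement to a count of solutions of a linear system in the entries of $H$. Write $H$ as having rows $\mathbf{h}_1^\top,\ldots,\mathbf{h}_r^\top$, each $\mathbf{h}_i\in\F_q^n$. These rows are independent and each is uniform on $\F_q^n$. The matrix equation $HX=Y$ splits row by row into the $r$ separate conditions
\[
X^\top\mathbf{h}_i=\mathbf{y}_i,\qquad i\in[r],
\]
where $\mathbf{y}_i^\top$ denotes the $i$-th row of $Y$. By independence of the rows,
\[
\Pr_H[HX=Y]=\prod_{i=1}^r\Pr_{\mathbf{h}_i}\bigl[X^\top\mathbf{h}_i=\mathbf{y}_i\bigr],
\]
so it suffices to show that each factor equals $q^{-t}$.

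For a single factor, consider the linear map $\phi:\F_q^n\to\F_q^t$ given by $\mathbf{h}\mapsto X^\top\mathbf{h}$. Since $X$ has full column rank $t$, the matrix $X^\top$ has rank $t$, so $\phi$ is surjective. Its kernel therefore has dimension $n-t$. Each fiber $\phi^{-1}(\mathbf{y}_i)$ is a coset of this kernel, and so contains exactly $q^{n-t}$ vectors. Because $\mathbf{h}_i$ is uniform on $\F_q^n$, the probability that it lands in this fiber is $q^{n-t}/q^n=q^{-t}$. Multiplying over the $r$ rows gives $q^{-rt}$.

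An alternative route is to change coordinates so that $X$ becomes $\begin{bmatrix} I_t\\ 0\end{bmatrix}$ via an invertible $P$. Then $HP^{-1}$ is again uniform, and $HX=Y$ amounts to prescribing its first $t$ columns, an event of probability $q^{-rt}$. There is no real obstacle. The only point to state explicitly is that full column rank is exactly what makes $\phi$ surjective; without it the fibers over $\mathbf{y}_i$ would be empty or larger, and the probability would not be exactly $q^{-rt}$.
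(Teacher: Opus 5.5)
Your proof is correct and follows essentially the same route as the paper's: split $HX=Y$ row by row, use that full column rank makes $\mathbf{h}\mapsto X^\top\mathbf{h}$ surjective, so each row condition has probability $q^{-t}$, then multiply over the $r$ independent rows. Your explicit fiber count ($q^{n-t}$ preimages per target) just spells out the step the paper compresses into ``$\mathbf{h}_iX$ is uniformly distributed on $\mathbb{F}_q^t$''.
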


\begin{proof}
Write the rows of $H$ as $h_1,\dots,h_r\in\mathbb{F}_q^n$. Then
\[
HX=
\begin{bmatrix}
\mathbf{h}_1X\\
\vdots\\
\mathbf{h}_rX
\end{bmatrix}.
\]
Since $X$ has full column rank, the map $\mathbf{z}\mapsto \mathbf{z}X,\,\mathbf{z}\in\mathbb{F}_q^n,$ is surjective onto $\mathbb{F}_q^t$. Hence, if $\mathbf{h}_i$ is uniformly distributed on
$\mathbb{F}_q^n$, then $\mathbf{h}_iX$ is uniformly distributed on $\mathbb{F}_q^t$. Since the rows
$\mathbf{h}_1,\dots,\mathbf{h}_r$ are independent, the random vectors $\mathbf{h}_1X,\dots,\mathbf{h}_rX$ are independent as well.
Therefore,
\[
\Pr_H[HX=Y]
=
\prod_{i=1}^r \Pr[\mathbf{h}_iX = Y_{i,*}]
=
(q^{-t})^r
=
q^{-rt}.
\]
\end{proof}

\begin{lemma}\label{lem:H_q-bound}
For every $0<\rho<1-1/q$,
\[
H_q(\rho)
=
\rho+\frac{H_2(\rho)+\rho\log_2(1-1/q)}{\log_2 q}
\le
\rho+\frac{H_2(\rho)}{\log_2 q},
\]
where $H_q(x):=\rho\log_x(x-1)-x\log_q x-(1-x)\log_q(1-x)$ denotes the $q$-ary entropy function. In particular, if $q\ge \left(\frac{2}{\varepsilon}\right)^{1/\varepsilon}$, then
\[
H_q(\rho)\le
\rho+\frac{\varepsilon}{1+\log_2(1/\varepsilon)}\,H_2(\rho)
\le
\rho+\frac{\varepsilon}{1+\log_2(1/\varepsilon)}.
\]
\end{lemma}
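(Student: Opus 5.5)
The identity follows by expanding $H_q$ in base $2$ and regrouping its terms; the bound is then a short estimate on the correction term.

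\emph{Step 1 (identity).} With the standard definition
\[
H_q(\rho)=\rho\log_q(q-1)-\rho\log_q\rho-(1-\rho)\log_q(1-\rho),
\]
write $\log_q(\cdot)=\log_2(\cdot)/\log_2 q$. The last two terms together equal $H_2(\rho)/\log_2 q$. For the first term use $\log_2(q-1)=\log_2 q+\log_2(1-1/q)$, which gives
\[
\rho\log_q(q-1)=\rho+\frac{\rho\log_2(1-1/q)}{\log_2 q}.
\]
Adding the pieces yields the claimed equality.

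\emph{Step 2 (first bound).} Since $\log_2(1-1/q)<0$ and $\rho>0$, dropping the term $\rho\log_2(1-1/q)/\log_2 q$ gives
\[
H_q(\rho)\le\rho+\frac{H_2(\rho)}{\log_2 q}.
\]

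\emph{Step 3 (second bound).} Take $q\ge(2/\varepsilon)^{1/\varepsilon}$. Then
\[
\log_2 q\ge\frac{1}{\varepsilon}\log_2\frac{2}{\varepsilon}=\frac{1+\log_2(1/\varepsilon)}{\varepsilon},
\]
so
\[
\frac{1}{\log_2 q}\le\frac{\varepsilon}{1+\log_2(1/\varepsilon)}.
\]
Substituting into Step 2 gives the middle inequality. The final inequality uses $H_2(\rho)\le 1$.

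The only delicate point is the hypothesis $\rho<1-1/q$, which ensures $H_q$ is in its standard range. No step is a real obstacle.
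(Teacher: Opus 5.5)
Your proof is correct and matches the paper's argument step for step: you rewrite $\log_2(q-1)=\log_2 q+\log_2(1-1/q)$ to get the identity, drop the negative correction term, and then use $\log_2 q\ge (1+\log_2(1/\varepsilon))/\varepsilon$ together with $0\le H_2(\rho)\le 1$. You also rightly read the garbled definition of $H_q$ in the statement as the standard $q$-ary entropy, which is what the paper's proof uses as well.
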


\begin{proof}
Using $\log_2(q-1)=\log_2 q+\log_2(1-1/q)$, we obtain
\[
H_q(\rho)
=
\rho\frac{\log_2(q-1)}{\log_2 q}
+\frac{H_2(\rho)}{\log_2 q}
=
\rho+\frac{H_2(\rho)+\rho\log_2(1-1/q)}{\log_2 q}.
\]
Since $\log_2(1-1/q)<0$, this yields $H_q(\rho)\le \rho+\frac{H_2(\rho)}{\log_2 q}$. Now if $q\ge (2/\varepsilon)^{1/\varepsilon}$, then $\frac{1}{\log_2 q}\le \frac{\varepsilon}{1+\log_2(1/\varepsilon)}$. Substituting this into the preceding bound proves the claim.
\end{proof}

\begin{lemma}\label{lem:ball-volume-bound}
For every integer $E\ge 0$, $|B_E|=\sum_{i=0}^E \binom{n}{i}(q-1)^i.$ Moreover:
\begin{enumerate}
    \item if $q$ is fixed and $E=\lfloor \rho n\rfloor$ with $0<\rho<1-1/q$, then $|B_E| = q^{H_q(\rho)n + o(n)}\leq q^{(\rho+\log_q 2)n+o(n)}$. In particular, if $q\geq (2/\varepsilon)^{1/\varepsilon}$, then $|B_E|\leq q^{(\rho+\frac{\varepsilon}{1+\log_2 1/\varepsilon})n+o(n)}$.
    \item if $q=\Theta(n)$ and $E=\Theta(n)$, then $|B_E|\le q^{E+o(n)}.$
\end{enumerate}
\end{lemma}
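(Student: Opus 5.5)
The exact count $|B_E|=\sum_{i=0}^E\binom{n}{i}(q-1)^i$ is immediate: a vector of weight $i$ is determined by its support, which can be chosen in $\binom{n}{i}$ ways, together with a nonzero value at each of the $i$ support positions. Both asymptotic claims will be obtained by bounding this sum by $(E+1)$ times its largest term. The factor $E+1\le n+1$ contributes only $q^{o(n)}$, so it suffices to control $\max_{i\le E}\binom{n}{i}(q-1)^i$.

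\emph{Part 1 ($q$ fixed).} For $i\le \rho n$ with $\rho<1-1/q$, the ratio of consecutive terms is
\[
\frac{\binom{n}{i+1}(q-1)^{i+1}}{\binom{n}{i}(q-1)^i}=\frac{(n-i)(q-1)}{i+1},
\]
and this exceeds $1$ exactly when $i+1<(1-1/q)(n+1)$ roughly. Hence the terms increase up to about $(1-1/q)n$, so on the range $i\le E$ the largest term is the last one, $\binom{n}{E}(q-1)^E$. Stirling's formula, or the standard estimate $\binom{n}{\rho n}=2^{H_2(\rho)n+o(n)}$, gives
\[
\binom{n}{E}(q-1)^E=q^{H_q(\rho)n+o(n)},
\]
after rewriting base-$2$ entropy in base $q$. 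Together with the factor $E+1$ this proves $|B_E|=q^{H_q(\rho)n+o(n)}$; the matching lower bound comes from the single term $i=E$. For the upper estimates, apply Lemma~\ref{lem:H_q-bound}: $H_q(\rho)\le\rho+H_2(\rho)/\log_2 q\le \rho+\log_q 2$, since $H_2\le1$. When $q\ge(2/\varepsilon)^{1/\varepsilon}$, the second part of Lemma~\ref{lem:H_q-bound} gives $H_q(\rho)\le\rho+\varepsilon/(1+\log_2(1/\varepsilon))$.

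\emph{Part 2 ($q=\Theta(n)$, $E=\Theta(n)$).} Bound crudely: $|B_E|\le (E+1)\binom{n}{\le E}q^E\le (n+1)2^n q^E$. Since $\log_q 2^n=n/\log_2 q=O(n/\log n)=o(n)$ when $q=\Theta(n)$, we get $|B_E|\le q^{E+o(n)}$.

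The only delicate point is the monotonicity argument in Part 1. One needs $E$ to lie below the peak $(1-1/q)n$, and this is guaranteed by $\rho<1-1/q$ up to $O(1)$ rounding, which is absorbed into $o(n)$. The remaining steps are routine entropy and Stirling calculations.
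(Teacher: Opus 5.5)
Your proof is correct. The paper states this lemma in Appendix~B without proof, treating it as the standard Hamming-ball volume estimate, and your argument is the routine justification one would supply.

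The steps you use all hold:
\begin{itemize}
\item the exact count;
\item domination of the sum by $(E+1)$ times its largest term, located via the consecutive-term ratio;
\item the entropy estimate for $\binom{n}{E}(q-1)^E$, combined with Lemma~\ref{lem:H_q-bound} for the two upper bounds;
\item in Part~2, the crude bound $|B_E|\le 2^n q^E$ together with $n/\log_2 q=o(n)$.
\end{itemize}

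Two small remarks. First, the ratio condition $i+1<(1-1/q)(n+1)$ that you derive is exact. For every $i\le E-1$ it follows directly from $E\le \rho n<(1-1/q)n$, so no rounding slack is needed. Second, Part~2 never uses the hypothesis $E=\Theta(n)$; it only needs $\log_2 q\to\infty$. The extra factor $E+1$ you carry there is also redundant, since $\sum_{i\le E}\binom{n}{i}\le 2^n$ already.
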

\section{Proof of~\cref{section:curve-correlated-agreement}}\label{appendix:C}
\begin{proof}[Proof of Lemma~\ref{lem:curve-ca-equivalence}]
Assume (1). Define $\mathbf{x}_j := \mathbf{u}_j-\mathbf{c}_j,\, 0\le j\le \ell.$ Since $\mathbf{c}_j\in \cC$, we have $H\mathbf{x}_j = H(\mathbf{u}_j-\mathbf{c}_j)=H\mathbf{u}_j=\mathbf{s}_j$ for every $0\le j\le \ell$. Moreover,
\[
\mathrm{rowsupp}(X)
= \bigcup_{j=0}^\ell \mathrm{supp}(\mathbf{x}_j)= \bigcup_{j=0}^\ell \mathrm{supp}(\mathbf{c}_j-\mathbf{u}_j) 
\]
and $\mathrm{\mathrm{rowwt}}(X)\le E^+$.

Conversely, assume (2), and define $\mathbf{c}_j := \mathbf{u}_j-\mathbf{x}_j,\,0\le j\le \ell$. Then $H\mathbf{c}_j = H\mathbf{u}_j-H\mathbf{x}_j = \mathbf{s}_j-\mathbf{s}_j = 0$, so each $\mathbf{c}_j \in \cC$.
Also,
\[
\bigcup_{j=0}^\ell \mathrm{supp}(\mathbf{c}_j-\mathbf{u}_j)= \bigcup_{j=0}^\ell \mathrm{supp}(\mathbf{x}_j)=\mathrm{rowsupp}(X),
\]
hence the disagreement support has size at most $E^+$. Finally, for every $\alpha\in \F_q$,
\[
\left(
\mathbf{u}_0+\sum_{j=1}^{\ell}\alpha^j \mathbf{u}_j
\right)
-
\left(
\mathbf{c}_0+\sum_{j=1}^{\ell}\alpha^j \mathbf{c}_j
\right)
=
\mathbf{x}_0+\sum_{j=1}^{\ell}\alpha^j \mathbf{x}_j.
\]
The support of the right-hand side is contained in $\mathrm{rowsupp}(X)$, so its weight is at most
$E^+$. This proves the claim.
\end{proof}

\begin{proof}[Proof of Lemma~\ref{lem:curve-rank-reduction}]
For $0\le i\le \ell$, define $\mathbf{u}_i := ((\alpha_1)_i,\dots,(\alpha_K)_i)\in \F_q^K,$ and let $Y :=
\Bigl[
\mathbf{s}_0+\sum_{i=1}^{\ell}(\alpha_1)_i \mathbf{s}_i
\ \big|\
\cdots
\ \big|\
\mathbf{s}_0+\sum_{i=1}^{\ell}(\alpha_K)_i \mathbf{s}_i
\Bigr]
\in \F_q^{r\times K}$. Since $\alpha_1,\dots,\alpha_K$ are pairwise distinct and $K\ge \ell+1$, the vectors $\mathbf{u}_0,\mathbf{u}_1,\dots,\mathbf{u}_\ell$ are linearly independent by the Vandermonde determinant. Moreover, $Y=\sum_{i=0}^{\ell}\mathbf{s}_i \mathbf{u}_i^\top,$ so $\mathrm{Row}(Y)\subseteq \mathrm{span}\{\mathbf{u}_0,\ldots,\mathbf{u}_\ell\}$ and $\mathrm{rank}(Y)=h$. Choose a basis $\mathbf v_1,\ldots,\mathbf v_h$ of $\mathrm{Row}(Y)$, and extend it to a basis $\mathbf v_1,\ldots,\mathbf v_h,\mathbf w_{h+1},\ldots,\mathbf w_t$ of $\mathrm{Row}(X)$.
Write $Y=\sum_{i=1}^h \mathbf b_i\mathbf v_i^\top$ for some $\mathbf b_1,\ldots,\mathbf b_h\in \mathbb F_q^r$. Then there exist vectors
$\mathbf a_1,\ldots,\mathbf a_h,\mathbf c_{h+1},\ldots,\mathbf c_t\in \mathbb F_q^n$ such that $X=\sum_{i=1}^h \mathbf a_i\mathbf v_i^\top+\sum_{j=h+1}^t \mathbf c_j\mathbf w_j^\top$. Multiplying by $H$ and using $HX=Y$, we get $\sum_{i=1}^h (H\mathbf a_i-\mathbf b_i)\mathbf v_i^\top+
\sum_{j=h+1}^t H\mathbf c_j\mathbf w_j^\top=0$. By linear independence, $H\mathbf c_j=0$ for $h+1\le j\le t$. Hence each $\mathbf c_j$ is a codeword.

If $\mathbf{c}_t=\mathbf{0}$, then already $\rank(X)\le t-1$, and there is nothing to prove.
So assume $\mathbf{c}_t\neq \mathbf{0}$, and let $T:=\supp(\mathbf{c}_t)$. Since $\mathbf{c}_t\in \cC\setminus\{0\}$ and $d(\cC)\ge d$, we have $|T|=\wt(\mathbf{c}_t)\ge d.$ Each column $\mathbf{x}_j$ has at least $|T|-E$ zero coordinates inside $T$. By the pigeonhole principle, there exists $i_0\in T$ such that $|\{j\in[K]:(\mathbf x_j)_{i_0}=0\}|\ge K\cdot \frac{|T|-E}{|T|}
\ge K\cdot \frac{d-E}{d}=K\gamma$. Let $J:=\{j\in[K]:(\mathbf x_j)_{i_0}=0\}$. Then $|J|\ge K\gamma$.

For each $j\in J$, the $j$-th column of $X$ equals $\mathbf x_j=\sum_{i=1}^h(\mathbf v_i)_j\mathbf a_i+\sum_{p=h+1}^{t-1}(\mathbf w_p)_j\mathbf c_p+(\mathbf w_t)_j\mathbf c_t$. Since $(\mathbf x_j)_{i_0}=0$ and $(\mathbf c_t)_{i_0}\ne0$, we may solve for $(\mathbf w_t)_j$ and substitute back as in Lemma~\ref{lem:rank-reduction-m}. Thus $X_J$ can be expressed using $(\mathbf v_1)_J,\ldots,(\mathbf v_h)_J, (\mathbf w_{h+1})_J,\ldots,(\mathbf w_{t-1})_J$. Hence $\mathrm{rank}(X_J)\le t-1$. Moreover, $X_J$ remains an $E$-witness matrix covering $|J|$ elements of the same syndrome curve.
\end{proof}

\begin{proof}[Proof of~\cref{thm:curve-witness-rank}]
Assume for contradiction that $K\gamma^{t-h}>\ell B_{E,E^+}$. Since $B_{E,E^+}\ge 1$, this implies $K\gamma^{t-h}>\ell$. By Corollary~\ref{cor:curve-rank-reduction}, there exists a subset $J\subseteq[K]$ with $|J|\ge K\gamma^{t-h}$ and vectors $\mathbf a_0,\ldots,\mathbf a_\ell\in\mathbb F_q^n$ such that $\mathbf x_j=\mathbf a_0+\sum_{i=1}^{\ell}\alpha_j^i\mathbf a_i$ and $H\mathbf a_i=\mathbf s_i\,(0\le i\le \ell).$

Let $A=[\mathbf{a}_0|\cdots|\mathbf{a}_\ell]$. Since $S$ has no $(H,\ell,E,E^+)$-correlated agreement, Lemma~\ref{lem:curve-ca-equivalence} implies $\mathrm{rowwt}(A)>E^+$. Every surviving witness column has weight at most $E$, so $|J|\le\left|\left\{\alpha\in\mathbb F_q:\mathbf{a}_0+\sum_{i=1}^{\ell}\alpha^i \mathbf{a}_i\in B_E\right\}\right|$. By Lemma~\ref{lem:curve-ball-intersection}, $|J|\le \ell B_{E,E^+}$. Thus, $K\gamma^{t-h}\le |J|\le \ell B_{E,E^+}$, contradicting the assumption.
\end{proof}

\begin{proof}[Proof of Theorem~\ref{thm:curve-moment}]
Fix a degree-$\ell$ syndrome curve $S= \mathbf{s}_0+\sum_{i=1}^{\ell}\alpha^i \mathbf{s}_i : \alpha\in \F_q$ and an ordered $K$-tuple $(\alpha_1,\dots,\alpha_K)$ of pairwise distinct elements of $\F_q$. Let $h:=\dim_{\mathbb F_q}\mathrm{span}\{s_0,\ldots,s_\ell\}$. If $h=0$, then $S$ trivially has the required correlated agreement, so we may assume $1\le h\le \ell+1$. Consider the event
\[
\begin{aligned}
\mathcal{A}(S;\alpha_1,\dots,\alpha_K)
:=
&\left[
\mathbf{s}_0+\sum_{i=1}^{\ell}\alpha_j^i \mathbf{s}_i \in H_E
\text{ for all } j\in[K]
\right] \\
&\wedge
\left[
S\text{ has no }(H,\ell,E,E^+)\text{-correlated agreement}
\right]
\wedge
[d(\cC)\ge d].
\end{aligned}
\]

If this event occurs, then there exists an $E$-witness matrix $X=[\mathbf{x}_1|\cdots|\mathbf{x}_K]$ for $(S;\alpha_1,\dots,\alpha_K)$. Since $K > \ell\,\cdot B_{E,E^+}\,\gamma^{-s} \ge \ell$, we have $K\ge \ell+1$. Hence Theorem~\ref{thm:curve-witness-rank} applies and shows that every such witness matrix must satisfy $\rank(X)\ge h+s+1$.

Therefore we can choose a subset of $h+s+1$ columns that is linearly independent. There are at most $\binom{K}{h+s+1}|B_E|^{\,h+s+1}$ possible choices for these columns. By Lemma~\ref{lem:random-image}, the probability that a fixed linearly independent $(h+s+1)$-tuple maps to the prescribed syndrome columns is exactly $q^{-r(h+s+1)}$. Hence
\[
\Pr_{\cC}\!\bigl[\mathcal{A}(S;\alpha_1,\dots,\alpha_K)\bigr]
\le
\binom{K}{h+s+1}|B_E|^{\,h+s+1}q^{-r(h+s+1)}.
\]
Now fix $S$. If $\left|\{\alpha\in\mathbb F_q:S(\alpha)\in H_E\}\right|\ge K,\,S \text{ has no }(H,\ell,E,E^+)\text{-correlated agreement},\, d(\cC)\ge d$, then we can choose a $K$-element subset $\{\alpha_1,\dots,\alpha_K\}\subseteq \mathbb{F}_q$ such that $\mathbf{s}_0+\sum_{i=1}^{\ell}\alpha_j^i \mathbf{s}_i\in H_E$ for $j\in [K]$.

%For each such choice, after ordering these $K$ elements arbitrarily, the event $\mathcal{A}(S;\alpha_1,\dots,\alpha_K)$ occurs. Define the event $E$ is there exists degree-$\ell$ syndrome curve $S\subseteq \F_q^r$ such that
%\[
%\left[|\alpha\in \F_q:\mathbf{s}_0+\sum_{i=1}^{\ell}\alpha_j^i \mathbf{s}_i\in H_E|\geq K\right]\wedge \left[S \text{ has no }%(H,\ell,E,E^+)\text{-correlated agreement}\right]\wedge \left[d(\cC)\ge d\right]
%\]
Hence, by union bound over all $\binom{q}{K}$ possible $K$-subsets of $\mathbb{F}_q$ and all coefficient tuples $(\mathbf{s}_0,\dots,\mathbf{s}_\ell)\in(\mathbb{F}_q^r)^{\ell+1}$, the failure probability is at most
\[
\sum_{S,(\alpha_1,\ldots,\alpha_K)}\Pr_{\cC}\!\bigl[\mathcal{A}(S;\alpha_1,\dots,\alpha_K)\bigr]\leq \binom{q}{K}q^{-r(s+1)}\sum_{h=1}^{\ell+1}q^{h(\ell+1)}\binom{K}{h+s+1}|B_E|^{h+s+1}.
\]
This completes the proof.
\end{proof}

\begin{proof}[Proof of Theorem~\ref{thm:curve-explicit}]
Define $b_\varepsilon:=\frac{\varepsilon}{1+\log_2(1/\varepsilon)}$. Let $d_n:=\lfloor \delta n\rfloor,\,\gamma_n:=\frac{d_n-E}{d_n},\,B_n:=\left\lfloor \frac{E^+ +1}{E^+ - E +1}\right\rfloor$. Since $\delta>\rho$, we have $d_n>E$ for all sufficiently large $n$.

We first estimate the probability that $\cC$ has minimum distance less than $d_n$. By Lemma~\ref{lem:H_q-bound} and Lemma~\ref{lem:ball-volume-bound}, $|B_{d_n}| \le q^{(\delta+b_\varepsilon)n+o(n)}$. Therefore,
\[
\Pr_{\cC}[d(\cC)\le d_n]
\le |B_{d_n}|\,q^{-r}
\le q^{(\delta+b_\varepsilon-(1-R))n+o(n)}.
\]
As $\delta+b_\varepsilon-(1-R)=-(a_\varepsilon-b_\varepsilon)$, we obtain $\Pr_{\cC}[d(\cC)\le d_n]\le q^{-(a_\varepsilon-b_\varepsilon)n+o(n)}=q^{-\Omega(n)}$.

Next we estimate the bad-curve probability on the event $d(\cC)\ge d_n$. Let $K := \tau+1$. Since $B_n \le 1+\frac{\rho}{\varepsilon}$ and $\gamma_n^{-1}=\frac{d_n}{d_n-E}\le \frac{\delta}{\delta-\rho}$, we have $\ell\, B_n\,\gamma_n^{-\lambda}\le\ell\left(1+\frac{\rho}{\varepsilon}\right)\left(\frac{\delta}{\delta-\rho}\right)^\lambda\le\tau<K$.

If a degree-$\ell$ syndrome curve violates the $(\ell,E,E^+,\tau/q)$-correlated agreement property, then $\frac{1}{q} \left|\{\alpha\in\mathbb F_q:\mathbf{s}_0+\sum_{i=1}^{\ell}\alpha^i \mathbf{s}_i\in H_E\}\right|>\frac{\tau}{q}$. We bound the probability that there exists degree-$\ell$ syndrome curve $S\subseteq \F_q^r$ such that
\[
\left[\left|\{\alpha\in\mathbb F_q:S(\alpha)\in H_E\}\right|\ge K\right]\wedge \left[S\text{ has no }(H,\ell,E,E^+)\text{-correlated agreement}\right]\wedge \left[d(\cC)\ge d_n\right]
\]
Therefore, by Theorem~\ref{thm:curve-moment} with $d=d_n$ and $s=\lambda$, the probability is at most
\[
\binom{q}{K}q^{-r(\lambda+1)}\sum_{h=1}^{\ell+1}q^{h(\ell+1)}\binom{K}{h+\lambda+1}|B_E|^{h+\lambda+1}
\]

Since $\ell$, $q$, and $K$ depend only on $(R,\rho,\varepsilon,\ell)$, the two binomial factors contribute only $q^{O(1)}$. On the other hand, Lemma~\ref{lem:ball-volume-bound} gives $|B_E|\le q^{(\rho+b_\varepsilon)n+o(n)}$. Hence the exponent in the preceding probability bound is at most $(\ell+1)(1-R)n+(\ell+\lambda+2)(\rho+b_\varepsilon-(1-R))n+o(n)$. By the assumption on $\rho$, we have $1-R-\rho-b_\varepsilon>\varepsilon$. Since $\ell+\lambda+2\ge \frac{(\ell+1)(1-R)}{\varepsilon}$, it follows that $(\ell+\lambda+2)(1-R-\rho-b_\varepsilon)>(\ell+1)(1-R)$. Therefore $(\ell+1)(1-R)+(\ell+\lambda+2)(\rho+b_\varepsilon-(1-R))<0$ and the probability is at most $q^{-\Omega(n)}$.

Combining this with the minimum-distance bound, we conclude that with probability at least $1-q^{-\Omega(n)}$, every degree-$\ell$ syndrome curve $S$ with $\frac{1}{q} \left|\{\alpha\in\mathbb F_q:S(\alpha)\in H_E\}\right|>\frac{\tau}{q}$ admits a matrix $X=[\mathbf{x}_0|\cdots|\mathbf{x}_\ell]$ with $H\mathbf{x}_i=\mathbf{s}_i\quad(0\le i\le\ell)$ and $\mathrm{rowwt}(X)\le E^+$. The equivalent of correlated-agreement formulation follows from Lemma~\ref{lem:curve-ca-equivalence}.
\end{proof}

\begin{proof}[Proof of Theorem~\ref{thm:curve-one-radius}]
For $E^+=E$, the curve moment theorem gives the corresponding probability bound once $K+1>\ell(E+1)\gamma_n^{-\lambda},\, \gamma_n:=\frac{d_n-E}{d_n}.$

We first estimate the minimum-distance event. Since $q=\Theta(n)$ and $d_n=\Theta(n)$, Lemma~\ref{lem:ball-volume-bound} yields $|B_{d_n}|\le q^{d_n+o(n)}<q^{(1-R-\varepsilon)n+o(n)}$. Therefore, $\Pr_{\cC}[d(\cC)\le d_n]\le |B_{d_n}|\,q^{-r}\le q^{-\varepsilon n+o(n)}$.

Next we estimate the bad-curve probability on the event $d(\cC)\ge d_n$. We bound the probability that there exists degree-$\ell$ syndrome curve $S\subseteq \F_q^r$ such that
\[
\left[\left|\{\alpha\in\mathbb F_q:S(\alpha)\in H_E\}\right|\ge K+1\right]\wedge \left[S \text{ has no }(H,\ell,E,E^+)\text{-correlated agreement}\right]\wedge \left[d(\cC)\ge d_n\right].
\]
Applying the degree-$\ell$ curve moment theorem with $d=d_n$, $E^+=E$, $s=\lambda$, the probability is at most $\binom{q}{K+1}q^{-r(\lambda+1)}\sum_{h=1}^{\ell+1}q^{h(\ell+1)}\binom{K+1}{h+\lambda+1}|B_E|^{h+\lambda+1}.$

Because $q=\Theta(n)$ and $K=\Theta(n)$, we have $\log_q\binom{q}{K+1}=o(n)$ and $\log_q\binom{K+1}{\ell+\lambda+2}=o(n)$. Also, by Lemma~\ref{lem:ball-volume-bound}, $|B_E|\le q^{E+o(n)}\le q^{\rho n+o(n)}$. Hence the exponent in the preceding probability bound is at most $(\ell+1)(1-R)n+(\ell+\lambda+2)(\rho-(1-R))n+o(n)$. Therefore, the exponent is bounded above by $\bigl((\ell+1)(1-R)-(\ell+\lambda+2)\varepsilon\bigr)n+o(n)$. By the definition of $\lambda$, $\ell+\lambda+2=\left\lceil \frac{(\ell+1)(1-R)}{\varepsilon}\right\rceil+1\ge\frac{(\ell+1)(1-R)}{\varepsilon}$, and hence $(\ell+1)(1-R)-(\ell+\lambda+2)\varepsilon\le 0$. Thus, the probability is at most $q^{-\Omega(n)}$.

Combining this with the minimum-distance bound, we conclude that with probability at least $1-q^{-\Omega(n)}$, every degree-$\ell$ syndrome curve $S$ with $\frac{1}{q}\left|\{\alpha\in\mathbb F_q:S(\alpha)\in H_E\}\right|>\frac Kq$ admits a matrix $X=[x_0|\cdots|x_\ell]$ such that $H\mathbf{x}_i=\mathbf{s}_i$ for $(0\le i\le\ell)$ and $\mathrm{rowwt}(X)\le E.$ The equivalent of correlated-agreement formulation follows from the curve analogue of Lemma~\ref{lem:curve-ca-equivalence}.
\end{proof}

\end{document}